\def\draft{0}\def\sigconf{0}\def\big{0}\def\anon{0}\def\masterthesis{0}\def\cryptology{0}
\newcommand{\keygen}{\ensuremath{\mathsf{key\textit{-}gen}}}
\renewcommand{\sign}{\ensuremath{\mathsf{sign}}}
\newcommand{\mint}{\ensuremath{\mathsf{mint}}}
\tikzset{
    pil/.style={
        ->,
       thick,
       shorten <=2pt,
       shorten >=2pt
    }
}
\begin{document}

\title{Semi-Quantum Money
}

\ifnum\anon=0
    \ifnum\cryptology=1
        \author{Roy Radian}
        \affil{Computer Science Department, Ben-Gurion University, Israel\\
                royrad@post.bgu.ac.il}
        \author{Or Sattath}
        \affil{Computer Science Department, Ben-Gurion University, Israel\\
                sattath@post.bgu.ac.il}
    \else
        \ifnum\sigconf=0
            \author[1]{Roy Radian}
            \author[1]{Or Sattath}
            \affil[1]{Computer Science Department, Ben-Gurion University of the Negev}
        \else
            \author{Roy Radian}
            \author{Or Sattath}
    
            \affiliation{%
            \institution{Computer Science Department, Ben-Gurion University of the Negev}
            \country{Israel}}
        \fi
    \fi
\else
    \ifnum\sigconf=0
        \author{}
    \fi
\fi

\ifnum\sigconf=0
    \ifnum\masterthesis=0
        \maketitle
    \fi
\fi
\renewcommand\Authands{ and }

\ifnum\masterthesis=1
    \begin{titlepage}
        \centering
        { Ben-Gurion University of the Negev}
        
        {The Faculty of Natural Sciences}
        
        {\small The Department of Computer Science}
        
        \vspace{2cm}
        
        {\Large \bfseries Semi-Quantum Money}
        
        \vspace{2cm}
        
        {\small Thesis submitted in partial fulfillment of the requirements for the Master of Sciences degree}
        
        \vspace{1cm}
        
        {\bfseries Roy Radian}
        
        {Under the supervision of Dr. Or Sattath}
        
        \vspace{2cm}
        
        \today
    \end{titlepage}
    
    \begin{titlepage}
        \centering
        { Ben-Gurion University of the Negev}
        
        {The Faculty of Natural Sciences}
        
        {\small The Department of Computer Science}
        
        \vspace{2cm}
        
        {\Large \bfseries Semi-Quantum Money}
        
        \vspace{2cm}
        
        {\small Thesis submitted in partial fulfillment of the requirements for the Master of Sciences degree}
        
        \vspace{1cm}
        
        {\bfseries Roy Radian}
        
        {Under the supervision of Dr. Or Sattath}
        
        \vspace{1cm}
        
        {\small Signature of student: \longunderline Date: \longunderline}
        
        \vspace{0.5cm}
        
        {\small Signature of supervisor: \longunderline Date: \longunderline}
        
        \vspace{0.5cm}
        
        \begin{changemargin}{-1cm}{-1cm}
        \centering
            {\small Signature of the committee for graduate studies: \longunderline Date: \longunderline}
        \end{changemargin}
        \vspace{2cm}
        
        \today
    \end{titlepage}

    \pagenumbering{roman}
    \begin{center}
        {\large \bfseries Semi-Quantum Money}
        
        \vspace{0.5cm}
        
        {\bfseries Roy Radian}
        
        \vspace{0.5cm}
        
        Thesis submitted in partial fulfillment of the requirements for the Master of Sciences degree
        
        \vspace{0.25cm}
        
        {Ben-Gurion University of the Negev}
        
        \vspace{0.25cm}
        
        \today
        
        \vspace{2cm}
        
        {\bfseries Abstract}
        
        \vspace{0.5cm}
    \end{center}
\else
    \begin{abstract}
\fi

Quantum money allows a bank to mint quantum money states that can later be verified and cannot be forged. Usually, this requires a quantum communication infrastructure to transfer quantum states between the user and the bank. Gavinsky~\ifnum\cryptology=1 (CCC 2012) \else \cite{Gav12} \fi introduced the notion of classically verifiable quantum money, which allows verification through classical communication. In this work we introduce the notion of classical minting, and combine it with classical verification to introduce semi-quantum money. Semi-quantum money is the first type of quantum money to allow transactions with completely classical communication and an entirely classical bank. This work features constructions for both a public memory-dependent semi-quantum money scheme and a private memoryless semi-quantum money scheme. The public construction is based on the works of Zhandry~\ifnum\cryptology=0 \cite{Zha19} \fi and \ifnum\cryptology=0 Coladangelo~\cite{Col19}, \else Coladangelo, \fi and the private construction is based on the notion of Noisy Trapdoor Claw Free Functions (NTCF) introduced by Brakerski et al.~\ifnum\cryptology=1 (FOCS 2018). \else \cite{BCM+18}. \fi

In terms of technique, our main contribution is a perfect parallel repetition theorem for NTCF.

\ifnum\masterthesis=0
    \end{abstract}
\fi

\ifnum\sigconf=1
    \keywords{Quantum cryptography, Quantum Money, Trapdoor Claw Free Functions, Semi-Quantum Money}
    \settopmatter{printfolios=true}
    \acmYear{2019}\copyrightyear{2019}
    \setcopyright{acmlicensed}
    \acmConference[AFT '19]{AFT '19: Conference on Advances in Financial Technologies}{October 21--23, 2019}{Zurich, Switzerland}
    \acmPrice{15.00}
    \acmDOI{10.1145/3318041.3355462}
    \acmISBN{978-1-4503-6732-5/19/10}
    \maketitle
\fi

\ifnum\masterthesis=1
    \pagebreak
    \pagenumbering{arabic}
    This thesis is based on and significantly extends \cite{rs19b}.
    \subsection*{Acknowledgments}
    I would like to thank Dr. Or Sattath for his patient guidance, and for going the extra mile to see that I learn.

    \setcounter{tocdepth}{3}
    \tableofcontents
    
    \listoffigures
\fi

\ifnum\cryptology=1
    \paragraph{Keywords:} Quantum cryptography, Quantum Money, Trapdoor Claw Free Functions, Semi-Quantum Money
\fi

\section{Introduction} 
\label{sec:introduction}
Introduced by Wiesner circa 1969, quantum money was the precursor to what is now known as quantum cryptography~\cite{Wie83}. The motivation behind quantum money is to design money that is physically impossible to counterfeit, by using a variant of the (quantum) no-cloning theorem~\cite{WZ82,Par70,Die82}. This notion of quantum money is in sharp contrast to our current notions of bills and coins that, at least in principle, can be counterfeited.

All quantum money schemes consist of three parts: $\keygen$, which generates a key, $\mint$ which uses the key to issue a new quantum money state, and $\verify$ which tests whether an alleged money state is legitimate. There are two main categories of quantum money: private and public. In a private setting, the key is required to run the verification. On the other hand, in a public quantum money scheme, $\keygen$ generates a secret/public key-pair, where the secret key is used in $\mint$ and the public key is used in $\verify$. In this work we deal both with private and public schemes.

A variant of quantum money called classically verifiable quantum money was introduced in~\cite{Gav12} (see also~\cite{PYJ+12,MVW13,GK15,BS16,AA17,AGKZ20}): the money is verified via an interactive protocol between the user and the bank. This protocol requires a quantum computer for the user, a classical computer for the bank, and classical communication between them. In a classical verification, the banknote is always measured and destroyed. This destructive measurement makes sure it cannot be reused\footnote{This is not problematic; the bank would simply mint a new banknote and send it, through a quantum channel, to the receiver. The concept of non-reusable money is not a new one; in fact, an otherwise secure quantum money scheme could sometimes be broken if banknotes are reused (see \cite{BNSU14, Lut10, Aar09}).}.

In this work, we introduce a new variant of classically verifiable quantum money: semi-quantum money. In this setting, the minting also shares this property, i.e., it is a protocol that involves \emph{both} the bank \emph{and} the user, and requires only classical resources from the bank. In standard quantum money, in contrast, minting is a quantum algorithm run by the bank, which sends the output --- the quantum money state --- to the user, via a quantum channel.

In semi-quantum money, the money state is generated by the \emph{user}. This concept seems somewhat counterintuitive; if banknotes are generated by the user, could the user not create as many notes as he or she pleases? The key point of the minting process is the protocol between the user and the bank: the user is supposed to generate a superposition over two registers using information provided by the bank, measure one of the registers, and report the result back to the bank. If the user will try to repeat the same procedure, the measurement outcome --- as well as the post-measured state --- will be different with overwhelming probability. As far as the authors are aware, no prior work considered classical minting.

The fact that semi-quantum money is also classically verifiable means that instead of sending the quantum state to the bank for verification, the user and the bank run a classical interactive verification protocol that tests the validity of the money. Semi-quantum money got its name from the fact that the minting and verification protocols require only classical resources (communication and computation) from the bank.

This introduction of a quantum money scheme where the banks are classical perhaps raises the question whether the concept could be improved, such that the bank would be quantum and the user classical. However, such a setting is inherently flawed; if the user is classical, they could not hold their own money, meaning the bank would have to hold the state of every note of every user\footnote{We refer to such a scheme as "memory-dependent", and explore its consequences in \ifnum\sigconf=0 \cref{sec:advantage_of_statelessness}\else the full version~\cite{RS19}\fi.}. This makes the "quantumness" of the money redundant, since it would be permanently kept within the bank in any case. Thus, it would seem that the setting where the bank and communication is classical and the user is quantum is the "least quantum" a quantum money scheme could be.

In this work we introduce both a public construction and a private construction for semi-quantum money. The public construction is based on an existing public quantum money scheme which we combine with an existing tool that allows classical verification, so our public construction requires little technical work. Our private scheme, on the other hand, is based on NTCF --- a tool which was designed for a different purpose entirely. Its construction, therefore, entails a much greater technical challenge. For that reason we address the public result first

\paragraph{Assumptions.} Our results assume authenticated and noiseless classical channels (which could be realized using standard classical error-correction and authentication techniques), along with perfect quantum devices (quantum memory, quantum computer and quantum communication channels). Of course, such quantum devices are not currently available, and are not predicted to be available in the short term (especially because of the long term quantum memory inherently required for quantum money).

\paragraph{Prior Knowledge.} Before we go any further, we discuss the accessibility of this work. The reader is assumed to have a basic understanding of classical cryptography, and we follow the definitions and conventions of \cite{Gol04} and \cite{KL14}. This work is aimed at readers who are familiar with quantum computing, but is also accessible to other readers. For further reading, consult \cite{NC11} for general quantum computing, and \cite{BS16b} for quantum cryptography. The two major "quantum" facts that are crucial to understand for this paper are the following: (i) A qubit is the quantum analog of a bit. Unlike bits, qubits cannot be copied due to the no-cloning theorem. (ii) To extract classical information from qubits, a measurement has to be preformed. The measurement changes the quantum state. Crucially, this process is not reversible. This is in contrast to classical systems, where rewinding is possible.

\paragraph{Public semi-quantum money.}
In a public quantum money scheme, unlike in a private scheme, any user can verify a banknote using the bank's public key without aid from the bank. There are several advantages for a public scheme: it does not require three-party quantum communication between the bank, the sender, and the receiver. The only requirement is a quantum channel between the sender and the receiver\footnote{In our public semi-quantum money construction, the classical verification can only be done with the bank; users still require quantum communication to transfer banknotes without the bank. We leave it as an open question whether public semi-quantum money with \textit{entirely} classical communication can be made. A recent work of Amos et al. \cite{AGKZ20} exhibits completely classical communication between users, but in the oracle model.}. Public schemes have a major advantage over private schemes also in terms of privacy: since the bank is not involved in the transactions, the bank cannot track all transactions of the note. However, it is much harder to construct a secure public scheme --- see the related works paragraph below.

We construct a public semi-quantum money scheme based on Zhandry's quantum lightning (\cite{Zha19}), and the notion of bolt-to-certificate introduced in Coladangelo's follow-up work (\cite{Col19}). Our classical verification based on \cite{Col19} is memory-dependent, meaning the bank has to keep a database of spent notes. We leave it as an open question whether a \emph{memoryless} public semi-quantum money exists (we compare memory-dependent vs. memoryless quantum money in \ifnum\sigconf=0 \cref{sec:advantage_of_statelessness}\else the full version~\cite{RS19}\fi).

Our main public result is:

\begin{restatable}[Public Semi-Quantum Money]{theorem}{publicresult}
    \label{thm:main_public_theorem}
    Assuming the existence of a secure Quantum Lightning scheme (\cref{def:quantum_lightning,def:ql_security}) with bolt-to-certificate capability (\cref{def:bolt_to_certificate}), and the existence of a PQ-EU-CMA digital signature scheme (\cref{def:digital_signature,def:unforgeability_of_digital_signature}), then a secure memory-dependent (\cref{def:memory_dependent}) public semi-quantum money scheme exists (\cref{def:public_semi_quantum}).
\end{restatable}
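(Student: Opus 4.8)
The plan is to carry the unclonable content of a banknote in a quantum lightning bolt (in Zhandry's sense, with Coladangelo's bolt-to-certificate upgrade), and to let the bank --- which stays entirely classical --- do nothing but sign serial numbers and remember which ones have been spent. In $\keygen$ the bank samples a PQ-EU-CMA signature key pair $(\mathsf{sk},\mathsf{vk})$, publishes $\mathsf{vk}$ together with the public parameters of the lightning scheme, and initializes an empty database $\dset$ of spent serials. In the minting protocol the user runs $\genbolt$ locally to obtain a bolt $\ket{\psi}$ with serial number $s$, sends the classical string $s$, receives $\sigma\leftarrow\sign_{\mathsf{sk}}(s)$ from the bank, and keeps the banknote $(\ket{\psi},s,\sigma)$. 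Public (quantum, non-interactive) verification checks that $\sigma$ is valid on $s$ under $\mathsf{vk}$ and that $\verbolt$ accepts $\ket{\psi}$ against $s$ --- this leaves a genuine bolt intact since bolt verification is projective. Classical (interactive, memory-dependent) verification has the user send $(s,\sigma)$, the bank reject unless $\sigma$ is valid on $s$ and $s\notin\dset$, the user then send a certificate $c\leftarrow\gencert(\ket{\psi})$, and the bank accept iff $\vercert$ accepts $c$ against $s$, adding $s$ to $\dset$ on acceptance. Every bank action here is a signature operation or a database operation and every message it handles is classical, so the scheme is a legitimate (memory-dependent) public semi-quantum money scheme by construction; the remaining work is to prove it correct and unforgeable.

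Correctness is routine: an honestly minted triple passes public verification by correctness of the lightning scheme and of the signature scheme, and passes its first classical verification by correctness of bolt-to-certificate; a repeated classical verification of the same note is harmlessly rejected because its serial is already in $\dset$ (and the note has in any case been consumed).

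For unforgeability I would argue by reduction. Suppose a counterfeiter $\fulladv$, after $q$ successful minting protocols (and arbitrarily many classical-verification interactions), produces $q+1$ banknotes that all pass verification; let $S$ be the set of serials the bank signed during the minting protocols, so $|S|\le q$. Consider three exhaustive cases. (i) Two of the forged notes carry the same serial $s$ and both pass public verification: then $\fulladv$ holds two states both accepted by $\verbolt$ against $s$, and running $\fulladv$ with the bank simulated internally (the reduction knows $\mathsf{sk}$, hence answers all minting and classical-verification queries itself) yields a lightning forger $\boltforger$, contradicting lightning security. (ii) Some forged note was accepted by the classical verification protocol for a serial $s$ --- so a valid certificate for $s$ exists --- yet $\fulladv$ also outputs a state accepted by $\verbolt$ against that same $s$; this is the attack that revives an already-spent note, which the database alone does not rule out, and the certificate together with that residual bolt is a bolt-to-certificate break $\certforger$. (iii) Neither: since the database forces the serials passing classical verification to be distinct, and case (i) excludes publicly-valid collisions while case (ii) excludes a classical-valid serial reappearing as a bolt, all $q+1$ forged serials are distinct, so one of them, $s^\ast$, lies outside $S$ while carrying a valid signature $\sigma^\ast$; running $\fulladv$ with the bank implemented via a signing oracle gives a forger $\forger$ that outputs $(s^\ast,\sigma^\ast)$, contradicting PQ-EU-CMA. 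Summing the three negligible terms bounds $\fulladv$'s advantage.

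Two loose ends remain. I would also check that memory-dependence is "benign": an adversary cannot sabotage an honest user by pre-inserting that user's future serial into $\dset$, because by lightning security independently generated bolts have colliding serials with only negligible probability, so a fresh honest serial avoids $\dset$ except negligibly often --- this gives the completeness/robustness half of the definition in the memory-dependent model. And the reductions must honestly track the interleaving of minting and classical-verification queries so that "$q$ mints" genuinely bounds $|S|$ by $q$; since the bank's side of every protocol is classical, this needs no rewinding and is pure accounting. The step I expect to be the main obstacle is case (ii), and more broadly nailing down the security definition so that the interaction between the destructive, stateful classical verification and the non-destructive public verification is modeled faithfully: this is precisely where bolt-to-certificate security, rather than lightning security alone, is indispensable, and a careless definition would make the theorem either false or vacuous.
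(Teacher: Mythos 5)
Your construction is essentially identical to the paper's $\$_P$ (Algorithm~\ref{alg:public_scheme}) --- bank signs serial numbers, user generates bolts locally, public verification checks signature plus $\verbolt$, classical verification checks signature plus $\vercert$ plus database membership --- and your security argument uses the same three reductions (to signature unforgeability, to quantum-lightning security, and to bolt-to-certificate security). The only differences are cosmetic: you order the cases so that the signature-forgery argument comes last rather than first, and you split $\cverify$ into two rounds where the paper sends $(s,\sigma,c)$ in one message; your closing remark that a fresh serial collides with $D$ only negligibly often is exactly the observation the paper uses to establish correctness.
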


Quantum lightning (\cite{Zha19}) is a type of public quantum money such that each quantum banknote (called a \emph{bolt}) is unique: a "lightning bolt" is a quantum state, and has a serial number that is a classical string. It is hard for everyone, \emph{including the bank}, to construct two valid bolts with the same serial number.
This can be thought of as if someone would "freeze" and "capture" lightning bolts in a thunderstorm that have the same fingerprint (in this case, the serial number).

The notion of bolt-to-certificate was introduced in \cite{Col19}, and it describes a process of turning a quantum lightning bolt into a classical \emph{certificate}. The certificate proves a lightning bolt with a certain serial number has been destroyed (the serial number is classical and thus survives the destruction of the bolt itself).

In this work, we use quantum lightning with bolt-to-certificate capability to construct a public semi-quantum scheme. In public semi-quantum money we want to allow any user to verify banknotes \emph{without} destroying them, and classical verification with the bank (communication between two users is still quantum). Therefore, we facilitate a quantum verification algorithm to be used by the quantum users (that would preserve the banknote), and a classical verification protocol to be used with the classical bank (that would destroy the banknote). The quantum verification is derived directly from quantum lightning verification (since quantum lightning is already a public money scheme), and the classical verification is derived from the bolt-to-certificate capability --- the bolt is exchanged for a classical certificate that is shown to the bank to prove the note has been spent. Moreover, we introduce a slight alteration to the quantum lightning scheme such that the banknotes become user-generated; instead of the bank producing the bolt and sending it to the user, the user would generate the bolt and the bank would sign its (classical) serial number. The resulting scheme requires quantum communication between the quantum users and classical communication with the classical bank (banknotes could still be passed between users with only classical communication by going through a bank, in the same manner of a private scheme).

The security notion is as follows: no $w>\ell$ verifications can be made using $\ell$ bolts. In our case this means that, besides the fact that the same bolt cannot be used for two quantum or two classical verifications, the same bolt cannot be used for both a quantum and a classical verification (while the quantum verification does not destroy the note, it passes it to the other user). Moreover, there is an additional security concern to be considered; the idea of \emph{sabotage} (introduced in the context of quantum lightning in~\cite{CS20}, and in the context of quantum money in~\cite{BS16}). This notion captures the possibility of paying a user with a "sabotaged" note such that it is accepted by the receiving user (i.e., it passes the quantum verification) but will not be later accepted by another user, or by the bank (i.e., will not pass the classical verification with the bank). A quantum lightning scheme that is secure against sabotage is enough to ensure such security for our construction, and such a security proof was made in \cite{CS20}.

It should be stated that the current candidate constructions for quantum lightning are problematic. There are currently three candidate constructions: two suggested in the original work (\cite{Zha19}) and one that was published roughly 7 years earlier in \cite{FGHLS12} but seems to be compatible with quantum lightning (the connection is made in \cite{CS20}). One of the constructions in \cite{Zha19} is based on "collision resistant non-collapsing hash functions", which currently do not have a candidate construction. The other has been successfully attacked by Roberts \cite{Rob19}, though he ended on a positive note with some ideas for modification to thwart the attack. The construction in \cite{FGHLS12} seems to be a valid quantum lightning construction (it was introduced 7 years earlier and is yet unbroken), and can be used to construct public quantum money with classical minting, but does not feature bolt-to-certificate capability which is necessary for classical verifiability, thus being unusable for a semi-quantum scheme. This means that our scheme can be based on either of Zhandry's original constructions. Moreover, only one of which (the one that is broken in some respects) was proven to be secure against sabotage in~\cite{CS20} under the same assumption. Therefore, our construction for public semi-quantum money is on shaky ground. Nevertheless, even if the existing constructions of quantum lightning do not provide strong security guarantees, we believe the notion of a quantum lightning scheme, as well as that of bolt-to-certificate, to be plausible. In fact, a new candidate, with hardness based on lattices, was recently announced in an unpublished work by Peter Shor (see \url{https://youtu.be/8fzLByTn8Xk}).

\paragraph{Private semi-quantum money.}
In a private scheme, a banknote can be verified only with a bank. A private semi-quantum money scheme requires only classical communication with a classical bank.

Our main private result is: 

\begin{restatable}[Private Semi-Quantum Money]{theorem}{privateresult}
    \label{thm:main_private_theorem}
    Assuming that the Learning With Errors (LWE) problem with certain sets of parameters is hard for $\BQP$, then a secure private semi-quantum money scheme exists (\cref{def:semi_classical}). 
\end{restatable}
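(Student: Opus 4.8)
The plan is to instantiate the private semi-quantum money scheme directly from Noisy Trapdoor Claw-Free Functions (NTCF), whose existence follows from the hardness of LWE with the parameters of Brakerski et al. The minting protocol will have the bank sample an NTCF key $k$ together with its trapdoor $t_k$, send $k$ to the user, and have the user prepare the uniform superposition over the domain, evaluate the function into an image register, and measure that register to obtain a value $y$; the user reports $y$ to the bank, and the banknote is the residual state — a superposition over the (at most two) preimages $x_0, x_1$ of $y$. Verification will be an interactive protocol in which the bank, using the trapdoor, challenges the user to produce either a preimage (a "standard-basis" test) or an equation linear in the bit string $x_0 \oplus x_1$ obtained by a Hadamard-basis measurement (the "Hadamard" test); accept iff the answer is consistent with what the trapdoor reveals about $y$. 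This is exactly the Brakerski et al. verification structure, so completeness and the single-copy soundness of one verification round are inherited; the banknote must be taggable with its serial number $y$ and the bank must store $t_k$ (or re-derive it), which is permitted since the scheme is private.

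The security requirement is unforgeability: an adversary given the transcript of one minting protocol (hence one key $k$ and one serial number $y$) together with the resulting banknote cannot pass two independent verification sessions. First I would reduce this to a "two-to-one" style game: a forger who makes two accepting banknotes of serial number $y$ yields, with noticeable probability, a state on two registers each of which passes a random Brakerski-et-al.-style test. The key step is then to show that simultaneously passing two such tests — in particular, passing the standard-basis test on one register and the Hadamard test on the other — lets one extract both a full preimage $x_b$ of $y$ and a nonzero vector $d$ with $d \cdot (x_0 \oplus x_1) = 0$, from which one recovers the claw $(x_0, x_1)$ and breaks the claw-free property of the NTCF, contradicting LWE. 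To amplify the soundness gap of a single round to negligible, the verification will be run with $n = \mathrm{poly}(\secpar)$ parallel repetitions, and here I would invoke the paper's own strong parallel repetition theorem for NTCF (advertised in the abstract as the main technical contribution) to argue that the probability of passing all $n$ rounds of two independent sessions drops exponentially rather than merely polynomially.

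The main obstacle is precisely this parallel repetition / rewinding-free analysis. In the classical-verifier, quantum-prover setting one cannot rewind the prover's state between rounds, and the naive union bound over rounds only gives a constant, not negligible, soundness error; moreover an adversary holding one banknote could try to correlate its answers across the two verification sessions. So the heart of the proof is establishing that the NTCF "test" behaves well under parallel composition — i.e., that the optimal success probability of a cheating prover on $n$ parallel instances is exponentially small — and then combining this with the extraction argument above so that a successful double-spend is converted into a claw-finder. I expect the bookkeeping of "which banknote, which register, which of the two basis tests" and the careful handling of the residual post-measurement states to be routine once the parallel repetition theorem is in hand; the theorem itself is the crux. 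Everything else — completeness, the reduction of double-spending to the claw problem, and the final reduction of claw-finding to LWE — follows standard NTCF machinery from~\cite{brakerski18certifiable}.
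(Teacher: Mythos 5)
Your high-level construction is indeed the paper's (user measures the NTCF image register to obtain a serial number, the bank challenges with a preimage test or a Hadamard test), and you correctly identify parallel repetition as the crux. However, your security reduction is wrong: you claim that passing both tests yields a preimage $x_b$ together with a nonzero $d$ satisfying $d\cdot(x_0\oplus x_1)=0$, ``from which one recovers the claw $(x_0,x_1)$ and breaks the claw-free property.'' A single preimage and one linear equation over $\{0,1\}^w$ do \emph{not} determine the other preimage ($2^{w-1}$ candidates remain), and ordinary claw-freeness is not known to suffice here. The paper instead reduces a 2-of-2 solver directly to a violation of the \emph{adaptive hardcore bit} property of the NTCF (Definition~\ref{def:trapdoorclawfree}, item 4(b)): producing the tuple $(b,x_b,d,d\cdot(J(x_0)\oplus J(x_1)))$ is itself assumed hard, without any claw ever being extracted. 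Your argument as stated does not close.

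You also only argue a single-banknote (mini-scheme) guarantee, and you assert that the bank ``must store $t_k$, which is permitted since the scheme is private.'' That is precisely what is \emph{not} permitted: Definition~\ref{def:semi_classical} requires the bank to be memoryless. To obtain a full scheme with a fixed-size key and no per-banknote state, the paper (a) has the bank MAC-sign the user's reported obligations at mint time so it need not remember them, and (b) encrypts each per-banknote mini-scheme key under a symmetric encryption scheme and attaches the authenticated ciphertext to the banknote, decrypting at verification; the PQ-EU-CMA MAC and PQ-IND-CPA encryption are additional LWE-based ingredients your proposal omits. There is also an intermediate lift from a 2-of-2 mini-scheme (exactly two verification queries) to a mini-scheme (polynomially many verification queries), which your double-spend argument does not cover. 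Finally, note that the paper does not prove a parallel repetition theorem ``for NTCF'' from scratch: it abstracts the single-NTCF game as a 1-of-2 puzzle, observes that the induced 2-of-2 solving game is a Canetti--Halevi--Steiner weakly verifiable puzzle, and invokes their theorem, arguing its black-box, rewinding-free reduction is quantum-sound. That intermediate abstraction, not a bespoke NTCF analysis, is what makes the amplification go through.
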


Our assumptions are stated in each theorem separately, and they all boil down to our LWE assumptions. There are a number of variants of the LWE problem, each one with different security parameters. We rely on different constructions from LWE which use different variants --- but both assumptions are LWE assumptions. Interestingly, while private quantum money schemes can be secure without computational assumptions, they are required for semi-quantum money. In fact, there can be no information-theoretic secure quantum money scheme with classical minting (see \cref{sec:computational_assumptions_are_necessary}).

The main technical tool through which to implement this scheme is the quantum secure trapdoor claw-free function \ifnum\sigconf=0 (TCF --- \cref{sec:definition_NTCF}) \fi recently introduced in~\cite{BCM+18} (see also ~\cite{Mah18a, Mah18b, GV19}). Informally, a quantum secure TCF is a family of functions, where each function  $f:\{0,1\}^{w} \rightarrow \{0,1\}^{w}$ in the family (a) is classically efficiently computable, (b) is 2-to-1, i.e., for every $x$ there exists a unique $x'\neq x$ such that $f(x)=f(x')$, and (c) has a trapdoor that, given $y$, can be used to find $x$ and $x'$ such that $f(x)=f(x')=y$ (when $y$ is in the image of $f$), but without the trapdoor a quantum polynomial adversary cannot find any pair $x,\ x'$ such that $f(x)=f(x')$.

In addition, we will require the adaptive hardcore bit property of a TCF that was introduced in~\cite{BCM+18}, which is explained below. Using a quantum computer, the state $\frac{1}{\sqrt 2}(\ket{x}+\ket{x'})$, where $x$ and $x'$ are two pre-images of $y$, could be measured, and one pre-image of $y$ could be found. Moreover, by measuring the state in the Hadamard basis, a non-zero string $d$ that satisfies
$d \cdot (x \oplus x')=0$ could be extracted:
 \begin{align}
 \begin{split}
 H^{\tensor w}\frac{1}{\sqrt 2}(\ket{x}+\ket{x'})&=\frac{1}{\sqrt {2^{w+1}}}\sum_{d\in\{0,1\}^{w}} (-1)^{d\cdot x}+(-1)^{d \cdot x'}\ket{d}\\
&=\frac{1}{\sqrt {2^{w-1}}}\sum_{d\in\{0,1\}^{w}|d\cdot(x\oplus x')=0} (-1)^{d \cdot x}\ket{d}\;.
\label{eq:1}
\end{split}
\end{align}
In our construction we use the following two tests: the pre-image test (providing a pre-image of $y$) and the equation test (providing a non-zero $d$ that satisfies the above condition). The adaptive hardcore bit property guarantees that, even though either test can be easily passed on its own, any quantum polynomial time (QPT) adversary can successfully pass \emph{both} tests with probability at most $\frac{1}{2} + \negl$. Brakerski et al. used these tests to construct a cryptographic test of quantumness (CTQ); our construction can be seen as a reinterpretation of this protocol in the quantum money setting, where the first part of the protocol can serve as the creation of a quantum money state, and the second part can serve as the its verification. The transition to quantum money introduces some challenges; mainly the need for a parallel repetition theorem for our NTCF-based primitive, and proving full-scheme security.  Brakerski et al. showed a construction of a \emph{noisy} trapdoor claw-free function (\nom{NTCF}{Noisy Trapdoor Claw-free Function}{NTCF}) that holds this adaptive hardcore property, based on the hardness of the Learning With Errors (\nom{LWE}{Learning With Errors}{LWE}) problem \cite{BCM+18}. For the sake of clarity, we ignore the issues related to the noisy property in this introduction.

A TCF on its own, however, is not hard enough to construct a money scheme with; we do not want adversaries to be able to forge banknotes with probability $\frac{1}{2}$. To that end, we would like to amplify the hardness using some sort of a parallel repetition theorem (see \cref{sec:parallel_repetition_theorem_for_1_of_2_puzzles}). Luckily, we can rephrase this setting using the framework of \emph{weakly verifiable puzzles} for which a perfect parallel repetition theorem is known~\cite{CHS05}. This perfect parallel repetition guarantees that answering both tests for $n$ puzzles correctly is as hard as trying to answer them independently, i.e., at most $\left(\frac{1}{2}\right)^{n}$ (up to negligible corrections), which is exactly our goal.

Next, we present the outline and analysis of our semi-quantum private money scheme construction. The security notion of our money scheme is rather straightforward: an adversary that receives $\ell$ banknotes, and can attempt to pass verification (polynomially) many times, cannot pass more than $\ell$ verifications. To show a construction that meets this notion, we work our way through several weaker security notions; this makes proving the security of our full scheme construction simpler.
We first show how to construct a semi-quantum money scheme (\cref{sec:strong_1-of-2_puzzles_imply_money}) that provides a weaker level of security than a full scheme. Here, we wish to show that a counterfeiter that receives $1$ quantum money state cannot create two states that will both pass verification with non-negligible probability. We call a scheme that satisfies this weaker notion of security a 2-of-2 mini-scheme --- see \cref{def:QM_2-of-2_mini_scheme}.

We now describe the construction of a 2-of-2 mini-scheme, starting with the (honest) minting protocol. The bank picks $n$ functions $f_1,\ldots,f_n$ uniformly at random from the TCF family and sends them to the user, while keeping the trapdoors $t_1,\ldots,t_n$ private. The user creates a superposition of the form $\ket{\psi_1}\tensor \ldots \tensor \ket{\psi_n}$, where $\ket{\psi_i}=\frac{1}{\sqrt {2^{w}}}\sum_{x\in \{0,1\}^{w}} \ket{x} \tensor \ket{f_i(x)}$. The user measures all the r.h.s. registers (i.e., $\ket{f_i(x)} \; \forall 1 \leq i \leq n$) and sends the resulting $y_1,\ldots,y_n$ to the bank, who saves them to its database\footnote{We deviate here slightly from the formal definitions; Since the bank does not have a ``database'', verification should only use the key. This is handled by using a message authentication code (MAC) and by returning to the user a tag for these values, and then verifying that tag during the verification. For the sake of clarity, we omit this part in the discussion --- refer to Algorithm~\ref{alg:mini_scheme_construction} to see how we work around this issue.}. Note that due to the measurement, the $i^{th}$ state collapses to $\ket{\psi_{y_i}}=\frac{1}{\sqrt{2}}(\ket{x_i}+\ket{x_i'})$, where $f_i(x_i)=f_i(x_i')=y_i$.

For verification, the bank chooses a random challenge $C_{i}\in_{R} \{0,1\}$ (which is either the pre-image or the equation challenge) for each of the $n$ registers. For the pre-image challenge, $C_{i}=0$, the user must provide a string $x_{i}$ such that $f_i(x_{i})=y_{i}$. The honest user can measure $\ket{\psi_{y_{i}}}$ to find a pre-image of $y_{i}$ to pass this test with certainty. 
In the equation challenge, $C_{i}=1$, the user must provide a non-zero string $d_{i}\in \{0,1\}^{w}$ such that $d_{i}\cdot (x_{i}\oplus x_{i}')=0$. The bank can test whether the equation challenge holds by using the trapdoor $t_i$ to calculate both $x_i$ and $x_i'$. An honest user can generate such a string by measuring $\ket{\psi_{y_{i}}}$ in the Hadamard basis, as described in Eq.~\eqref{eq:1}. The measured $d_i$ will be non-zero (except with probability exponentially small in $w$) which will allow the user to pass this test.

We emphasize that for both the minting and the verification protocols, the bank only needs a classical computer.

We now outline the security argument. Suppose the user tries to pass verification twice. Denote by $C\in\{0,1\}^n$ the challenge vector in the first attempt, denote by $C'$ the challenge vector in the second attempt, and denote by $S$ the set of coordinates in which they differ: $S=\{i\in[n]|C_i\neq C_i' \}$. With overwhelming probability, $S \neq \varnothing$, in which case for at least one coordinate the user will have to pass both challenges, and cannot succeed except with negligible probability.


The construction above is a semi-quantum 2-of-2 mini-scheme (rather than a full blown scheme). There is a slightly stronger notion of security (that is still weaker than a full blown scheme) called a mini-scheme (adapted from Aaronson and Christiano~\cite{AC13}). In a mini-scheme, the counterfeiter is given a single quantum money state and can attempt to pass verification polynomially many times. The counterfeiter succeeds if at least two of these verifications are accepted. We show in \cref{sec:construction_of_mini_scheme} that the scheme above also achieves this stronger notion.

In a full quantum money scheme, the adversary can ask for $t$ money states and must pass at least $t+1$ verifications.
Aaronson and Christiano~\cite{AC13} defined the notion of a \emph{public} money mini-scheme and showed how such a mini-scheme can be lifted to a full-blown scheme. Ben-David and Sattath~\cite{BS16} showed a similar result that lifts a \emph{private} money mini-scheme to a full-blown scheme. In this work, we show how to lift an \emph{interactive} private money mini-scheme to a full-blown scheme. The goal of such a mapping is to ensure that the scheme can support the issuance of multiple money states without increasing the key-size. This is done by using an authenticated encryption scheme for the mini-scheme keys and including that authenticated ciphertext as part of the money. As part of the verification, the bank can later decrypt the mini-scheme key, and use it to run the original mini-scheme verification. It is important that the encryption scheme be authenticated to prevent the adversary from altering that information (which would be possible if, for example, the encryption scheme was malleable).

\paragraph{Related works.}
The security of private quantum money schemes is generally solid,~\cite{Wie83,MVW13,PYJ+12,YTN03,MS10,Gav12,GK15,JLS18}. Secure public quantum money is much harder to construct. The constructions of Aaronson~\cite{Aar09} was broken in ~\cite{LAF+10}, and the construction of Aaronson-Christiano~\cite{AC13} was broken using several approaches --- see the most recent attack in Ref.~\cite{PDF+19} and references therein. 
In Ref.~\cite{BS16}, a construction based on quantum-secure indistinguishability obfuscation (IO) was presented, as well as a mechanism to provide classical verifiability. 
Zhandry later proved that the quantum money is indeed secure~\cite{Zha19}, though Zhandry's proof does not lend itself to the classical verifiability construction by Ben-David and Sattath. The only two constructions that are not known to be broken are by Farhi et al.~\cite{FGHLS12} (see also~\cite{Lut11}), which does not have a security proof, and three constructions by Zhandry~\cite{Zha19}: The two quantum lightning constructions discussed above, and another one which proves the security of Aaronson-Christiano based on quantum-secure IO. We note that the (classical) security of indistinguishability obfuscation is still ongoing research (see, \url{https://malb.io/are-graded-encoding-schemes-broken-yet.html} for a list of constructions and their security status). As far as the authors are aware, no IO construction claims to be quantum-secure, which is required for Zhandry's scheme. To conclude, the security of public quantum money leaves much to be desired. 

The work of~\cite{HS20} is somewhat relevant to what we do here, but from a device-independent point of view of private quantum money. There, the bank provides a secret key to an untrusted mint, and the mint produces the quantum banknote itself. An assumption is made on the mint regarding the dimension of the state that it outputs. While \cite{HS20} do not use the definition of a mini-scheme as we do here (\cref{def:QM_mini_scheme}), their work features a private quantum money construction with mini-scheme security while not proving full scheme security (\cref{def:security_private_interactive_money}). Their scheme is also classically verifiable. To conclude, the main advantage of their scheme is that it is unconditionally secure, while the main disadvantages are that the scheme does not provide classical minting, there is an additional assumption on the dimension of the output register, and there is no security proof as a full quantum money scheme.

A very recent work of Amos et al. \cite{AGKZ20} introduced a public quantum money with completely classical communication. Their construction allows users to transfer money using classical communication only, requiring no interaction with a bank. However, the security is proved only with respect to an oracle.

A recent work of \cite{ACGH19} also shows a parallel repetition theorem, though their result is slightly different, and their proof techniques are completely different.


\paragraph{Modes of Operation.}
To fully grasp the implications of semi-quantum money, it is important to review how quantum money in its different flavors can be used, and the communication requirements (classical or quantum) of each mode of operation. In \cref{sec:modes_of_operation} we provide a detailed overview of the different flavors of quantum money and their respective communication requirements, showing the advantage of semi-quantum money: transactions with purely classical communication.

\paragraph{Our contribution.}
Our contribution is twofold. The first is semi-quantum money, both private and public: new models of quantum money that allow transactions with no quantum communication whatsoever (see \cref{fig:semi_quantum_transaction_direct,fig:semi_quantum_transaction_bank}), and only a classical bank.
The main advantage of the new schemes compared to previous quantum money schemes is that they could be used without a quantum communication infrastructure. Classical communication has several interesting benefits over quantum communication. The most obvious one is that a classical communication infrastructure already exists; a semi-quantum money scheme --- unlike previous money schemes --- will not require a quantum communication infrastructure.
Implementing such an infrastructure on a global scale will be expensive and challenging, and might be realized years after efficient quantum computers are commonly used.
There are other benefits to classical communication, even if quantum communication infrastructure was readily available. First, due to the no-cloning theorem, quantum information cannot be re-sent. In the context of quantum money, data-loss is extremely problematic --- data loss means lost money. Quantum communication will naturally suffer more data-loss, at least initially. Second, for classical communication we can keep a record (and even a signed record) which helps with matters of dispute resolution, auditing and error-handling, whereas quantum communication cannot be logged. The same argument can be made for the banks themselves; classical banks could more easily keep records and be audited.

The second contribution is the parallel repetition theorem for 1-of-2 puzzles (described earlier in the introduction). Parallel repetition (the idea of repeating a protocol polynomially many times in parallel to gain an exponential increase in soundness) seems deceptively simple, while in reality, it sometimes behaves in unexpected ways, and such proofs are usually challenging (see~\cite{Raz11} and references therein for the non-cryptographic setting); \cite{BIN97} present several cases where parallel repetition surprisingly does not grant an exponential reduction in error rate in cryptographic-settings. The parallel repetition theorem for 1-of-2 puzzles could be useful in other cryptographic settings, as it builds on the TCF primitive to introduce a tool with an exponentially small error rate (rather than the constant error rate which is guaranteed in the original work).

\paragraph{Organization.}
\ifnum\sigconf=0
\cref{sec:modes_of_operation} provides an overview on the different ways in which different flavors of quantum money can be used, along with the communication requirements of each mode of operation.

In \cref{sec:public_semi_quantum_money} we deal with our proposed public semi-quantum money. \cref{sec:public_money_definitions} contain the relevant definitions, and in \cref{sec:public_money_construction} we construct the public scheme and prove its security, proving our main public theorem, namely, \cref{thm:main_public_theorem}.
\fi

In \cref{sec:ntcf_and_1-of-2_puzzles}, we deal with NTCF and 1-of-2 puzzles. In \cref{sec:1_of_2_puzzles}, we define a 1-of-2 puzzle. In \cref{sec:NTCF_implies_1_of_2_puzzle}, we show a construction of a $\frac{1}{2}$-hard 1-of-2 puzzle based on an NTCF. We conclude \cref{sec:ntcf_and_1-of-2_puzzles} by showing, in \cref{sec:parallel_repetition_theorem_for_1_of_2_puzzles}, a method for constructing a strong 1-of-2 puzzle using repetition of weak 1-of-2 puzzles.

In \cref{sec:strong_1-of-2_puzzles_imply_money}, we deal with our proposed private semi-quantum money. \cref{sec:private_money_definitions} contains the relevant definitions, which are adaptations of the definitions from \cref{sec:public_money_definitions} to the private setting. In \cref{sec:construction_of_mini_scheme}, we construct a semi-quantum money mini-scheme and prove its security. In \cref{sec:mini_scheme_implies_full_scheme}, we present a full semi-quantum money scheme construction based on any semi-quantum mini-scheme, and prove its security.

In \cref{sec:main_thm_proof} we combine the results of \cref{sec:ntcf_and_1-of-2_puzzles,sec:strong_1-of-2_puzzles_imply_money} to prove our main private result, namely, \cref{thm:main_private_theorem}. A structural overview of our private semi-quantum money result is shown in Fig.~\ref{fig:structure}.

In \cref{sec:computational_assumptions_are_necessary} we show that a quantum money scheme with classical minting (and therefore any semi-quantum money scheme) cannot be information-theoretically secure (i.e., it must rely on computational assumptions).

\ifnum\sigconf=1 Some of the proofs, and preliminary standard definitions are deferred to the full version~\cite{RS19}.
\else
\cref{sec:nomenclature} is a nomenclature, a "cheat sheet" describing some of our notations. \cref{sec:preliminaries} contains mainly the standard definitions of private-key encryption and message authentication code (MAC), and can be safely skipped by readers who are familiar with these notions. \cref{sec:ql_with_bolt_to_cert}, taken almost verbatim from \cite{CS20}, comprises the definitions of quantum lightning with bolt-to-certificate. \cref{sec:definition_NTCF}, taken almost verbatim from Brakerski et al. \cite{BCM+18}, comprises a definition of NTCF. \cref{sec:advantage_of_statelessness} comprises a discussion of memoryless vs memory-dependent schemes. \cref{sec:parallel_repetition_weakly_verifiable_puzzles} is a simplified overview of the parallel repetition of weakly verifiable puzzles result from \cite{CHS05} which is used for our parallel repetition of 1-of-2 puzzles result. \cref{sec:transaction_figs} comprises a series of figures illustrating the various ways transactions can be performed with different flavors of quantum money schemes (regular, classically-verifiable and semi-quantum) along with the types of communication each method and flavor requires.

\fi

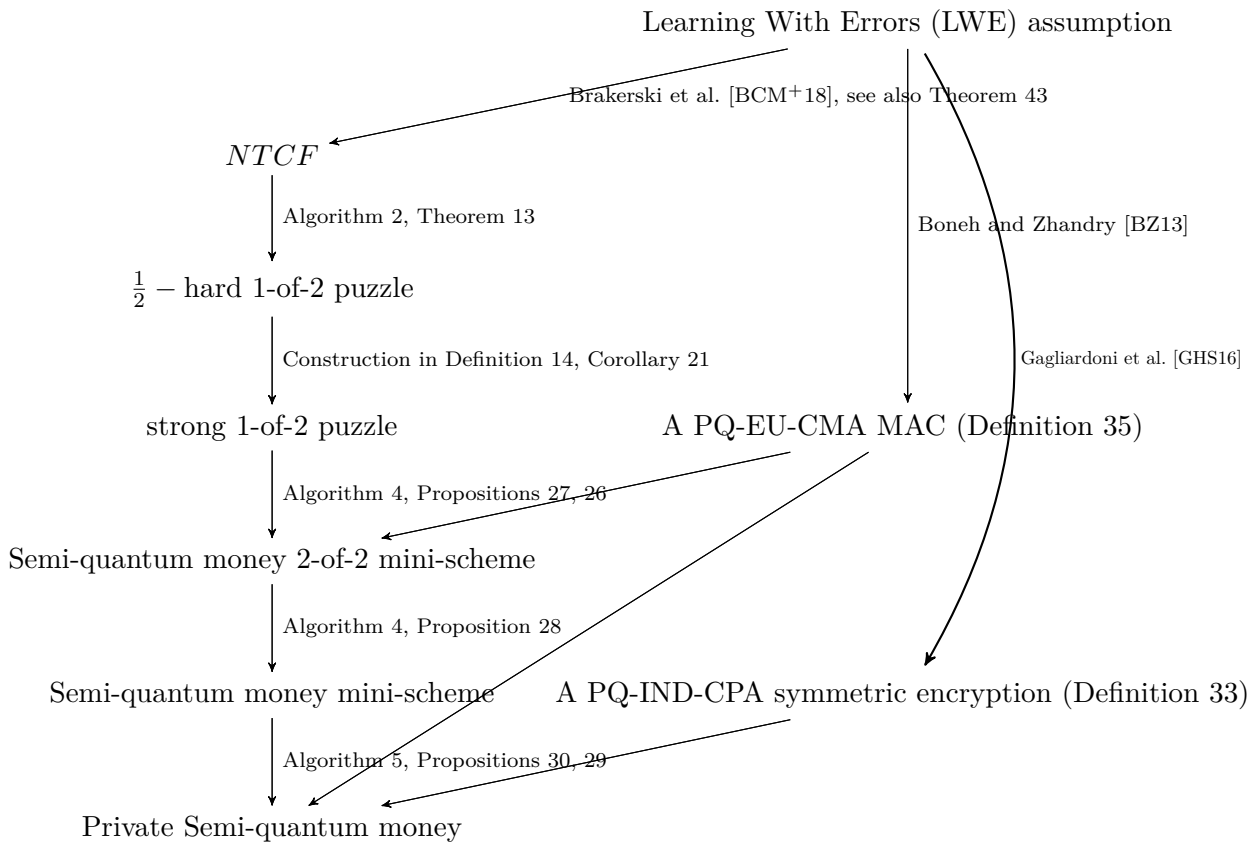
\begin{figure*}
    \centering
    \begin{tikzpicture}
  \matrix (m) [matrix of math nodes, row sep=3em, column sep=0em]
    { & \text{Learning With Errors (LWE) assumption} \\
      NTCF  &\\
      \frac{1}{2}-\text{hard 1-of-2 puzzle} & \\
      \text{strong 1-of-2 puzzle} &\text{A \ifnum\sigconf=0 PQ-EU-CMA MAC (\cref{def:unforgeability_of_MAC}) \else post-quantum universally unforgeable under CMA MAC \fi}\\
      \text{Semi-quantum money 2-of-2 mini-scheme} & \\
      \text{Semi-quantum money mini-scheme} &\text{A \ifnum\sigconf=0 PQ-IND-CPA symmetric encryption (\cref{def:indistinguishability_of_encryption}) \else post-quantum indistinguishable under CPA symmetric encryptions \fi}\\
      \text{Private Semi-quantum money} & \\
     };
  { [start chain] \chainin (m-1-2);
    \chainin (m-2-1) [join={node[right,labeled] {\text{Brakerski et al.~\cite{BCM+18}\ifnum\sigconf=0, see also \cref{thm:LWE_implies_NTCF}\fi}} }];
    \chainin (m-3-1) [join={node[right,labeled] {\text{\cref{alg:1-of-2-puzzle_from_NTCF}, \cref{thm:NTCF_implies_1-of-2_puzzles}}}  }];
    \chainin (m-4-1) [join={node[right,labeled] {\text{Construction in \cref{def:repetition}, \cref{cor:weak_implies_strong_puzzle}}}  }];
    \chainin (m-5-1) [join={node[right,labeled] {\text{\cref{alg:mini_scheme_construction}, Propositions \ref{prop:dollar_z_is_2_of_2}, \ref{prop:mini_scheme_construction_correctness}}}  }];
    \chainin (m-6-1) [join={node[right,labeled] {\text{\cref{alg:mini_scheme_construction}, \cref{prop:mini_scheme_construction}}}  }];
    \chainin (m-7-1) [join={node[right,labeled] {\text{\cref{alg:full_scheme_construction}, Propositions \ref{thm:full_scheme_construction_security}, \ref{prop:full_scheme_construction_completeness}}}  }];
    
     }

  { [start chain] \chainin (m-1-2);
    \chainin (m-4-2) [join={node[right,labeled] {\text{Boneh and Zhandry~\cite{BZ13}}} }];
    
    }
    
    
    
    \path
    (m-1-2) edge [pil, bend left=30] node[auto] [scale=0.65] {\text{Gagliardoni et al.~\cite{GHS16}}} (m-6-2);
    
  { [start chain] \chainin (m-4-2);
    \chainin (m-5-1) [join={node[right,labeled] {} }];
    
     }
  { [start chain] \chainin (m-4-2);
    \chainin (m-7-1) [join={node[right,labeled] {} }];
    
     }
  { [start chain] \chainin (m-6-2);
    \chainin (m-7-1) [join={node[right,labeled] {} }];
    
     }
\end{tikzpicture}
\caption{Structure of our private scheme construction. The right-hand side of the figure shows our \emph{assumptions}. The arrows point to constructions that make use of these assumptions.}
\label{fig:structure}
\end{figure*}

\section{Modes of Operation}
\label{sec:modes_of_operation}

This section discusses the different methods in which quantum money can be used. When discussing a transaction, we are interested in three parties: the payer \nom{$\payer$}{A payer in a transaction}{P} who has money; the receiver \nom{$\receiver$}{A receiver in a transaction}{R} who should receive $\payer$'s money by the end of the transaction; and the bank \nom{$\bank$}{A bank}{B}, a trusted third party. When discussing regular money, we can think of 2 different modes to execute a transaction: a direct transaction, or through the bank. \cref{tab:regular_money_communication_requirements} shows the communication requirements for a direct transaction and a transaction through the bank for regular (non-quantum) money.

One example of a direct transaction is online credit card payments (see \cref{fig:regular_transaction_direct}). In this case $\payer$ transfers her credit card details to $\receiver$. $\receiver$ then contacts $\bank$ with $\payer$'s credit card details and $\bank$ gives him the payment (after verifying $\payer$'s credit card details). This mode requires communication between $\payer$ and $\receiver$ and between $\receiver$ and $\bank$.

One example of a transaction through the bank is a wire transfer (see \cref{fig:regular_transaction_bank}). In this case $\payer$ contacts the bank requesting to transfer money to $\receiver$. Bank then contacts $\receiver$ to complete the transaction (deducting money from $\payer$'s account and crediting $\receiver$'s account with the payment) and sends a message alerting $\receiver$ to the transaction. This mode requires communication between $\payer$ and $\bank$ and between $\receiver$ and $\bank$.

\begin{table}[H]
    \centering
    \begin{tabular}{m{3.6cm}|m{1.2cm}m{1.5cm}|m{1.2cm}m{1.3cm}}
        & \multicolumn{2}{c}{Direct} \vline& \multicolumn{2}{c}{Through Bank}\\
        \hline
        Regular Money & \includegraphics[scale=\tablefigscale]{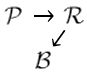} & (\cref{fig:regular_transaction_direct}) &
        \includegraphics[scale=\tablefigscale]{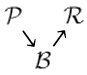} & (\cref{fig:regular_transaction_bank})
    \end{tabular}
    \caption{Types of communication required for a direct transaction and a transaction through the bank with regular (non-quantum) money. $\rightarrow$ denotes classical communication.}
    \label{tab:regular_money_communication_requirements}
\end{table}

We make the observation that different transaction methods exist for quantum money as well. In the quantum money setting this discussion is much more delicate, since there are various flavors of quantum money, and there is the question of which types of communication (classical or quantum) are required between which parties for any transaction method. An important distinction is that quantum money is held in the form of quantum states on the user's quantum computer rather than in an account kept by the bank.

The straightforward way to think of a quantum money transaction is a direct transfer in which $\payer$ sends $\receiver$ the banknote directly. Afterwards, $\receiver$ needs to verify the validity of the money, since $\payer$ is not trusted. In a public scheme this can be done locally on $\receiver$'s computer, while in a private scheme $\receiver$ must interact with $\bank$ for this purpose.

There is another way to transfer quantum money: a transaction through the bank. In such a transaction, instead of sending the money directly to $\receiver$, $\payer$ sends the money to $\bank$. $\bank$ then validates the money, and proceeds to send it to $\receiver$ (or mint a new banknote for him). This type of transaction has the following advantage: in a direct transfer, when $\payer$ sends her money to $\receiver$, $\receiver$ may claim that the money did not pass verification and refuse the payment. In this scenario there is usually no way for $\payer$ to prove that she gave $\receiver$ valid money. This is not the case, however, when performing the transaction through $\bank$; since the bank $\bank$ is assumed to be trusted for the money to function properly, any claim made by $\receiver$ that he did not receive valid money can be refuted by $\bank$ who took part in the transaction and made sure the money was valid. Along with this advantage, transactions via $\bank$ offer different communications requirements which could prove beneficial. These two benefits make it is interesting to review bank transactions even in the setting of public quantum money.

When discussing transactions it is interesting to talk about traceability; i.e., can any party learn any information regarding transactions that have been made between other parties? Generally, quantum money comes in 2 flavors regarding traceability: banknotes and coins. The difference is that while quantum banknotes are unique (each banknote has a serial numbers), while quantum coins are indistinguishable. Thus quantum banknotes are inherently traceable, while the traceability of coins depend on how they are used (see \cite{AMR20} for a formal definition). Apart from the differences between transaction methods discussed above, a direct transaction and a transaction through the bank also behave differently when considering traceability of quantum coins. In a direct transfer $\receiver$ learns the identity of the $\payer$ and $\bank$ learns only the identity of $\receiver$. In a transaction through the bank, however, $\receiver$ does not learn the identity of $\payer$ and $\bank$ learns the identities of both $\payer$ and $\receiver$. This difference gives further motivation to consider the different modes of transaction.

\cref{tab:communication_requirements} lists the communication requirements of transactions for each flavor of quantum money, and \cref{sec:transaction_figs} contains detailed figures illustrating how transactions can be made for each flavor of quantum money. Quantum communication is denoted by $\quantumcom$, classical communication is denoted by $\singleclassiccom$, and a two-sided arrow indicates two-sided communication is required.

We can see that semi-quantum money allows transactions with purely classical communication. It is an open question whether a variant of semi-quantum money could be made that would allow to classically transfer quantum money between users without the aid of the bank (i.e., make a public direct transaction with classical communication alone). Amos et al. \cite{AGKZ20} showed a construction of semi-quantum money with completely classical communication between users (one-shot signature), but in the oracle model.

\begin{remark}
In \cref{tab:communication_requirements} we state that a direct transfer using classically verifiable or semi-quantum money requires only classical communication between $\payer$ and $\receiver$. This is because $\receiver$ can initiate the verification protocol with the bank, and then act as relay between $\payer$ and $\bank$ in order to verify $\payer$'s banknote. The bank then mints a new banknote for receiver to conclude the transaction (see \cref{fig:classically_verifiable_transaction_direct,fig:semi_quantum_transaction_direct}). Of course, it is still possible for the $\payer$ to send the banknote via quantum communication, but this way requires one less quantum channel.
\end{remark}

\begin{remark}\label{rmk:stronger_security_definition}
We use a stronger security notion than that used in Refs.~\cite{Gav12, GK15}. In these previous works a quantum money scheme is defined to be unforgeable if an adversary with $\ell$ banknotes cannot generate more than $\ell$ quantum states that would each pass the \emph{honest} verification (with non-negligible probability). 
In contrast, our unforgeability requirement (\cref{def:security_public_interactive_money,def:security_private_interactive_money}) states that an adversary with $\ell$ banknotes cannot pass more than $\ell$ \emph{potentially malicious} verifications. This is more general and therefore stronger; in the existing definition the verification protocol is run honestly, while our definition allows the adversary to act maliciously during the verification protocol itself.

It is easy to see why the stronger definition is not used in the schemes above. These schemes are designed to work in the direct mode of operation, and they provide the ability to verify a banknote multiple times, meaning the scheme does not meet the stronger security definition by design. Generally, such schemes could not utilize their classical verification for a transaction through the bank; since verification does not destroy the banknote, $\payer$ would have to send her banknote to $\bank$ (through a quantum channel) so $\bank$ could be sure she no longer holds the banknote. This means that transactions through the bank would require all-quantum communication.

Allowing multiple verifications is beneficial for the classical verification setting, but irrelevant for the semi-quantum setting. For example, in Gavinsky's scheme \cite{Gav12}, the banknote is designed to allow a (fixed) polynomial number of verifications, after which the banknote would be sent back to the bank to be verified and destroyed, and a new one would be minted instead. If instead the banknote was destroyed after a single verification, the verifier would need the bank to mint him a new banknote after every verification, which would require quantum communication. Therefore, allowing multiple verifications without the need to re-mint allows verification with no quantum communication between $\receiver$ and $\bank$ (most of the time). In the semi-quantum setting, however, minting is also classical, so allowing banknotes to pass verification without being destroyed does not give us an interesting benefit, and we can use our stronger security notion. This also allows us to use the through the bank mode of operation with only classical communication.
\end{remark}

\pagebreak 
\begin{table}[H]
    \centering
    \begin{tabular}{m{1.2cm}|m{3.6cm}|m{1.2cm}m{1.5cm}|m{1.2cm}m{1.3cm}}
                & & \multicolumn{2}{c}{Direct} \vline& \multicolumn{2}{c}{Through Bank}\\
                \hline
                & Standard          & \includegraphics[scale=\tablefigscale]{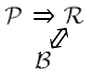} & (\cref{fig:quantum_transaction_direct}) &
                \includegraphics[scale=\tablefigscale]{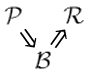} & (\cref{fig:quantum_transaction_bank})\\
                & Classically verifiable & \includegraphics[scale=\tablefigscale]{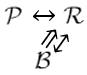} & (\cref{fig:classically_verifiable_transaction_direct}) &  \includegraphics[scale=\tablefigscale]{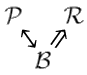} & (\cref{fig:classically_verifiable_transaction_bank})\\
        Private & Reusable classically verifiable\tablefootnote{Some classically verifiable schemes \cite{Gav12} allow to verify the same banknote multiple times before it is destroyed. Schemes with this property behave a little differently.} & \includegraphics[scale=\tablefigscale]{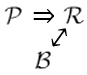} & (\cref{fig:classically_verifiable_MV_transaction_direct}) &\includegraphics[scale=\tablefigscale]{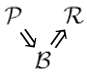} & (\cref{fig:classically_verifiable_MV_transaction_bank})\\
                & Classical minting\tablefootnote{The authors are not aware of an instantiation of a private quantum money scheme with classical minting that is not semi-quantum, but it is a valid setting to consider.} & \includegraphics[scale=\tablefigscale]{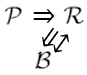} & (\cref{fig:classical_minting_transaction_direct}) & 
                \includegraphics[scale=\tablefigscale]{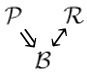} & (\cref{fig:classical_minting_transaction_bank})\\
                & Semi-quantum & \includegraphics[scale=\tablefigscale]{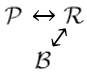} & (\cref{fig:semi_quantum_transaction_direct}) & 
                \includegraphics[scale=\tablefigscale]{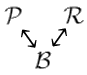} & (\cref{fig:semi_quantum_transaction_bank})\\
        \hline
                & Standard & \includegraphics[scale=\tablefigscale]{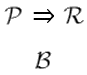} & (\cref{fig:public_transaction_direct}) & 
                \includegraphics[scale=\tablefigscale]{Transaction_Figures/Communication_Requirements/quantum_bank.png} & (\cref{fig:quantum_transaction_bank})\\
                & Classically verifiable\tablefootnote{The authors are not aware of any instantiation of a classically-verifiable public quantum money scheme (that is not also semi-quantum), but it is a valid setting to consider.} & \includegraphics[scale=\tablefigscale]{Transaction_Figures/Communication_Requirements/public_direct.png} & (\cref{fig:public_transaction_direct}) & 
                \includegraphics[scale=\tablefigscale]{Transaction_Figures/Communication_Requirements/classically_verifiable_bank.png} & (\cref{fig:classically_verifiable_transaction_bank})\\
        Public & Reusable classically verifiable & \includegraphics[scale=\tablefigscale]{Transaction_Figures/Communication_Requirements/public_direct.png} & (\cref{fig:public_transaction_direct}) & 
                \includegraphics[scale=\tablefigscale]{Transaction_Figures/Communication_Requirements/quantum_bank.png} & (\cref{fig:classically_verifiable_MV_transaction_bank})\\
                & Classical minting\tablefootnote{Quantum lightning \cite{Zha19} can be seen as quantum money with classical minting which is not necessarily classically verifiable --- in \cref{sec:public_semi_quantum_money} we construct public semi-quantum quantum money from quantum lightning with bolt-to-certificate, but we can see that classical minting is straightforward even without this quality.} & \includegraphics[scale=\tablefigscale]{Transaction_Figures/Communication_Requirements/public_direct.png} & (\cref{fig:public_transaction_direct}) & 
                \includegraphics[scale=\tablefigscale]{Transaction_Figures/Communication_Requirements/classical_minting_bank.png} & (\cref{fig:classical_minting_transaction_bank})\\
                & Semi-quantum & \includegraphics[scale=\tablefigscale]{Transaction_Figures/Communication_Requirements/public_direct.png} & (\cref{fig:public_transaction_direct}) & \includegraphics[scale=\tablefigscale]{Transaction_Figures/Communication_Requirements/semi_quantum_bank.png} & (\cref{fig:semi_quantum_transaction_bank})\\
                & One-shot signature\tablefootnote{This is a construction from \cite{AGKZ20} for public semi-quantum money which allows to transfer banknotes between users using classical communication. However, the only construction available requires an oracle.} & \includegraphics[scale=\tablefigscale]{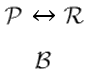} & (\cref{fig:one_shot_transaction_direct}) & \includegraphics[scale=\tablefigscale]{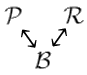} & (\cref{fig:semi_quantum_transaction_bank})
    \end{tabular}
    \caption{Types of communication required for each transaction method in each flavor of quantum money. $\Rightarrow$ denotes quantum communication and $\rightarrow$ denotes classical communication. A two-sided arrow indicates two-sided communication. Note that public schemes can also use transactions methods from private schemes.}
    \label{tab:communication_requirements}
\end{table}

\ifnum\sigconf=0
\section{Quantum Lightning with Bolt-to-Certificate Implies Public Semi-Quantum Money}
\label{sec:public_semi_quantum_money}
In this section we construct a public memory-dependent semi-quantum money scheme using Zhandry's quantum lightning from \cite{Zha19} along with Coladangelo's Bolt-to-Certificate from \cite{Col19} and its superseding work \cite{CS20}.

\subsection{Definitions of Public Semi-Quantum Money}
\label{sec:public_money_definitions}

\begin{definition}[Interactive public quantum money]
    \label{def:interactive_public_money}
    An interactive public quantum money scheme consists of a classical PPT key generation algorithm $\keygen$ and two-party interactive QPT protocols $\mint$ and $\verify$. $\keygen(\secparam)$ outputs a pair of keys $(pk, sk)$ which are the public and private keys, respectively. Both the minting protocol and the verification protocol are two-party quantum protocols: $\mint$ involves an Acquirer $\acquirer$ and a Bank $\bank$, whereas $\verify$ involves a Payer $\payer$ and a Receiver $\receiver$ (in $\verify$ either party can be either a bank or a user). During both protocols, both parties receive the public key $pk$ as input, and the bank (if it participates) receives the private key $sk$ while users do not. At the end of the honest run of $\mint$, the user holds a quantum money state that, in general, could be a mixed state. In this work, the protocols will end with a pure state, usually denoted $\ket{\$}$. In the following sections, for the sake of clarity, we work with the pure-state formalism. The banknote that the $\payer$ chooses to verify is denoted in this work as the input of the $\verify$ protocol. At the end of the verification protocol, the Receiver outputs a bit $b$ that states whether the money was accepted or not.
    \paragraph{Correctness.} The scheme is \emph{correct} if there exists a negligible function $\negl$ such that:
    \begin{align*}
        \Pr[(pk, sk) &\sample \keygen(\secparam); \ket{\$} \sample \mint_{(pk, sk)}(\secparam) : \\
        &\verify_{pk}(\ket{\$}) = 1] = 1 - \negl\;.
    \end{align*}
\end{definition}

\begin{definition}[Memory-Dependent Quantum Money]\label{def:memory_dependent}
    A quantum money scheme is \emph{memory-dependent} if the bank is required to maintain a state it uses throughout different runs of the quantum money protocols.
\end{definition}

\begin{definition}[Security against sabotage]\label{def:sec_sabotage}
    An interactive public quantum money scheme $\$$ is \emph{secure against sabotage} if for every QPT counterfeiter $\adv$ there exists a negligible function $\negl$ such that:
    \begin{equation*}
        \Pr[\sabmoney{\adv}{\$} = 1] \leq \negl\;.
    \end{equation*}
    
    The money sabotaging game $\sabmoney{\adv}{\$}$:
    \begin{enumerate}
        \item The challenger runs $(pk, sk) \gets \keygen(\secparam)$ and sends $pk$ to $\adv$.
        \item The adversary outputs a money state $\ket{\$}$ and runs $\verify(\ket{\$})$ with the challenger two consecutive times.
        \item $\adv$ wins if the first verification accepts and the second rejects, in which case the result of the game is 1 (and otherwise it is 0).
    \end{enumerate}
\end{definition}

\begin{definition}
    \label{def:security_public_interactive_money}
    We say that an interactive public quantum money scheme $\$$ is secure if it is secure against sabotage (\cref{def:sec_sabotage}) and if for every QPT counterfeiter $\adv$ there exists a negligible function $\negl$ such that:
    \begin{equation*}
        \Pr[\counterfeit{\adv}{\$}{} = 1] \leq \negl\;.
    \end{equation*}
    
    The money counterfeiting game $\counterfeit{\adv}{\$}{}$:
    \begin{enumerate}
      \item The bank generates a key pair $(pk, sk) \sample \keygen(\secparam)$ such that $pk$ is publicly known.
      \item The bank and the counterfeiter interact. The counterfeiter can ask the bank to run $\mint_{k}(\cdot)$ and $\verify_{k}(\cdot)$ polynomially many times, in any order the counterfeiter wishes. The counterfeiter is not bound to following his side of the protocols honestly. The counterfeiter can keep ancillary registers from earlier runs of these protocols and use them in later steps. Let $w$ be the number of successful verifications, $\ell$ the number of times that mint was called by the counterfeiter and $v$ the number of times that verify was called by the counterfeiter.
      \item The bank outputs $(w,\ell,v)$.
    \end{enumerate}
    
    The value of the game is 1 iff $w > \ell$. In this case we sometimes simply say that the counterfeiter wins.
\end{definition}

We state here again that this security definition is stronger than the one previously used --- this is further explained in \cref{rmk:stronger_security_definition}.

Public semi-quantum money has two types of verifications: the first is a quantum verification algorithm we denote $\qverify$ that does not destroy the banknote and can be performed between any two entities with quantum computation and communication resources, and the second is a classical verification protocol we denote $\cverify$ that destroys the money and can only be performed with a bank. $\qverify$ allows money transfer between any two users, which could be performed without the aid of a bank. $\cverify$ can be thought of as depositing or spending the money with the bank --- a quantum user can destroy his banknote in a way that he can prove to the bank that the note was destroyed. In the counterfeiting game above, the counterfeiter can choose which type of verification to run each time. Note that if $\qverify$ is chosen, the bank must perform the verification; i.e., the adversary must give the banknote to the bank (and the bank does not return it afterwards).

Note that the bank is no different to a user when participating in $\qverify$, since no secret information is used there --- so if $\qverify$ is secure with a bank, it is also secure with other users (i.e., any counterfeiting method that would work against a user would also work against the bank). Therefore the above definition of security ensures the scheme is secure even when considering transactions between users, even though the security game is defined with the bank.

Note that if two users wish to transfer a banknote but do not share a quantum communication channel, they can do so via a bank: the Payer would perform $\cverify$ with the bank, after which the bank will run $\mint$ with the Receiver (see \cref{fig:semi_quantum_transaction_bank}).

\begin{definition}[Public semi-quantum money]
    \label{def:public_semi_quantum}
    We say that a protocol with the bank has classical minting (resp. verification) if the bank is classical in $\mint$ (resp. $\verify$). We define public semi-quantum money as any secure interactive public quantum money scheme that has classical minting and two types of verification, denoted $\qverify$ and $\cverify$, such that in the end of $\qverify$ the banknote is not destroyed (meaning it could pass further verifications), and that $\cverify$ is a classical verification which destroys the banknote and can only be performed between a user and a bank. In $\qverify$ no party receives the private key, and in $\cverify$ the bank receives the private key while the user does not.
\end{definition}

\subsection{Construction of a Public Semi-Quantum Scheme}
\label{sec:public_money_construction}
Following is an informal explanation of our public scheme construction, which is defined formally in \cref{alg:public_scheme}. The construction uses a secure quantum lightning scheme with bolt-to-certificate ---  a brief informal overview of both is given in \cref{sec:introduction} below \cref{thm:main_public_theorem}, and their formal definitions (taken almost verbatim from their respective papers) can be found on \cref{sec:ql_with_bolt_to_cert}.

Let \nom{$\ql$}{A quantum lightning scheme}{QL} be a secure quantum lightning scheme with bolt-to-certificate (see \cref{def:quantum_lightning,def:ql_security,def:bolt_to_certificate}), and let \nom{$\signscheme$}{A digital signature scheme}{DS} be a PQ-EU-CMA digital signature scheme (see \cref{def:digital_signature,def:unforgeability_of_digital_signature}). In $\keygen$ the bank runs the $\ql$ setup algorithm to randomly generate a set of algorithms that generate and verify bolts and certificates, and runs $\signscheme.\keygen$ to generate a private and public key. $\keygen$ outputs the public digital signature key and the four $\ql$ algorithms as the scheme's public key, and the digital signature's private key as the scheme's private key. For the minting process, the user generates a bolt $\ket{\psi}$ along with its serial key $s$ using the generation algorithm $\genbolt$ (that is part of the public key). He sends $s$ to the bank, which sends back a signature $\sigma$ for it (using its private key). Each banknote $\ket{\$}$ consists of the bolt $\ket{\psi}$, its serial number $s$, and its signature $\sigma$ (without the signature any user could generate by themselves notes that would pass verification).

In $\qverify$ the Payer sends to the Receiver his bolt along with its signed serial number, and the receiver verifies both the bolt and the signature. In $\cverify$ the user uses the bolt to generate a certificate and sends the certificate to the bank along with the bolt's signed serial number. The bank verifies the certificate and the signature, and checks its database for the serial number. If the serial number appears there then the certificate has been given before and so verification will fail.

Note that according to \cref{def:public_semi_quantum}, in a general public semi-quantum money scheme the bank uses the secret key $sk$ in $\cverify$, but in our scheme the bank does not --- meaning it is stronger in this respect. \cite{BS16} also have classical verification with only a public key, so this is not the first instance where $\cverify$ uses only the public key.

\begin{algorithm*}
    \caption{The Interactive Public Money Scheme \nom{$\$_P$}{Our public semi-quantum money scheme based on quantum lightning with bolt-to-certificate}{DOLLAR_P}}
    \label{alg:public_scheme}
    
    \fbox{
    \procedure[linenumbering]{$\$_P.\keygen(\secparam)$}{
        (\genbolt, \verbolt, \gencert, \vercert) \gets \ql.\setup(\secparam)\\
        (pk_\sigma, sk_\sigma) \gets \signscheme.\keygen(\secparam) \\
        pk \gets (\genbolt, \verbolt, \gencert, \vercert, pk_\sigma) \\
        sk \gets sk_\sigma \\
        \textbf{return} \; (pk, sk)
    }
    }
    
    \fbox{
    \pseudocode[head = $\$_p.\mint_{(pk, sk)}$]{
    \textbf{Acquirer} \<\< \textbf{Bank} \\[] [\hline]
    \pcln \protect{\label{dollar_p_mint_bolt}} (\ket{\psi}, s) \gets \ql.\genbolt(\secparam)\\
    \pcln \< \sendmessageright*{s} \< \\
    \pcln \protect{\label{dollar_p_mint_signature}} \<\< \sigma \gets \signscheme.\sign_{sk_\sigma}(s) \\
    \pcln \< \sendmessageleft*{\sigma} \< \\
    \pcln \ket{\$} \gets (\ket{\psi}, s, \sigma)
    }
    }
    
    \fbox{
    \procedure[linenumbering]{$\$_p.\qverify_{pk}(\ket{\$})$}{
    \protect{\label{dollar_p_qverify_input}} \text{interpret $\ket{\$}$ as } (\ket{\psi}, s, \sigma) \\
    \protect{\label{dollar_p_qverify_signature}} r_\sigma \gets \signscheme.\verify_{pk_\sigma}(s, \sigma) \\
    \protect{\label{dollar_p_qverify_bolt}} r_b \gets \ql.\verbolt(\ket{\psi}, s) \\
    \protect{\label{dollar_p_qverify_out}} \textbf{return} \; r_\sigma \cdot r_b
    }
    }
    
    \fbox{
    \pseudocode[head = $\$_p.\cverify_{pk}(\ket{\$})$]{
    \text{\Comment{$D \gets \varnothing$ before first run}}\\
    \textbf{Payer} \<\< \textbf{Bank} \\[] [\hline]
    \pcln \text{interpret $\ket{\$}$ as } (\ket{\psi}, s, \sigma) \\
    \pcln \protect{\label{dollar_p_cverify_gencert}} c \gets \ql.\gencert(\ket{\psi}, s) \\
    \pcln \protect{\label{dollar_p_cverify_input}} \< \sendmessageright*{s, \sigma, c} \< \\
    \pcln \protect{\label{dollar_p_cverify_signature}} \<\< r_\sigma \gets \signscheme.\verify_{pk_\sigma}(s, \sigma) \\
    \pcln \protect{\label{dollar_p_cverify_vercert}} \<\< r_c \gets \ql.\vercert(s, c) \\
    \pcln \<\< r_d \gets s \notin D \\
    \pcln \<\< \pcif r_\sigma \cdot r_c \cdot r_d = 1: \\
    \pcln \<\< \;\;\;\; D \gets D \cup \{s\} \\
    \pcln \<\< \textbf{return} \; r_\sigma \cdot r_c \cdot r_d
    }
    }
\end{algorithm*}

\begin{proposition}[Correctness of $\$_P$]
Assuming $\ql$ is a secure quantum lightning scheme with bolt-to-certificate (see Definitions \ref{def:quantum_lightning} and \ref{def:bolt_to_certificate}) and $\signscheme$ is a digital signature scheme with \emph{perfect completeness} (see \cref{def:digital_signature}), $\$_P$, which is defined in \cref{alg:public_scheme}, is a \emph{correct} (\cref{def:interactive_public_money}) \emph{memory-dependent} (\cref{def:memory_dependent}) public semi-quantum money scheme (see \cref{def:public_semi_quantum}).
\end{proposition}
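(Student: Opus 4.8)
The plan is to unwind the construction in \cref{alg:public_scheme} against the three definitions it must meet, checking first the structural requirements and then the two completeness bounds. The structural checks are immediate by inspection: $\$_P.\keygen$ is a classical PPT algorithm producing a public/secret pair $\bigl((\genbolt,\verbolt,\gencert,\vercert,pk_\sigma),\,sk_\sigma\bigr)$; in $\mint$ the only quantum step, $\ql.\genbolt$, is performed by the Acquirer, so the bank's side of minting is classical; $\qverify$ is a quantum algorithm that reads only $pk$; and $\cverify$ is a classical two-party protocol between a user and the bank in which the bank maintains and updates the database $D$, so the scheme is memory-dependent in the sense of \cref{def:memory_dependent}. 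Moreover $\cverify$ in fact uses only $pk$ (a strengthening, as remarked before the algorithm), $\cverify$ destroys the banknote because $\ql.\gencert$ consumes the bolt, and $\qverify$ leaves the banknote intact because $\ql.\verbolt$ is the non-destructive quantum-lightning verification — by correctness of $\ql$ it accepts a freshly generated bolt with probability $1-\negl$, so by a gentle-measurement argument the post-verification state is negligibly far from the original. This is exactly the shape required by \cref{def:public_semi_quantum}.

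For the completeness of $\qverify$, fix $(pk,sk)\sample\$_P.\keygen(\secparam)$ and let $\ket{\$}=(\ket{\psi},s,\sigma)$ be the output of an honest execution of $\$_P.\mint_{(pk,sk)}$, so that $(\ket{\psi},s)\sample\ql.\genbolt(\secparam)$ and $\sigma=\signscheme.\sign_{sk_\sigma}(s)$. On input $\ket{\$}$, $\qverify$ returns $r_\sigma\cdot r_b$ with $r_\sigma=\signscheme.\verify_{pk_\sigma}(s,\sigma)$ and $r_b=\ql.\verbolt(\ket{\psi},s)$. Perfect completeness of $\signscheme$ gives $r_\sigma=1$ with certainty, and correctness of the quantum lightning scheme gives $\Pr[r_b=1]\geq 1-\negl$; a union bound yields $\Pr[\qverify_{pk}(\ket{\$})=1]\geq 1-\negl$, which is the inequality demanded by \cref{def:interactive_public_money} (whose $\verify$ is instantiated here by $\qverify$).

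The completeness of $\cverify$ is analogous. On an honestly minted note the Giver computes $c\sample\ql.\gencert(\ket{\psi},s)$ and sends $(s,\sigma,c)$; the bank then checks $r_\sigma=\signscheme.\verify_{pk_\sigma}(s,\sigma)$, $r_c=\ql.\vercert(s,c)$, and $r_d=[\,s\notin D\,]$. As before $r_\sigma=1$ by perfect signature completeness; $\Pr[r_c=1]\geq 1-\negl$ by the completeness clause of the bolt-to-certificate property (a certificate honestly derived from a bolt with serial $s$ verifies against $s$); and $r_d=1$, since in the single mint-then-verify execution considered by the correctness definition $D=\varnothing$ — and more generally, if one allows earlier deposits, the near-uniqueness of quantum-lightning serial numbers rules out an earlier $\cverify$ having inserted the same $s$ except with negligible probability. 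Hence $\cverify$ accepts an honestly minted, not-previously-deposited note with probability $1-\negl$.

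I do not expect a genuine obstacle here; the one place that calls for care is bookkeeping — invoking precisely the right completeness clause of the $\ql$ definition in each location ($\verbolt$ on a fresh bolt for $\qverify$, $\vercert\circ\gencert$ on a fresh bolt for $\cverify$), keeping the perfect completeness of $\signscheme$ separate from the merely-overwhelming completeness of $\ql$, and treating the $r_d$ term cleanly, since the literal correctness statement of \cref{def:interactive_public_money} quantifies over only one run while the memory-dependent framing tempts one to reason about a non-empty $D$.
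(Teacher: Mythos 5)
Your proof is correct and follows essentially the same approach as the paper's: perfect completeness of $\signscheme$ to get $r_\sigma=1$ deterministically, the completeness clauses of $\ql$ and bolt-to-certificate for $r_b$ and $r_c$, and a union bound to conclude. The only places you diverge are improvements in rigor that the paper glosses over — you explicitly check the structural clauses of \cref{def:public_semi_quantum,def:memory_dependent} (classical bank-side in $\mint$ and $\cverify$, non-destructiveness of $\qverify$ via gentle measurement, destruction in $\cverify$ via consuming the bolt), and you observe that the literal correctness game has $D=\varnothing$ so $r_d=1$ trivially, whereas the paper jumps straight to the serial-number-collision argument (which you also supply as the general case); both handlings of $r_d$ are fine.
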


\begin{proof}
    Let $(\genbolt, \verbolt, \gencert, \vercert) \gets \ql.\setup(\secparam)$ and assume an honest run of $\ket{\$} \gets \$_P.\mint_{(pk, sk)}()$. We need to prove correctness for both implementations of $\verify$: we begin by proving the correctness of $\qverify$. In an honest run of $\$_P.\qverify$, the banknote sent on line \ref{dollar_p_qverify_input} is $\ket{\$}$ generated in $\mint$. Recall that $\ket{\$} \coloneqq (\ket{\psi}, s, \sigma)$. From the \emph{perfect completeness} of $\signscheme$ we get:
    \begin{equation*}
        \Pr[\signscheme.\verify_{pk_\sigma}(s, \signscheme.\sign_{sk_\sigma}(s)) = 1] = 1\;.
    \end{equation*}
    So the signature verification on line \ref{dollar_p_qverify_signature} passes since $\sigma \gets \signscheme.\sign_{sk_\sigma}(s)$ on line \ref{dollar_p_mint_signature} of $\$_P.\mint$, meaning $\Pr[r_\sigma = 1] = 1$. From the definition of quantum lightning (\cref{def:quantum_lightning}) we get:
    \begin{align*}
        \Pr[(\genbolt, \verbolt) &\gets \ql.\setup(\secparam); (\ket{\psi}, s) \gets \genbolt() : \\
        &\verbolt(\ket{\psi}, s) = 1] \\
        & = 1 - \negl\;.
    \end{align*}
    So the bolt verification on line \ref{dollar_p_qverify_bolt} passes except with negligible probability since $(\ket{\psi}, s)$ is a valid bolt generated on line \ref{dollar_p_mint_bolt} of $\$_P.\mint$, meaning $\Pr[r_b = 1] = 1 - \negl$. Therefore, from the union bound:
    \begin{align*}
        \Pr[\ql.\qverify_{pk}(\ket{\psi}) = 0] &\leq \Pr[r_\sigma = 0] + \Pr[r_b = 0] \\
        &= \negl
    \end{align*}
    meaning $\Pr[\ql.\qverify_{pk}(\ket{\psi}) = 1] = 1 - \negl$.
    
    We now prove correctness of $\$_P.\cverify$. Like with $\qverify$, from the \emph{perfect completeness} of $\signscheme$ we get that the signature verification on line \ref{dollar_p_cverify_signature} passes, meaning $\Pr[r_\sigma = 1] = 1$. From the definition of bolt-to-certificate (\cref{def:bolt_to_certificate}) we get that the certificate verification on line \ref{dollar_p_cverify_vercert} passes except with negligible probability, since $c$ was generated on line \ref{dollar_p_cverify_gencert} using the valid bolt $(\ket{\psi}, s)$ that was generated in line \ref{dollar_p_mint_bolt} of $\$_P.\mint$, meaning $\Pr[r_c = 1] = 1 - \negl$. From the security of quantum lightning (\cref{def:ql_security}) we get that for each $s' \in D$, $\Pr[s = s']$ is negligible\footnote{If there is non-negligible probability that a generated bolt would have the same serial number as one already in $D$, assuming $D$ contains an amount of serial numbers polynomial in $\secpar$, we could construct a bolt forger $\boltforger$ that would generate $\abs{D}+1$ bolts and with non-negligible probability end up with two bolts with the same serial number that pass $\verbolt$, winning the bolt forging security game (\cref{def:ql_security}).}, meaning $\Pr[r_d = 1] = 1 - \negl$ assuming D is polynomial in $\secpar$. Therefore, from the union bound:
    \begin{align*}
        \Pr[\ql.\cverify_{pk}(\ket{\psi}) = 0] &\leq \Pr[r_\sigma = 0] + \Pr[r_c = 0] + \Pr[r_d = 0] \\
        &= \negl\;.
    \end{align*}
\end{proof}

\begin{theorem}[Security of $\$_P$]
    \label{thm:public_scheme_security}
    Assuming $\ql$ is a secure quantum lightning scheme (Definitions \ref{def:quantum_lightning} and \ref{def:ql_security}) with bolt-to-certificate capability (\cref{def:bolt_to_certificate}) and that $\signscheme$ is a PQ-EU-CMA digital signature scheme (Definitions \ref{def:digital_signature} and \ref{def:unforgeability_of_digital_signature}), then $\$_P$, which is defined in \cref{alg:public_scheme}, is secure according to \cref{def:security_public_interactive_money}. If $\ql$ is secure against sabotage (\cref{def:sec_sabotage}), $\$_P$ is also secure against sabotage.
\end{theorem}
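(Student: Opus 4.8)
The plan is to prove the two assertions separately: the counterfeiting bound of \cref{def:security_public_interactive_money} will follow from the security of $\ql$ (\cref{def:ql_security}) together with EU-CMA of $\signscheme$ (\cref{def:unforgeability_of_digital_signature}), and sabotage-security of $\$_P$ will follow from sabotage-security of $\ql$ (\cref{def:sec_sabotage}). For the counterfeiting bound, suppose towards a contradiction that a QPT counterfeiter $\adv$ achieves $w>\ell$ in $\counterfeit{\adv}{\$_P}{}$ with non-negligible probability. The key structural observation is that each execution of $\$_p.\mint$ makes the bank sign exactly one serial number, so after $\ell$ mint queries $\adv$ holds valid signatures on a set $T=\{s_1,\dots,s_\ell\}$ of at most $\ell$ serial numbers. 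Every accepting run of $\qverify$ or $\cverify$ exposes a pair $(s,\sigma)$ with $\signscheme.\verify_{pk_\sigma}(s,\sigma)=1$, and I split on whether, conditioned on $\adv$ winning, some accepting run uses a serial $s\notin T$ (case 1) or all accepting runs use serials in $T$ (case 2); one of these events has non-negligible probability.

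In case~1 I build an EU-CMA forger $\forger$: it receives $pk_\sigma$ and a signing oracle from its challenger, runs $\ql.\setup$ itself, forms $pk=(\genbolt,\verbolt,\gencert,\vercert,pk_\sigma)$, and simulates the entire $\counterfeit{\adv}{\$_P}{}$ game for $\adv$, answering each $\mint$ query by calling the signing oracle on the serial sent by $\adv$ and running all $\qverify$ and $\cverify$ interactions and maintaining the database $D$ on its own. This simulation is faithful because neither verification protocol uses $sk_\sigma$ --- only $pk_\sigma$ and the public $\ql$ algorithms. As soon as $\adv$ transmits a pair $(s^*,\sigma^*)$ with $\signscheme.\verify_{pk_\sigma}(s^*,\sigma^*)=1$ and $s^*\notin T$ (which happens whenever case~1 occurs), $\forger$ outputs it, contradicting \cref{def:unforgeability_of_digital_signature}.

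In case~2, since $w>\ell$ and all accepting runs use serials from the size-$\le\ell$ set $T$, by pigeonhole some serial $s^*$ is used in at least two accepting verifications. If both are $\cverify$ runs then, after the first accepts, $s^*$ is inserted into $D$, so the second has $r_d=0$ and rejects --- a contradiction. Hence at least one of the two is an accepting $\qverify$ run in which $\adv$, playing the Giver, hands the bank a state $\ket{\psi}$ with $\ql.\verbolt(\ket{\psi},s^*)=1$, and the other accepting run on $s^*$ supplies either a second such state $\ket{\psi'}$ (if it is a $\qverify$) or a certificate $c$ with $\ql.\vercert(s^*,c)=1$ (if it is a $\cverify$). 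I build a reduction $\solver$ that generates the signature keys itself, embeds the $\ql$ instance received from the $\ql$-security / bolt-to-certificate challenger, plays the bank throughout the counterfeiting game while retaining every state it receives during $\qverify$, and finally outputs the two bolts on the common serial $s^*$ (resp. the bolt together with the certificate on $s^*$), contradicting \cref{def:ql_security} (resp. \cref{def:bolt_to_certificate}). Combining the two cases yields $w\le\ell$ except with negligible probability.

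For sabotage-security, observe that $\$_p.\qverify$ is the conjunction of the deterministic test $\signscheme.\verify_{pk_\sigma}(s,\sigma)$ --- which depends only on the classical data $(s,\sigma)$, left untouched by the protocol, hence returns the same bit on both invocations --- with $\ql.\verbolt(\ket{\psi},s)$. Thus if $\adv$ wins $\sabmoney{\adv}{\$_P}$ (first $\qverify$ accepts, second rejects) the signature test passed both times and $\ql.\verbolt$ accepted and then rejected on $(\ket{\psi},s)$; a reduction that embeds the $\ql$ instance, generates the signature keys, obtains $\ket{\$}=(\ket{\psi},s,\sigma)$ from $\adv$, and forwards $(\ket{\psi},s)$ to the $\ql$-sabotage challenger wins with the same probability. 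I expect the main obstacle to lie in case~2 of the counterfeiting argument: one must use the precise bookkeeping of $\counterfeit{\adv}{\$_P}{}$ --- in particular that the reduction, acting as the bank, keeps the states it receives in $\qverify$ --- to argue that two accepting $\qverify$ runs on the same serial genuinely yield two valid bolts rather than one bolt shuttled back and forth, and one must invoke bolt-to-certificate security (not merely bolt uniqueness) to rule out the mixed $\qverify$/$\cverify$ collision.
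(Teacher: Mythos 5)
Your proof is correct and follows essentially the same route as the paper's: rule out a forged signature on a serial outside the minted set via an EU-CMA reduction, then apply pigeonhole to conclude some $s^*$ is accepted twice and split on the $\cverify$/$\cverify$, $\qverify$/$\cverify$, and $\qverify$/$\qverify$ sub-cases, reducing the latter two to bolt-to-certificate security and quantum-lightning security respectively; sabotage security reduces directly to $\ql$'s sabotage game. The only cosmetic difference is that you promote the signature-forgery argument to an explicit top-level case with a union-bound, whereas the paper handles it inline (in a footnote) before running the pigeonhole argument.
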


\begin{proof}
    In order to prove the security of the scheme, we need to prove that no counterfeiter $\adv$ can win the security game $\counterfeit{\adv}{\$_P}{}$ (\cref{def:security_public_interactive_money}) with non-negligible probability. In the game, the counterfeiter has access to a verification oracle, meaning that in our case the counterfeiter can run either $\qverify$ or $\cverify$ a polynomial amount of times, and wins if he manages to pass a total of more than $\ell$ $\qverify$ and $\cverify$ verifications (when $\ell$ is the number of times $\mint$ was run).
    
    We show that any adversary capable of breaking the security of $\$_P$ must break the underlying security of either $\ql$, $\ql$'s bolt-to-certificate capability, or $\signscheme$. Assume a QPT counterfeiter $\adv$ with non-negligible success probability. Recall that $w$, $\ell$ and $v$ are the numbers of successful verifications, runs of $\$_P.\mint$ and runs of $\$_P.\verify$ in the counterfeiting security game, respectively ($w$ and $v$ include runs of $\qverify$ and of $\cverify$). This means that $w > \ell$ with non negligible probability. We assume $w$ and $v$ are polynomial in $\secpar$, otherwise the counterfeiter would not be QPT. Denote by $\ket{\$_j} = (\ket{\psi_j}, s_j, \sigma_j)$ the banknote minted in the $j^{th}$ run of $\$_P.\mint$\footnote{We arbitrarily number the runs of $\$_P.\mint$ according to the order they were initiated.}. Due to the unforgeability of $\signscheme$ (\cref{def:unforgeability_of_digital_signature}), for every successful verification, $\adv$ either sent $\ket{\$} = (\ket{\psi}, s_j, \sigma)$ for some $j \in [\ell], \ket{\psi}, \sigma$ on line \ref{dollar_p_qverify_input} of $\qverify$ or $s_j, \sigma, c$ for some $j \in [\ell], \sigma, c$ on line \ref{dollar_p_cverify_input} of $\cverify$\footnote{Suppose $\adv$ passes with non-negligible probability a verification of some $(\ket{\psi}, s, \sigma)$ on $\qverify$ or a verification of $s, \sigma, c$ on $\cverify$ such that $s \neq s_j \; \forall j \in [\ell]$. In that case $\signscheme.\verify_{pk_\sigma}(s, \sigma) = 1$ with non-negligible probability (either on line \ref{dollar_p_qverify_signature} of $\qverify$ or on line \ref{dollar_p_cverify_signature} of $\cverify$). We could use $\adv$ to construct a digital signature forger $\forger$ with non-negligible success probability: $\forger$ simulates a bank, but instead of signing with $sk_\sigma$ and verifying with $pk_\sigma$ that were generated in $\keygen$ he uses his signing oracle and the real $pk$. He then runs $\adv$ against the simulated bank, and will be able to present $(s, \sigma)$ which pass $\signscheme$ verification with non-negligible probability, while $s \notin Q$ since $s \neq s_j \; \forall j \in [\ell]$.}. Denote by $w_j$ the amount of successful verifications --- either $\qverify$ or $\cverify$ --- made with $s_j$ as input. By the pigeonhole principle, since there are only $\ell$ $s_j$'s but $w > \ell$ successful verifications, $w_i \geq 2$ for some $i \in [\ell]$. For that $i$, there are three possibilities: either the two verifications were $\qverify$, the two verifications were $\cverify$, or there was one of each.
    
    Two successful $\cverify$ runs with the same $s_i$ are not possible, since after each successful run of $\cverify$ the bank adds $s_i$ to $D$, and every subsequent verifications with $s_i \in D$ fails. One successful run of $\qverify$ and one successful run of $\cverify$ are also not possible --- if it were, we could construct a certificate forger $\certforger$ with non-negligible probability: $\certforger$ will simulate the bank and run $\adv$ against it. With non-negligible probability, he will succeed both on a run of $\qverify$ and on a run of $\cverify$: meaning he possesses a quantum state $\ket{\psi}$ such that $\ql.\verbolt(\ket{\psi}, s_i) = 1$ and a certificate $c$ such that $\ql.\vercert(s_i, c) = 1$, in contradiction to the bolt-to-certificate capability of $\ql$ (\cref{def:bolt_to_certificate}).
    
    The only option left is two successful runs of $\qverify$: in which case $\adv$ is in possession of two quantum states $\ket{\psi_1}, \ket{\psi_2}$ such that $\ql.\verbolt(\ket{\psi_1}, s_j)=1$ and $\ql.\verbolt(\ket{\psi_2}, s_j)=1$ (he could not have passed the two verifications with the same quantum state since $\qverify$ entails sending the quantum state to the bank --- hence he necessarily possess two such states). We could then construct a quantum lightning adversary $\boltforger$ that simulates the bank and runs $\adv$ against it. With non-negligible probability, he will end up with some $\ket{\psi_1}$ and $\ket{\psi_2}$ that pass $\ql.\verbolt$ with the same serial number $s_i$, and could use them to win $\boltforge{\ql}$ (see \cref{def:ql_security}), in contradiction to the security of $\ql$.
    
    None of the three options are possible, meaning that any counterfeiter $\adv$ has negligible success probability, i.e., $\$_P$ is secure.
    
    Security against sabotage of $\ql$ directly implies security against sabotage of $\$_P$.
\end{proof}

For convenience, we restate the main theorem of our public semi-quantum money result:

\publicresult*

We can see that \cref{thm:public_scheme_security} proves our main public result.

\fi

\section{Trapdoor Claw-Free Families and 1-of-2 Puzzles}
\label{sec:ntcf_and_1-of-2_puzzles}
In this section, as the name suggests, we discuss the concepts of NTCF and  1-of-2 puzzles. \ifnum\sigconf=0 For completeness, we restate the formal definition of NTCF by Brakerski et al. in ~\cref{sec:definition_NTCF}. \fi In Section~\ref{sec:1_of_2_puzzles}, we introduce 1-of-2 puzzles. In Section~\ref{sec:NTCF_implies_1_of_2_puzzle} we show how to construct a 1-of-2 puzzle using an NTCF, and in Section~\ref{sec:parallel_repetition_theorem_for_1_of_2_puzzles} we show a parallel repetition theorem for 1-of-2 puzzles that is subsequently used to construct strong 1-of-2 puzzles.

\subsection{1-of-2 Puzzles}
\label{sec:1_of_2_puzzles}

\begin{definition}[1-of-2 puzzle]
A 1-of-2 puzzle consists of four efficient algorithms: the puzzle generator $G$, an obligation algorithm $O$, a 1-of-2 solver $S$, and a verification algorithm $V$. $G$ is a classical algorithm, $V$ is a classical deterministic algorithm, and $O$ and $S$ are quantum algorithms. 
\begin{enumerate}
	\item $G$ outputs, on security parameter $\secparam$, a random puzzle $p$ and some verification key $v$: $ (p,v)\sample G(\secparam) $.
	\item $O$ receives a puzzle $p$ as input and outputs a classical string $o$ called the obligation and a quantum state $\rho$: $(o,\rho) \sample O(p)$.
	\item $S$ receives $p,o,\rho$ and a bit $b\in \{0,1\}$ as input and outputs an answer string $a$: $a \sample S(p,o,\rho,b) $.
	\item $V$ receives $p,v,o,b,a$ as input and outputs $0$ or $1$: $V(p,v,o,b,a)\in \{0,1\}$.
\end{enumerate}

Completeness: Let $\eta$ be some arbitrary function $\eta:\NN\mapsto [0,1]$. We say that the 1-of-2 puzzle has completeness $\eta$ if there exists a negligible function $\negl$ such that
\begin{align}
\begin{split}
	\Pr[(p,v) &\sample G(\secparam),(o,\rho) \sample O(p),b\sample \{0,1\},a \sample S(p,o,\rho,b):\\
	&V(p,v,o,b,a)=1]\\
	&\geq \eta(\secpar) - \negl\;.
	\label{eq:1-of-2-completeness}
\end{split}
\end{align}

We define the \nom{$V_2$}{An algorithm that verifies solutions for both challenges of a puzzle}{V2} algorithm as:
\begin{equation}
	V_2(p,v,o,a_0,a_1)=V(p,v,o,0,a_0)\cdot V(p,v,o,1,a_1)\;.
	\label{eq:V_2_definition}
\end{equation}

Hardness: Let $h:\NN \mapsto [0,1]$ be an arbitrary function. We say that the 1-of-2 puzzle $\mathcal{Z}$ is $1-h$-hard if for any QPT 2-of-2 solver \nom{$\solvertwo$}{A QPT 2-of-2 solver --- an adversary that attempts to solve both challenges of a 1-of-2 puzzle}{S} there exists a negligible function $\negl$ such that
\begin{equation}
\Pr[\solveboth{\mathcal{Z}}=1] \leq h(\secpar) + \negl\;.
\label{eq:succ_2_upperbound}
\end{equation}

The 2-of-2 solving game $\solveboth{\mathcal{Z}}$:
\begin{enumerate}
    \item The puzzle giver runs $(p,v) \sample G(\secparam)$
    \item The 2-of-2 solver $\solvertwo$ receives input $p$ and outputs a triple $(o, a_o, a_1)$
    \item The puzzle giver runs $r \gets V_2(p,v,o,a_0,a_1)$ and outputs $r$
    \item $\solvertwo$ wins the game if and only if $r=1$, in which case the output of the game is defined to be 1.
\end{enumerate}

We say that the 1-of-2 puzzle is strong if $\eta=1$ and $h=0$ (i.e., the puzzle is 1-hard). We say that the 1-of-2 puzzle is weak if $\eta=1$ and $1-h$ is noticeable.

\label{def:1-of-2_puzzle}
\end{definition}

\subsection{An NTCF Implies a 1-of-2 Puzzle}
\label{sec:NTCF_implies_1_of_2_puzzle}

This section presents how an NTCF can be used to construct a 1-of-2 puzzle. The formal definition of an NTCF and its properties used in this section can be found in \cref{sec:definition_NTCF}, taken from~\cite{BCM+18}. 

\begin{theorem}
An NTCF implies a 1-of-2 puzzle with completeness $\eta=1$ and hardness $h=\frac{1}{2}$. 
\label{thm:NTCF_implies_1-of-2_puzzles}
\end{theorem}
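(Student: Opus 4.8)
The plan is to construct the four algorithms $(G, O, S, V)$ of the 1-of-2 puzzle directly from the NTCF family, following the intuition already sketched in the introduction (the single-coordinate version of the mini-scheme). First I would set $G(\secparam)$ to sample a function $f$ from the NTCF family together with its trapdoor $t$, outputting the puzzle $p = f$ (the description of the function, sufficient to evaluate it and to prepare the relevant superposition) and the verification key $v = (f, t)$. The obligation algorithm $O(p)$ is the quantum ``minting'' step: prepare $\frac{1}{\sqrt{2^w}}\sum_{x} \ket{x}\ket{f(x)}$ (or, in the noisy case, the corresponding state with the auxiliary register as guaranteed by the NTCF efficient-state-preparation property), measure the image register to obtain $y$, set $o = y$, and let $\rho$ be the post-measurement state, which collapses to (a state negligibly close to) $\frac{1}{\sqrt2}(\ket{x_0} + \ket{x_1})$ with $f(x_0) = f(x_1) = y$. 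The solver $S(p, o, \rho, b)$ branches on the challenge bit: for $b = 0$ (pre-image challenge) it measures $\rho$ in the computational basis and returns the outcome $a = x$; for $b = 1$ (equation challenge) it measures $\rho$ in the Hadamard basis and returns the outcome $a = d$. The verifier $V(p, v, o, b, a)$ uses the trapdoor: on $b = 0$ it checks $f(a) = y$ (via the NTCF checking procedure that handles the noise); on $b = 1$ it uses $t$ to recover $x_0, x_1$ from $y$ and checks that $a = d \neq 0^w$ and $d \cdot (x_0 \oplus x_1) = 0$.

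Next I would verify completeness $\eta = 1$. For $b = 0$ this is immediate: any computational-basis measurement of $\frac{1}{\sqrt2}(\ket{x_0}+\ket{x_1})$ yields a genuine pre-image; the only loss is the negligible error from the NTCF noise/state-preparation guarantees, which is absorbed into $\negl$. For $b = 1$, the Hadamard-basis measurement yields a uniformly random $d$ in the subspace $\{d : d \cdot (x_0 \oplus x_1) = 0\}$ by Eq.~\eqref{eq:1}; this fails only if $d = 0^w$, which happens with probability $2^{-(w-1)}$, again negligible. Averaging over the uniform choice of $b$ gives success probability $1 - \negl$, so $\eta = 1$ is achieved.

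The substantive part is hardness $h = \frac12$: I must show that any QPT 2-of-2 solver $\solvertwo$ wins $\solveboth{\mathcal{Z}}$ with probability at most $\frac12 + \negl$. The key observation is that winning the 2-of-2 game means: given $p = f$, produce $(o, a_0, a_1) = (y, x, d)$ such that $f(x) = y$ (up to the noise check) \emph{and} $d \neq 0^w$ with $d \cdot (x_0 \oplus x_1) = 0$, where $x_0, x_1$ are the two pre-images of $y$. This is almost exactly the event whose probability the adaptive hardcore bit property bounds by $\frac12 + \negl$ — except that the adaptive hardcore bit property as stated in \cite{brakerski18certifiable} bounds the probability of producing $(x, d, c)$ with $f(x) = y$ and $d \cdot (x_0 \oplus x_1) = c$, for a \emph{random} bit $c$, equivalently it bounds $|\Pr[c=0] - \Pr[c=1]|$ or directly gives $\Pr[\text{the } c=0 \text{ event}] \le \frac12 + \negl$ given that $\Pr[\text{the }c=1\text{ event}]$ is also roughly $\frac12$. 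I would reduce cleanly: build an adversary against the adaptive hardcore bit game that runs $\solvertwo$ on the same $f$, receives $(y, x, d)$, checks $d \neq 0^w$ (discarding/failing otherwise, which only costs a negligible amount by completeness-side reasoning and does not help the reduction), and outputs exactly this tuple with the implicit claim $c = 0$. Whenever $\solvertwo$ wins the 2-of-2 game, this adversary satisfies $f(x) = y$ and $d \cdot (x_0 \oplus x_1) = 0$, i.e., hits the $c = 0$ event; the adaptive hardcore bit property then caps this at $\frac12 + \negl$.

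The main obstacle I anticipate is matching the precise form of the adaptive hardcore bit statement in \cite{brakerski18certifiable} to what the 2-of-2 game demands: that statement is phrased in terms of a random $b \in \{0,1\}$ and the probability that $d \cdot (x_0 \oplus x_1) = b$ together with $x$ being a valid pre-image, and one has to be careful about the noisy TCF subtleties — the ``pre-image'' check is really a check on an extended domain via the NTCF's injective-pair structure, and $d$ ranges over an affine condition derived from the actual pair of pre-images, not a naive XOR. So the bulk of the work is bookkeeping: instantiating the NTCF functions $H$, the checking procedure $\mathrm{CHK}$, and the adaptive hardcore bit set $G_{k,0}(\cdot)$ from \cref{sec:definition_NTCF} correctly, confirming that the Hadamard measurement yields $d$ in the guaranteed set, and confirming that the reduction's output lands in the ``bad'' set whose probability is bounded. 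Once the dictionary between the puzzle's $(y, x, d)$ and the NTCF's adaptive hardcore bit variables is fixed, both completeness and hardness follow, and the theorem is proved.
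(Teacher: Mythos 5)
Your high-level plan matches the paper's exactly — instantiate $G,O,S,V$ from the NTCF, use the efficient-range-superposition and trapdoor properties for completeness, and reduce a 2-of-2 solver to an adaptive hardcore bit adversary. But the concrete equation-challenge branch you write down is wrong in a way that breaks completeness, and this is more than the "bookkeeping" you defer at the end.

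The state an honest prover actually holds after $O$ is (negligibly close to) $\tfrac{1}{\sqrt2}\bigl(\ket{0}\ket{x_0}+\ket{1}\ket{x_1}\bigr)$ — an NTCF is a \emph{pair} of functions $f_{k,0},f_{k,1}$ and the leading register carries the index — not $\tfrac{1}{\sqrt2}\bigl(\ket{x_0}+\ket{x_1}\bigr)$. Eq.~\eqref{eq:1} applies to the latter, not to the NTCF state. After applying $\inj$ to the second register and Hadamard to all $w{+}1$ qubits, the measurement yields a pair $(i,d)$ satisfying $i=d\cdot(\inj(x_0)\oplus \inj(x_1))$, where $i$ is (essentially) a uniform bit. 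Your verifier insists on $d\cdot(x_0\oplus x_1)=0$, which the honest solver only achieves when $i=0$ — so your $b=1$ branch would accept with probability $\approx \tfrac12$, and the average completeness of your construction is $\approx \tfrac34$, not $1-\negl$. The fix — which is exactly what the paper's $S_{\mathcal Z}$ and $V_{\mathcal Z}$ do — is to have $S$ output the pair $a=(i,d)$ and $V$ check $d\cdot(\inj(x_0)\oplus \inj(x_1))=i$; this also symmetrizes the $b=0$ branch, whose answer must be $a=(i,x)$ with $x=x_i=\mathrm{INV}_{\mathcal F}(v,i,o)$, since the adaptive hardcore bit property is stated for tuples $(b,x_b,d,c)$ that name which pre-image is held. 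A secondary gap: your $V$ should check $d\in \dset_{p,0,x_0}\cap \dset_{p,1,x_1}$ rather than merely $d\neq 0^w$, since $H_k$ is defined over that set and otherwise the reduction's winning event is not contained in $H_k$. Once both fixes are in place, your hardness reduction is exactly the paper's $\adv_{\mathcal F}$, and the bound $\tfrac12+\negl$ follows from $\Pr[\adv\in H_k]+\Pr[\adv\in\overline H_k]\le 1$ together with Eq.~\eqref{eq:adaptive-hardcore}.
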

Note that the 1-of-2 puzzle above is a weak 1-of-2 puzzle.
\begin{proof}
The proof contains arguments similar to those used by Brakerski et al.~\cite{BCM+18}.

Given an NTCF family $ \mathcal F $ that consists of the algorithms 
\begin{equation*}
    \keygen_{\mathcal F}, \textrm{Inv}_{\mathcal F}, \textrm{CHK}_{\mathcal F}, \textrm{SAMP}_{\mathcal F}, J_{\mathcal F}    
\end{equation*}
we construct the 1-of-2 puzzle \nom{$ \mathcal Z$}{A 1-of-2 puzzle}{Z} $= (\keygen_\mathcal Z, O_\mathcal Z,S_\mathcal Z,V_\mathcal Z)$ as specified in Algorithm~\ref{alg:1-of-2-puzzle_from_NTCF}.
\begin{algorithm*}
    \caption{The 1-of-2 Puzzle $\mathcal Z$}
    \begin{algorithmic}[1] 
        \Procedure{$\keygen_{ \mathcal Z}$}{$\secpar$}
            \State $(k,t_k) \sample\keygen_{\mathcal F}(\secpar)$
            \State Set $p \equiv k$, $v \equiv t_k$ 
            \State \textbf{return} $(p,v)$
        \EndProcedure

    \end{algorithmic}
    \begin{algorithmic}[1] 
        \Procedure{$O_{ \mathcal Z}$}{$p$}
            \State $\ket{\psi'} \sample \textrm{SAMP}_{\mathcal F}(p,\ket{+})$\label{line:O_SAMP}
            \State Measure the last register to obtain an $o \in \mathcal Y$. Denote the post-measurement state $\ket{\psi}$\label{line:O_measure} \Comment{In the completeness we show that  $\ket{\psi}\approx \frac{1}{\sqrt 2} (\ket{0} \ket{x_0}+\ket{1}\ket{x_1})$.}
            \State \textbf{return} $(o,\ket{\psi})$
        \EndProcedure

    \end{algorithmic}
	\begin{algorithmic}[1] 
        \Procedure{$S_{ \mathcal Z}$}{$p,o,\ket{\psi},b$} \Comment{$p$ and $o$ are not used in this construction.}
            \If {$b=0$}
            	\State Measure both registers of $\ket{\psi}$ to obtain a result $i \in \{0,1\}$ and $x \in \mathcal X$.
            	\State Set $a \equiv (i,x)$
            \ElsIf{$b=1$}
            	\State Evaluate the function $J$ on the second register of $\ket{\psi}$, and apply Hadamard transform on both registers.\label{line:apply_Hadamard} 
            	\State Measure both registers to obtain the result $i \in \{0,1\}$ and $d$.\label{line:S_measurement_b_0}
            	\State Set $a \equiv (i,d)$
            \EndIf
            \State \textbf{return} $a$
        \EndProcedure

    \end{algorithmic}
    \begin{algorithmic}[1] 
        \Procedure{$V_{ \mathcal Z}$}{$p,v,o,b,a$}
        	\State Set $x_0\equiv \mathrm{INV}_{\mathcal F}(v,0,o)$ and $x_1\equiv \mathrm{INV}_{\mathcal F}(v,1,o)$ \Comment{Recall that $v$ is the trapdoor, and $o$ is an image of the NTCF.}
            \If{$b=0$}
            	\State Interpret $a$ as $i,x$
            	\If{$x=x_i$} 
            		\State \textbf{return} $1$ \label{line:V_b_equal_1_accept}
            	\Else
            		\State \textbf{return} $0$
            	\EndIf
            \ElsIf{$b=1$}
            	\State Interpret $a$ as $i,d$.
            	\If{$d \in G_{p,0,x_0}\cap G_{p,1,x_1}$ and $d \cdot (J(x_0) \oplus J(x_1))=i$}\label{line:V_if_for_acceptance_b_0} \Comment{This membership test uses $\textrm{CHK}_{\mathcal F}$}
            		\State \textbf{return} $1$
            	\Else
            		\State \textbf{return} $0$
            	\EndIf
            \EndIf 
       	\EndProcedure

    \end{algorithmic}
    \label{alg:1-of-2-puzzle_from_NTCF}
\end{algorithm*}

Completeness: we need to show that Eq.~\eqref{eq:1-of-2-completeness} holds for $\mathcal Z$ defined above.
By \ifnum\sigconf=1 the efficient range superposition property of NTCF~\cite{RS19}, \else NTCF property~\ref{it:NTCF_efficient_range_superposition}, \fi the state $\ket{\psi'}$ in line~\ref{line:O_SAMP} of the algorithm $O_{\mathcal Z}$ is negligibly close in trace distance to:
\begin{equation*}
\ket{\tilde \psi}=\frac{1}{\sqrt{ 2|\mathcal X|}} \sum_{x \in \mathcal X, y\in \mathcal Y, b \in \{0,1\}} \sqrt{(f_{k,b}'(x))(y)}\ket{b,x}\ket{y}\;.
\end{equation*}
For the sake of the analysis, therefore, we can replace $\ket{\psi}$ with $\ket{\tilde \psi}$, and the algorithm will behave the same, up to a negligible probability. By \ifnum\sigconf=1 the injective pair property of NTCF, \else NTCF property~\ref{it:NTCF_injective_pair}, \fi the post-measurement state $\ket{\psi}$ generated by $O_{\mathcal Z}$ is 
$\ket{\psi}=\frac{1}{\sqrt 2} (\ket{0} \ket{x_0}+\ket{1}\ket{x_1})$, where $(x_0,x_1)\in \mathcal{R}_p$. Since $o$ was the outcome of the measurement in line~\ref{line:O_measure}, we know that $o \in \supp{f_{p,i}(x_i)}$. By \ifnum\sigconf=1 the trapdoor property of NTCF, \else NTCF property~\ref{it:NTCF_trapdoor}, \fi for $i\in \{0,1\}$:
\begin{equation}
x_i=\mathrm{INV}_{\mathcal F}(v,i,o)\;.
\label{eq:inv_equal_x_i}
\end{equation}

Consider the case $b=0$. In this case, the output of $S_{\mathcal Z}$ is $a\equiv (i,x_i)$, where, by Eq.~\eqref{eq:inv_equal_x_i}, $x_i=\mathrm{INV}_{\mathcal F}(v,i,o)$. Therefore, $V_{\mathcal Z}$ will return $1$ in line~\ref{line:V_b_equal_1_accept}. In the case of $b=1$, before line~\ref{line:apply_Hadamard} in $S_{\mathcal Z}$ the state is $\frac{1}{\sqrt 2} (\ket{0} \ket{x_0}+\ket{1}\ket{x_1})$, after the evaluation of $J$ on the second register the state is  $\frac{1}{\sqrt 2} \sum_{j\in \{0,1\}}\ket{j} \ket{J(x_j)}$, and after the Hadamard on both registers (which consist of $w+1$ qubits), the state is 
\begin{align*}
&\frac{1}{\sqrt {2^{w+2}}} \sum_{i \in \{ 0,1\}, d\in \{0,1\}^w } \left(\sum_{j \in \{0,1\} } (-1)^{ij + d\cdot J(x_j) }\right)\ket{i}\ket{d} \\
=& \frac{1}{\sqrt {2^{w}}} \sum_{d\in \{0,1\}^w } (-1)^{d\cdot J(x_0)}\ket{d\cdot (J(x_0) \oplus J(x_1)) }\ket{d}\;.
\end{align*}
Therefore, the outcome of the measurement in line~\ref{line:S_measurement_b_0} will provide a random $d\in \{0,1\}^w$ and an $i\in \{0,1\}$ that satisfy $i=d\cdot (J(x_0) \oplus J(x_1))$. Since $d$ is random, \ifnum\sigconf=0 property \ref{it:NTCF_adaptive_hardcore_bit_a} \else the adaptive hardcore bit property \fi guarantees that the first condition in line~\ref{line:V_if_for_acceptance_b_0} of $V_{\mathcal Z}$ will be met (up to a negligible probability), and the analysis in the previous sentence guarantees that the second condition will be met. Overall, the probability that $V_{\mathcal Z}$ outputs $1$ is $1-\negl$, for some negligible function $\negl[]$, as required. 

Soundness: We need to show that Eq.~\eqref{eq:succ_2_upperbound} holds for every QPT $\solvertwo$. In Algorithm~\ref{alg:adversary}, we show a reduction that maps a 2-of-2 solver $\solvertwo$ for the 1-of-2 puzzle as in Eq.~\eqref{eq:succ_2_upperbound} to an NTCF adversary \ifnum\sigconf=0 $\adv$ as in Eq.~\eqref{eq:adaptive-hardcore}. \else $\adv$. \fi   

\begin{algorithm}
    \caption{The Adversary $\adv$}
    \begin{algorithmic}[1] 
        \Procedure{$\adv_{\mathcal F}$}{$k$}
            \State $(o,a_0,a_1) \sample \solvertwo(k)$
            \State Interpret $a_0$ as $i,x$ and $a_1$ as $i',d$.
            \State \textbf{return} $(i,x,d,i')$
        \EndProcedure

    \end{algorithmic}
    \label{alg:adversary}
\end{algorithm}

If $\solvertwo$ succeeds with probability $\frac{1}{2}+\epsilon(\secpar)$ (where $\epsilon(\secpar)$ is not necessarily negligible), \ifnum\sigconf=0 then the l.h.s. in Eq.~\eqref{eq:adaptive-hardcore} is lower-bounded by $2\epsilon(\secpar)$ with respect to $\adv$. Plugging \else then from the adaptive hardcore bit property, plugging \fi the definition of $V_2$ (see Eq.~\eqref{eq:V_2_definition}) and the acceptance criteria of $V_{\mathcal Z}$ into lines~\ref{line:V_b_equal_1_accept} and~\ref{line:V_if_for_acceptance_b_0}, we see that the 2-of-2 solver $\solvertwo$ needs to find $o,i,x,d,i'$ such that $d\in G_{p,0,x_0} \cap G_{p,1,x_1}$ and $x=x_i$, where $x_0=\mathrm{INV}_{\mathcal F}(v,0,o)$, $x_1=\mathrm{INV}_{\mathcal F}(v,1,o)$ and $i'=d\cdot(J(x_0) \oplus J(x_1))$.
This implies the membership of $(i,x,d,i')$ in $H_k$ \ifnum\sigconf=0 (see Eq.~\eqref{eq:defsetsH}\fi).
Therefore, $\Pr_{(k,t_k)\sample \keygen_{\mathcal F}(\secparam)}[\adv(k)\in H_k]\geq \frac{1}{2} + \epsilon(\secpar)$. Since $H_k$ and $\overline{H}_k$ are disjoint, $\Pr_{(k,t_k)\sample \keygen_{\mathcal F}(\secparam)}[\adv(k)\in \overline{H}_k]\leq \frac{1}{2}- \epsilon(\secpar)$, and 
\begin{align*}
\Big|\Pr_{(k,t_k)\leftarrow \keygen_{\mathcal{F}}(\secparam)}&[\mathcal{A}(k) \in H_k] - \Pr_{(k,t_k)\leftarrow \keygen_{\mathcal{F}}(\secparam)}[\mathcal{A}(k) \in\overline{H}_k]\Big| \\&\geq\, 2\epsilon(\secpar)\;.
\end{align*}

 \ifnum\sigconf=0 Since by property \ref{it:NTCF_injective_pair} \else By the injective pair property \fi \ifnum\sigconf=0 (\cref{def:trapdoorclawfree}) the l.h.s. of Eq.~\eqref{eq:adaptive-hardcore} is upper-bounded by the negligible function $\mu(\secpar)$, \fi we conclude that $\epsilon(\secpar)$ must be negligible, as required for $h=\frac{1}{2}$. 

\end{proof}


\subsection{A Parallel Repetition Theorem for 1-of-2 Puzzles} 
\label{sec:parallel_repetition_theorem_for_1_of_2_puzzles}
\begin{definition}[Parallel repetition of 1-of-2 puzzles]
Let $\mathcal{Z}$ be a 1-of-2 puzzle system, and let $n\in \NN$. We denote by $G^n$ the algorithm that, on security parameter $\secpar$, runs $G(\secparam)$ for $n(\secpar)$ times and outputs all the $n$ puzzles with their verification keys: 
\begin{equation}
	((p_1,\ldots,p_n)),(v_1,\ldots,v_n))\sample G^n(\secparam)
	\label{eq:repetition_G_n}
\end{equation}
where $(p_i,v_i)\sample G(1^{\secpar})$. A similar approach is used for all other algorithms in $\mathcal Z$:
\begin{equation}
	((o_1,\ldots,o_n)),(\rho_1\tensor \cdots \tensor \rho_n))\sample O^n(p_1,\ldots,p_n)
	\label{eq:repetition_O_n}
\end{equation}
where $(o_i,\rho_i)\sample O(p_i)$.
\begin{equation*}
	(a_1,\ldots,a_n)\sample S^n((p_1,\ldots,p_n),(o_1,\ldots,o_n),\rho_1\otimes\cdots\otimes \rho_n,b)
\end{equation*}
where $a_i \sample S(p_i,o_i,\rho_i,b)$. The algorithm
\begin{equation*}
	V^n((p_1,\ldots,p_n),(v_1,\ldots,v_n),(o_1,\ldots,o_n),b,(a_1,\ldots,a_n))
\end{equation*}
outputs $1$ iff for all $i \in$ \nom{$[n]$}{We denote $[n] \equiv \{1, \dots, n\}$}{N}, $V(p_i,v_i,o_i,b,a_i)=1$.

The $n$-fold parallel repetition of $\mathcal Z$ is the 1-of-2 puzzle
\begin{equation*}
    \mathcal Z^n=(G^n,O^n,S^n,V^n)\;.
\end{equation*}
\label{def:repetition}
\end{definition}
We emphasize that $\mathcal Z^n$ is a 1-of-2 puzzle (and not a 1-of-$2^n$ puzzle), which explains why the algorithm contains a single challenge bit $b$ rather than $n$ bits. The reason for that should be made clear later --- see Fact~\ref{fct:equivalence_of_hardness}.

It is natural to ask what is the hardness parameter of $\mathcal Z^n$, relative to the hardness of $\mathcal Z$. In the settings of 2-prover games: on the one hand, the soundness parameter decreases exponentially~\cite{Raz98}; yet the exponential decrease is not as fast as one would expect~\cite{Raz11}. 
We say the parallel repetition is \emph{perfect} if the fact that $\mathcal{Z}$ is $1-h$-hard means that $\mathcal{Z}^n$ is $1-h^n$-hard.
Note that a \emph{perfect} parallel repetition means that the adversary can do no better than solving each 1-of-2 puzzle independently (up to a negligible correction). This is indeed the case for 1-of-2 puzzles:

\begin{theorem}[Perfect parallel repetition of 1-of-2 puzzles]
Let $\mathcal Z$ be a 1-of-2 puzzle with completeness $\eta$ and hardness parameter $h$. For a function $n(\secpar)$ that satisfies $n(\secpar)=\poly$, the 1-of-2 puzzle $\mathcal Z^n$ has completeness $\eta^n$ and hardness parameter $h^n$.
\label{thm:parallel_repetition_1_of_2_puzzle}
\end{theorem}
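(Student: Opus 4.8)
The plan is to treat completeness and hardness of $\mathcal{Z}^n$ separately: the former is an elementary independence-plus-convexity computation, and the latter reduces, essentially without loss, to the known parallel repetition theorem for weakly verifiable puzzles.

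\emph{Completeness.} By the coordinate-wise definition of $O^n,S^n,V^n$ in \cref{def:repetition}, once the shared challenge bit $b$ is fixed the $n$ coordinates are executed on fresh, independent randomness and on the product state $\rho_1\tensor\cdots\tensor\rho_n$, so $V^n$ accepts in all coordinates with probability exactly $c_b^{\,n}$, where $c_b:=\Pr[V(p,v,o,b,a)=1]$ for a single honest run of $\mathcal{Z}$ on challenge $b$. The completeness of $\mathcal{Z}$ says $\tfrac12(c_0+c_1)\geq\eta(\secpar)-\negl$, so averaging over $b$ and using convexity of $t\mapsto t^n$ on $[0,1]$,
\begin{equation*}
\Pr[V^n\text{ accepts}]=\tfrac12\big(c_0^{\,n}+c_1^{\,n}\big)\;\geq\;\Big(\tfrac{c_0+c_1}{2}\Big)^{n}\;\geq\;(\eta-\negl)^{n}\;\geq\;\eta^{n}-n\cdot\negl ,
\end{equation*}
which is $\eta^n$ up to a negligible term since $n=\poly$. (That $\mathcal{Z}^n$ is a legal $1$-of-$2$ puzzle — four efficient algorithms — is immediate from $n=\poly$.)

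\emph{Hardness.} The key observation is that the $2$-of-$2$ solving game of any $1$-of-$2$ puzzle \emph{is} a weakly verifiable puzzle in the sense of Canetti--Halevi--Steinke~\cite{CHS05}: the puzzle instance is $p$, the secret verification key is $v$, a candidate solution is the triple $(o,a_0,a_1)$, and the verification predicate is $V_2(p,v,o,a_0,a_1)$ from \eqref{eq:V_2_definition}, which by \cref{def:1-of-2_puzzle} is deterministic and polynomial-time; moreover $G$ is classical, efficient, and outputs the puzzle together with its verification key, exactly as that framework requires. Since $V_2^n=\prod_{i\in[n]}V_2(p_i,v_i,o_i,a_{0,i},a_{1,i})$, the $2$-of-$2$ game of $\mathcal{Z}^n$ is \emph{precisely} the $n$-fold direct product of this weakly verifiable puzzle — this is exactly why $\mathcal{Z}^n$ carries a single challenge bit rather than $n$ of them. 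Hence a QPT $2$-of-$2$ solver for $\mathcal{Z}^n$ is, verbatim and with the same winning probability, a QPT solver for the $n$-fold product, and symmetrically for the single puzzle. I would then invoke the parallel repetition theorem of~\cite{CHS05}: if no size-$s$ solver wins a single puzzle with probability $\geq\epsilon$, then no size-$s\cdot\poly(n,1/\nu)$ solver wins the $n$-fold product with probability $\geq\epsilon^{n}+\nu$, for any slack $\nu>0$. Suppose toward a contradiction that a QPT $\solvertwo$ satisfies $\Pr[\solveboth{\mathcal{Z}^n}=1]\geq h(\secpar)^{n}+1/q(\secpar)$ for a polynomial $q$ and infinitely many $\secpar$. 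Applying~\cite{CHS05} with slack $\nu=\tfrac{1}{2q}$ — an inverse polynomial, so the reduction stays polynomial-time — produces a $\poly(\secpar)$-time single-puzzle solver that wins with probability $\geq\epsilon$, where $\epsilon^{n}=h^{n}+\tfrac{1}{2q}$; since $\tfrac{d}{dt}t^{n}=nt^{n-1}\leq n$ on $[0,1]$ we get $\epsilon-h\geq\tfrac{\epsilon^{n}-h^{n}}{n}=\tfrac{1}{2nq}$, a noticeable advantage, contradicting that $\mathcal{Z}$ is $1-h$ hard. Therefore every QPT $2$-of-$2$ solver wins against $\mathcal{Z}^n$ with probability at most $h^{n}+\negl$, i.e., $\mathcal{Z}^n$ has hardness parameter $1-h^{n}$.

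Two points need care, and the second is where I expect the real work to lie. First, \cite{CHS05} is stated for classical circuits whereas $\solvertwo$ is quantum; this is harmless, because that reduction uses the product solver only as a black box, a polynomial number of times on \emph{fresh, independent} inputs and randomness (to estimate conditional success probabilities and to decide where to embed the target puzzle) and never rewinds a single execution, so it lifts unchanged to QPT solvers. Second, one must track parameters: the repetition theorem carries a slack $\nu$, and the argument closes only because $\nu$ may be chosen to be an inverse polynomial (keeping the reduction efficient) while the gained advantage $\tfrac{1}{2q}$ over $h^{n}$ still survives the $n$-th root as the noticeable quantity $\tfrac{1}{2nq}$ — which uses $n=\poly$ essentially. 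If one insisted on the literally \emph{perfect} form $\epsilon^{n}$ with no slack, one could instead cite a perfect parallel repetition statement for weakly verifiable puzzles, but the inverse-polynomial-slack version already yields the asymptotic claim stated here.
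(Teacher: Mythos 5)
Your proof follows essentially the same route as the paper: completeness by a direct probabilistic computation, and hardness by identifying the $2$-of-$2$ game $(G,V_2)$ as a weakly verifiable puzzle, observing that the $2$-of-$2$ game of $\mathcal Z^n$ is exactly its $n$-fold product, and invoking the Canetti--Halevi--Steiner parallel repetition theorem, with the same remark that their reduction is black-box and rewinding-free and hence lifts to QPT solvers. One small point in your favor on the completeness side: the paper asserts that $\mathcal Z^n$ is an ``independent repetition'' and so accepts with probability exactly $(\eta-\negl)^n$, but because the single challenge bit $b$ is shared across all $n$ coordinates this is not literally an independent repetition; your version conditions on $b$, writes the acceptance probability as $\tfrac12(c_0^n+c_1^n)$, and invokes convexity of $t\mapsto t^n$ to recover the lower bound $\bigl(\tfrac{c_0+c_1}{2}\bigr)^n\geq(\eta-\negl)^n$, which is the more careful way to reach the same conclusion. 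On the hardness side your slack-$\nu$ bookkeeping around the CHS05 theorem is a concrete unpacking of the asymptotic statement the paper cites directly; both readings of that theorem give $1-h^n$-hardness in the sense of \cref{def:hardness_puzzle}.
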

\begin{proof}

First we prove the completeness property (see Eq.~\eqref{eq:1-of-2-completeness}). For ease of notation, we write $n,\negl[],\eta$ ,etc., instead of $n(\secpar)$, $\negl$, $\eta(\secpar)$. Suppose that the success probability of $\mathcal Z$ is $\eta - \negl[]$ for some negligible function $\negl[]$. Since the repeated game $\mathcal Z^n$ is an independent repetition of $\mathcal Z$, its success probability is $(\eta - \negl[])^n$. We show that for the negligible function $\negl[]'=n^2 \negl$, indeed $(\eta - \negl[])^n\geq \eta^n - \negl[]'$:
\begin{align*}
(\eta - \negl[])^n &= \eta^n + \sum_{k=1}^{n} (-1)^k \binom{n}{k}\eta^{n-k}\negl[]^k\\
&\geq \eta^n - \sum_{k=1}^n n^k \negl[]^k \\
&\geq \eta^n - \sum_{k=1}^n n \cdot \negl[]  = \eta^n-\negl[]'\;,
\end{align*}
where the last inequality holds for all $\secpar \geq \secpar_0$ (where $n \cdot \negl[] \leq 1 $).

We are now ready to prove the soundness. Our main tool is the notion of a weakly verifiable puzzle system defined by Canetti, Halevi and Steiner:
\begin{definition}[A weakly verifiable puzzle, adapted from~\cite{CHS05}]
A system for weakly verifiable puzzles consists of a pair of efficient classical algorithms \nom{$\hat{\mathcal Z}$}{A weakly verifiable puzzle}{Z} $= (G,V)$ such that
\begin{enumerate}
	\item The puzzle generator $G$ outputs, on security parameter $\secpar$, a random puzzle $p$ along with some verification  information $v$: $ (p,v)\sample G(\secparam) $.
	\item The puzzle verifier $V$ is a deterministic efficient classical algorithm that, on input of a puzzle $p$, verification key $v$, and answer $a$, outputs either zero or one: $V(p,v,a)\in \{0,1\}$.
\end{enumerate}
\label{def:puzzle}
\end{definition}
The hardness of a weakly verifiable puzzle is defined as follows:
\begin{definition}[Hardness of a weakly verifiable puzzle, adapted from~\cite{CHS05}]
Let $h:\NN \mapsto [0,1]$ be an arbitrary function. A weakly verifiable puzzle $\hat{\mathcal Z}$ is said to be $1-h$-hard if, for any QPT\footnote{In Ref.~\cite{CHS05} this notion is defined for any PPT algorithm.} algorithm $S$, there exists a negligible function $\negl$ such that:
\begin{equation*}
    \Pr[\solve{\hat{\mathcal{Z}}}] \leq h(\secpar) + \negl\;.
\end{equation*}
\label{def:hardness_puzzle}
\end{definition}

The event $\solve{\hat{\mathcal{Z}}}$ is defined by the following security game:
\begin{enumerate}
    \item The puzzle giver runs $(p,v) \sample G(\secparam)$.
    \item The solver $\solver$ is given input $p$ and outputs an answer $a$.
    \item The puzzle giver runs $r \gets V(p,v,a)$. The event $\solve{\hat{\mathcal{Z}}}$ is when $r=1$.
\end{enumerate}

To avoid confusion, we always use $\mathcal{Z}$ to denote a 1-of-2 puzzle and $\hat{\mathcal Z}$ to denote a weakly verifiable puzzle. 

\begin{definition}[Repetition of weakly verifiable puzzles, from~\cite{CHS05}]
Let $\hat{\mathcal Z} = (G,V)$ be a weakly verifiable puzzle system, and let $n:\NN \mapsto \NN$ be some function. We denote by $G^n$ the algorithm that, on security parameter $\secpar$, runs $G(\secparam)$ for $n(\secpar)$ times and outputs all the $n$ puzzles with their respective verification keys: 
\begin{equation*}
	((p_1,\ldots,p_n)),(v_1,\ldots,v_n))\sample G^n(\secparam)
\end{equation*}
where $(p_i,v_i)\sample G^n(\secparam)$. $V^n$ receives $n$ inputs and accepts if and only if all $n$ runs of $V$ accept:
\begin{equation*}
	 V^n((p_1,\ldots,p_n),(v_1,\ldots,v_n),(a_1,\ldots,a_n))\equiv \prod_{i=1}^{n(\secpar)} V(p_i,v_i,a_i)\;.
\end{equation*}
\label{def:repetition_weakly_verifiable_puzzle}
\end{definition}
There is a tight relation between the hardness of a 1-of-2 puzzle and the hardness of a weakly verifiable puzzle. Given a 1-of-2 puzzle $\mathcal Z=(G,O,S,V)$, we define the weakly verifiable puzzle 
\begin{equation*}
	\mathcal{\hat Z}=(G,V_2)
\end{equation*}
(where $V_2$ is defined in Eq.~\eqref{eq:V_2_definition}).
\begin{fact}
For every polynomially bounded function $n:\NN \mapsto \NN$, the 1-of-2 puzzle $\mathcal Z^n$ is $1-h$-hard if and only if the weakly verifiable puzzle $\mathcal{\hat Z}^n$ is $1-h$-hard.
\label{fct:equivalence_of_hardness}
\end{fact}
This fact follows from the observation that the hardness property of the 1-of-2 puzzle $\mathcal Z$ is equivalent to the hardness of the weakly verifiable puzzle $\hat{\mathcal Z}$ (see Definitions~\ref{def:puzzle} and~\ref{def:1-of-2_puzzle}). Furthermore, the hardness of $\mathcal Z^n$ is equivalent to the hardness of $\hat{\mathcal Z}^n$ (see Definitions~\ref{def:repetition} and~\ref{def:repetition_weakly_verifiable_puzzle}).

Canetti, Halevi and Steiner proved a parallel repetition theorem for weakly verifiable puzzles. 
\begin{theorem}[{\cite[Theorem 1]{CHS05}}]
Let $h:\NN \mapsto [0,1]$ be an efficiently computable function, let $n:\NN \mapsto \NN$ be efficiently computable and polynomially bounded, and let $\hat{\mathcal Z}=(G,V)$ be a weakly verifiable puzzle system. If $\hat{\mathcal Z}$ is $1-h$-hard, then $\hat{\mathcal Z}^n$, the n-fold repetition of $\hat{\mathcal Z}$, is $1-h^n$-hard.
\label{thm:parallel_repetition_weakly_verifiable_puzzles}
\end{theorem}

Although the original proof of Canetti, Halevi and Steiner assumed that the hardness is with respect to a classical solver, the result holds also when we consider our definition, in which the solvers are quantum. The reason is as follows. Their proof maps an efficient solver of the n-fold repetition of a puzzle, which succeeds with probability which is non-negligibly higher than $1-h^n$, to an efficient solver that succeeds with probability non-negligibly higher than $1-h$ for a single puzzle, which is of course a contradiction. This reduction is black-box, and in particular there is no rewinding (which, of course, could cause an issue in the quantum setting)\footnote{The weakly verifiable puzzle $\hat{Z}$ constructed from a 1-of-2 puzzle is classical --- each attempt to solve the puzzle creates a new quantum state, so even when running a solver for a puzzle multiple times there is indeed no rewinding.}. For completeness, a sketch of their proof is provided in \cref{sec:parallel_repetition_weakly_verifiable_puzzles}.

We use Theorem~\ref{thm:parallel_repetition_weakly_verifiable_puzzles} to prove the soundness of the 1-of-2 puzzle $\mathcal Z^n$. We assume $\mathcal Z=(G,O,S,V)$ is $1-h$-hard. We define the weakly verifiable puzzle $\hat{ \mathcal Z}=(G,V_2)$. By the equivalence in Fact~\ref{fct:equivalence_of_hardness}, we know that $\hat{ \mathcal Z}$ is also $1-h$-hard. By Theorem~\ref{thm:parallel_repetition_weakly_verifiable_puzzles}, we know that $\hat{ \mathcal Z}^n$ is $1-h^n$-hard. Using the equivalence in Fact~\ref{fct:equivalence_of_hardness} again, we conclude that $\mathcal Z^n$ is $1-h^n$-hard, which completes the proof.      
\end{proof}

\begin{corollary}
A weak 1-of-2 puzzle implies a strong 1-of-2 puzzle.
\label{cor:weak_implies_strong_puzzle}
\end{corollary}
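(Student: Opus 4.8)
The plan is to apply the parallel repetition theorem (\cref{thm:parallel_repetition_1_of_2_puzzle}) with a polynomial number of repetitions chosen large enough to drive the $2$-of-$2$ success probability down to negligible, while observing that perfect completeness is inherited by the repeated puzzle.

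In detail: let $\mathcal{Z}$ be a weak $1$-of-$2$ puzzle, so by \cref{def:1-of-2_puzzle} it has completeness $\eta = 1$ and is $1-h$ hard for some $h$ such that $1-h$ is noticeable; fix a polynomial $q$ with $h(\secpar) \le 1 - \tfrac{1}{q(\secpar)}$ for all sufficiently large $\secpar$. Set $n(\secpar) \defeq \secpar \cdot q(\secpar)$. This is polynomially bounded, so $\mathcal{Z}^n$ (\cref{def:repetition}) is again a bona fide efficient $1$-of-$2$ puzzle and \cref{thm:parallel_repetition_1_of_2_puzzle} applies to it.

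By that theorem, $\mathcal{Z}^n$ has completeness $\eta^n = 1^n = 1$ — unwinding its proof, the completeness game of $\mathcal{Z}^n$ succeeds with probability at least $1 - \negl'$ for a negligible $\negl'$, which is exactly "$\eta = 1$" in the sense of \eqref{eq:1-of-2-completeness} — and (via \cref{fct:equivalence_of_hardness} together with the CHS05 parallel repetition, \cref{thm:parallel_repetition_weakly_verifiable_puzzles}) it is $1-h^n$ hard, i.e.\ every QPT $2$-of-$2$ solver wins $\solveboth{\mathcal{Z}^n}$ with probability at most $h(\secpar)^{n(\secpar)} + \negl$. Using $1 - x \le e^{-x}$,
\[
	h(\secpar)^{n(\secpar)} \;\le\; \Bigl(1 - \tfrac{1}{q(\secpar)}\Bigr)^{\secpar\,q(\secpar)} \;=\; \Bigl(\bigl(1 - \tfrac{1}{q(\secpar)}\bigr)^{q(\secpar)}\Bigr)^{\secpar} \;\le\; e^{-\secpar},
\]
which is negligible, so the $2$-of-$2$ success probability of $\mathcal{Z}^n$ is negligible; hence $\mathcal{Z}^n$ is $1-h$ hard with $h = 0$. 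Combining the two, $\mathcal{Z}^n$ has $\eta = 1$ and $h = 0$, i.e.\ it is a strong $1$-of-$2$ puzzle, which proves the corollary.

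I do not expect a genuine obstacle here. The only points that require care are that "noticeable" must be used to extract a \emph{single} polynomial $q$ valid for all large $\secpar$ (so that $h$ is bounded away from $1$ uniformly in $\secpar$ and the chosen $n$ is a genuine polynomial), and that the negligible slack introduced by the repetition still falls under "completeness $\eta = 1$" — both of which are already handled by the statement and proof of \cref{thm:parallel_repetition_1_of_2_puzzle}.
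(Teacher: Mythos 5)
Your argument is correct and follows the same route as the paper: apply \cref{thm:parallel_repetition_1_of_2_puzzle} with a polynomial number of repetitions $n$, chosen so that $h^n$ becomes negligible, and observe that completeness $\eta=1$ is preserved. The only cosmetic difference is the choice of $n$ — you take $n=\secpar\cdot q(\secpar)$ for a polynomial $q$ witnessing noticeability of $1-h$ (giving $h^n\le e^{-\secpar}$), whereas the paper takes $n=\log^2(\secpar)/\log(1/h)$ (giving $h^n=\secpar^{-\log\secpar}$); both are polynomial and both yield negligible hardness, so the proofs are essentially interchangeable.
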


Note that we define a weak 1-of-2 puzzle to have completeness $\eta=1$. We refrain from answering the question whether any puzzle in which $\eta(\secpar)-h(\secpar)$ is noticeable, implies a strong puzzle.

\begin{proof}
By using Theorem~\ref{thm:parallel_repetition_1_of_2_puzzle} with $n(\secpar)=\frac{\log^2(\secpar)}{\log(\frac{1}{h})}$ repetitions\footnote{Note that $n(\secpar)$ is indeed polynomial in $\secpar$ - since a weak 1-of-2 puzzle holds that $1-h$ is noticeable (see \cref{def:1-of-2_puzzle}), by using the inequality $\ln(1-\varepsilon) \leq -\varepsilon$ we get that $\log(1/h)$ is noticeable.} of the weak $h$-hard 1-of-2 puzzle, we construct a $1$-complete \footnote{Recall that a weak 1-of-2 puzzle has completeness $\eta=1$ (see \cref{def:1-of-2_puzzle}).}, $1-h^{n}=1-\frac{1}{\secpar^{\log(\secpar)}}=1-\negl$-hard 1-of-2 puzzle. Note that a $1-\negl$-hard 1-of-2 puzzle is equivalent to a $1$-hard 1-of-2 puzzle, which completes the proof.
\end{proof}


\section{Strong 1-of-2 Puzzles Imply Semi-Quantum Money}
\label{sec:strong_1-of-2_puzzles_imply_money}
In this section, we show a construction of a private semi-quantum money scheme using strong 1-of-2 puzzles.

In \cref{sec:private_money_definitions}, we define \emph{interactive} private quantum money. We define three degrees of security. Full scheme security means that every QPT counterfeiter cannot pass $t+1$ verifications given $t$ quantum money states. We define mini-scheme security as a weaker variant of full security, which is secure only when the adversary is given a single banknote. Finally, we define 2-of-2 mini-scheme security as an even weaker variant wherein the adversary does not have a banknote verification oracle. We also formally define semi-quantum money.

In \cref{sec:construction_of_mini_scheme}, we show the construction of a 2-of-2 mini-scheme, and show that our 2-of-2 mini-scheme is in fact a mini scheme (see \cref{def:QM_mini_scheme}).

In \cref{sec:mini_scheme_implies_full_scheme}, we show that any (interactive private quantum money) mini scheme can be elevated to a full (interactive private quantum money) scheme --- see \cref{def:security_private_interactive_money}.

\subsection{Definitions of Private Semi-Quantum Money} 
\label{sec:private_money_definitions}

The following definitions are slight variations of the definitions in \cref{sec:public_money_definitions}.

\begin{definition}[Interactive memoryless private quantum money]\label{def:interactive_private_quantum_money} An interactive memoryless private quantum money scheme consists of a classical PPT key generation algorithm $\keygen$ and two-party interactive memoryless QPT protocols $\mint$ and $\verify$. $\keygen(\secparam)$ outputs a key $k$. Both the minting protocol and the verification protocol are two-party quantum protocols involving an Acquirer $\acquirer$ and a Bank $\bank$. During both protocols, the bank receives the key $k$ as input, and the user does not.
At the end of the honest run of $\mint$, the user holds a quantum money state that, in general, could be a mixed state. In this work, the protocols will end with a pure state, usually denoted $\ket{\$}$. In the following sections, for the sake of clarity, we work with the pure-state formalism. The banknote the user chooses to verify is denoted in this work as the input of the $\verify$ protocol. At the end of the verification protocol, the bank outputs a bit $b$ that states whether the money is valid or not.
\paragraph{Correctness.} The scheme is \emph{correct} if there exists a negligible function $\negl$ such that:
\begin{align*}
    \Pr(k &\sample \keygen(\secparam); \ket{\$}\sample \mint_{k}(\secparam); b \sample\verify_{k}(\ket{\$}):\\
 &b=1) = 1 - \negl\;.
\end{align*}
\end{definition}

\begin{definition} We say that the protocol has classical minting (verification) if $\bank$ is classical in $\mint$ $(\verify)$. To emphasize that the verification is classical, we use $\cverify$ to denote the (classical) verification algorithm. We define private semi-quantum money as any secure memoryless interactive private quantum money protocol that has classical minting \emph{and} classical verification.  
\label{def:semi_classical}
\end{definition}

In the quantum setting, there are a number of possible verifications with different qualities; a notable quality is whether the verification "destroys" the banknote (i.e., whether the banknote can be used again after verification). This distinction can be thought of as the difference between verifying --- proving that a legal money state exists --- and spending --- proving a legal money state does not exist --- and it becomes more interesting when considering the public setting. There, a banknote can be spent with the bank in the same manner as in the private setting, but it can also be verified with other users. In such a case it is important that the banknote is preserved, so it could be transferred. Another distinction is added by the introduction of classically verified money: whether the verification is a classical or quantum protocol. Moreover, a classical verification must be a challenge-response protocol --- otherwise the same proof can be passed twice, effectively spending the same banknote twice. In our scheme, verification is classical and does not preserve the banknote, proving both that it existed and that it does not exist anymore.

In this definition, we emphasize that the protocols $\mint$ and $\verify$ are \emph{memoryless}: i.e., all outgoing messages depend solely on the key and the input from the user. In other words, the bank does not maintain a variable state that changes between different runs of the protocols --- each run is independent. Constructing a stateful scheme is trivial even in the classical setting, as discussed in \ifnum\sigconf=0 \cref{sec:advantage_of_statelessness}\else the full version~\cite{RS19}\fi. In addition, it is interesting to note that our protocols are composed of a fixed number of messages, independent of the security parameter: $\verify$ has 2 messages (a single round) and $\mint$ has 3 messages.

\begin{definition}\label{def:security_private_interactive_money}
We say that an interactive private quantum money scheme $\$$ is secure if for every QPT counterfeiter \nom{$\adv$}{Counterfeiter of a full blown private quantum money scheme (also NTCF adversary in \cref{sec:ntcf_and_1-of-2_puzzles})}{A} there exists a negligible function $\negl$ such that:
\begin{equation*}
    \Pr[\counterfeit{\fulladv}{\$}{full} = 1] \leq \negl\;.
\end{equation*}

The money counterfeiting game $\counterfeit{\adv}{\$}{full}$:
\begin{enumerate}
  \item The bank generates a key $k\sample \keygen(\secparam)$.
  \item The bank and the counterfeiter interact. The counterfeiter can ask the bank to run $\mint_{k}(\cdot)$ and $\verify_{k}(\cdot)$ polynomially many times, in any order the counterfeiter wishes. The counterfeiter is not bound to following his side of the protocols honestly. The counterfeiter can keep ancillary registers from earlier runs of these protocols and use them in later steps. Let $w$ be the number of successful verifications, $\ell$ the number of times that mint was called by the counterfeiter and $v$ the number of times that verify was called by the counterfeiter.
  \item The bank outputs $(w,\ell,v)$.
\end{enumerate}

The value of the game is 1 iff $w > \ell$. In this case we sometimes simply say that the counterfeiter wins.
\end{definition}

We state here again that this security definition is stronger than the one previously used --- this is further explained in \cref{rmk:stronger_security_definition}.

Following previous works~\cite{AC13,GK15}, we define a private quantum money mini-scheme, with a slight deviation. Additionally, we define a 2-of-2 mini-scheme, which is a weaker variant of the mini-scheme.

\begin{definition}[quantum money mini-scheme and 2-of-2 mini-scheme]
\label{def:QM_mini_scheme}
\label{def:QM_2-of-2_mini_scheme}
    We define mini-scheme security as we defined full scheme security but with regard to $\counterfeit{\miniadv}{\$}{mini}$, wherein the counterfeiter \nom{$\miniadv$}{Counterfeiter of a private quantum money mini-scheme}{B} wins iff $w > \ell \wedge \ell = 1$.

    We define 2-of-2 mini-scheme security as we did above but with regard to $\counterfeit{\twominiadv}{\$}{2-of-2}$, where the counterfeiter \nom{$\twominiadv$}{Counterfeiter of a private quantum money 2-of-2 mini-scheme (also quantum lightning certificate forger)}{C} wins iff $w > \ell \wedge \ell = 1 \wedge v = 2$.
\end{definition}

Note that the definitions in this sections could be naturally extended to the public settings.

\subsection{Construction of a Mini-Scheme}
\label{sec:construction_of_mini_scheme}
In this section, we show the construction of a scheme that we then prove to be a 2-of-2 semi-quantum mini-scheme. Later we prove that our construction in fact achieves a stronger security notion --- a semi-quantum mini-scheme.

We now give an informal description of our construction, which is defined formally in \cref{alg:mini_scheme_construction}. The construction uses
a strong 1-of-2 puzzle \ifnum\sigconf=1 (see \cref{def:1-of-2_puzzle}) \fi and a post-quantum existentially unforgeable under an adaptive chosen-message attack (PQ-EU-CMA) MAC \ifnum\sigconf=0 (see Definitions~\ref{def:1-of-2_puzzle},~\ref{def:MAC} and~\ref{def:unforgeability_of_MAC})\fi. In $\keygen$, the bank generates a MAC signing key and $n$ pairs of strong 1-of-2 puzzles and their respective verification keys. The minting process is done as follows. The bank sends these $n$ puzzles to the user, who then runs the obligation protocol $\mathcal Z.O$ on all the $n$ puzzles. The user keeps the quantum output of $O$ and sends the classical outputs (called the obligations) to the bank. The bank signs these obligations using the classical MAC scheme and sends these tags back to the user. The verification starts with the bank sending random challenges to the user. The user then has to present a set of signed obligations (which the user should have from the $\mint$ protocol) together with a set of solutions to the challenges of these puzzles. The bank verifies the solution to each puzzle with its respective verification key (the set of verification keys is part of the key). Due to the fact that this verification is classical, it is denoted $\cverify$. We show that a counterfeiter cannot double-spend a banknote without breaking the soundness of a strong 1-of-2 puzzle (or the security of the MAC).

Intuitively, an adversary could try to double-spend the banknote using the solutions he received from the first verification, while hoping to be given the same challenges. However, assuming a sufficiently large number of puzzles (say, $n=\log^2(\secpar)$), the probability of encountering the exact same set of challenges more than once is negligible. Passing two verifications of any banknote in which the challenges were not the same both times essentially requires one to pass the $\mathsf{SOLVE-2}$ security game for the 1-of-2 puzzle. Insofar as this is considered a strong 1-of-2 puzzle, the probability that it can occur is therefore negligible.

\begin{algorithm*}
\caption{The Interactive Private Money Scheme \nom{$\$_\mathcal{Z}$}{Our private semi-quantum money construction based on 1-of-2 puzzles}{DOLLAR_Z}}
\label{alg:mini_scheme_construction}

    \fbox{
    \procedure[linenumbering]{$\$_\mathcal{Z}.\keygen(\secparam)$}{
        n \gets \log^2(\secpar)\\
        \pcforeach i \in [n]:\\
        \;\;\;\;(p_i, v_i) \gets \mathcal{Z}.G(\secparam)\\
        \overline{p}^n \gets (p_1, \dots, p_n), \overline{v}^n \gets (v_1, \dots, v_n)\\
        k \gets MAC.\keygen(\secparam)\\
        k_\$ \gets (\overline{p}^n, \overline{v}^n, k)\\
        \textbf{return} \; k_\$
    }
    }
    
\fbox{
\pseudocode[head = $\$_\mathcal{Z}.\mint_{k_\$}$]{
    \textbf{Acquirer} \<\< \textbf{Bank} \\[] [\hline]
    \pcln \< \sendmessageleft*{\overline{p}^n} \< \\
    \pcln \pcforeach i \in [n]:\\
    \pcln \;\;\;\;(o_i, \psi_i) \gets \mathcal{Z}.O(p_i)\\
    \pcln \overline{o}^n \gets (o_1, \dots, o_n), \overline{\psi}^n \gets (\psi_1, \dots, \psi_n)\\
    \pcln \label{{dollar_z:mint:obligation}}\< \sendmessageright*{\overline{o}^n} \< \\
    \pcln \<\< t_o \gets MAC.\mac_{k}(\overline{o}^n) \\
    \pcln \< \sendmessageleft*{t_o} \<
}
}

\fbox{
\pseudocode[head = $\$_\mathcal{Z}.\cverify_{k_\$}({\overline{o}^n, t_o, \overline{\psi}^n})$]{
    \textbf{Acquirer} \<\< \textbf{Bank} \\[] [\hline]
    \pcln \label{{dollar_z:verify:b}}\< \sendmessageleft*{\overline{b}^n \text{\nom{$\in_R$}{For a finite set $S$ we denote $s \in_R S$ to be the process in which $s$ is sampled uniformly from $S$}{IN_R}} \{0,1\}^n} \< \\
    \pcln \pcforeach i \in [n]:\\
    \pcln \;\;\;\;a_i \gets \mathcal{Z}.S(p_i, o_i, \ket{\psi_i}, b_i) \\
    \pcln \label{{dollar_z:verify:answer}}\< \sendmessageright*{\overline{a}^n, \overline{o}^n, t_o} \< \\
    \pcln \label{{dollar_z:verify:mac}}\<\< r_{MAC} \gets MAC.\verify_{k}(\overline{o}^n, t_o) \\
    \pcln \<\< \pcforeach i \in [n]:\\
    \pcln \label{{dollar_z:verify:z_v}} \<\< \;\;\;\; r_i \gets \mathcal{Z}.V(p_i, v_i, o_i, b_i, a_i) \\
    \pcln \label{{dollar_z:verify:result}} \<\< r \gets r_{MAC} \cdot \prod_{i=1}^n r_i \\
    \pcln \label{{dollar_z:verify:return}} \<\< \pcreturn r
}
}
\end{algorithm*}

For ease of notation, we write:

\begin{itemize}
    \item $\overline{p}^n \coloneqq (p_1, \dots, p_n)$
    \item $\overline{v}^n \coloneqq (v_1, \dots, v_n)$
    \item $\overline{o}^n \coloneqq (o_1, \dots, o_n)$
    \item $\overline{\psi}^n \coloneqq \ket{\psi_1} \otimes \dots \otimes \ket{\psi_n}$
    \item $\overline{b}^n \coloneqq (b_1, \dots, b_n)$
    \item $\overline{a}^n \coloneqq (a_1, \dots, a_n)$
\end{itemize}

\begin{proposition}[Correctness of $\$_\mathcal{Z}$]
    Assuming MAC has perfect completeness and $\mathcal{Z}$ is a 1-of-2 puzzle with completeness $\eta=1$, $\$_\mathcal{Z}$ (\cref{alg:mini_scheme_construction}) is a semi-quantum money scheme that satisfies the correctness property (see \cref{def:interactive_private_quantum_money}).
    \label{prop:mini_scheme_construction_correctness}
\end{proposition}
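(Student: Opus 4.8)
The plan is to trace an honest execution of $\mint$ immediately followed by $\cverify$ and to show that every test the bank performs succeeds, up to a negligible slack. Before that I would record the two structural facts implicit in the assertion that $\$_\mathcal{Z}$ ``is a semi-quantum money scheme'': it is \emph{memoryless}, because in both protocols the bank's outgoing messages depend only on the key $k_\$$ and the last message received (the first $\mint$ message is just $\overline{p}^n$, its reply is $MAC.\mac_{k}(\overline{o}^n)$; in $\cverify$ the first message is fresh randomness and the final bit depends only on $k_\$$ and the received transcript); and it has \emph{classical minting and classical verification} (\cref{def:semi_classical}), because the bank never does anything but sample randomness, run the classical MAC algorithms, and run the classical deterministic verifier $\mathcal{Z}.V$. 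What then remains is the correctness bound of \cref{def:interactive_private_quantum_money}.

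Concretely, in the honest run the bank produces $k_\$=(\overline{p}^n,\overline{v}^n,k)$ and sends $\overline{p}^n$; the user computes $(o_i,\ket{\psi_i})\gets\mathcal{Z}.O(p_i)$ for each $i\in[n]$ and returns $\overline{o}^n$; the bank returns $t_o\gets MAC.\mac_{k}(\overline{o}^n)$, so the banknote is $(\overline{o}^n,t_o,\overline{\psi}^n)$. In $\cverify$ the bank sends a uniform $\overline{b}^n\in\{0,1\}^n$, the user replies with $a_i\gets\mathcal{Z}.S(p_i,o_i,\ket{\psi_i},b_i)$ together with the same $\overline{o}^n$ and $t_o$ it is holding, and the bank accepts iff $r_{MAC}=MAC.\verify_{k}(\overline{o}^n,t_o)=1$ and $r_i=\mathcal{Z}.V(p_i,v_i,o_i,b_i,a_i)=1$ for every $i\in[n]$. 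So I must lower bound the probability that $r_{MAC}=1$ and all $r_i=1$ simultaneously.

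For the MAC test, the pair $(\overline{o}^n,t_o)$ checked in $\cverify$ is literally the output of $MAC.\mac_{k}$ computed in $\mint$, so perfect completeness of the MAC gives $\Pr[r_{MAC}=0]=0$. For a fixed $i$, the tuple $(p_i,v_i,o_i,\ket{\psi_i})$ together with the fresh uniform bit $b_i$ is distributed exactly as in the single-puzzle completeness experiment \eqref{eq:1-of-2-completeness}, since $\keygen$ generates the $n$ puzzle/key pairs by independent runs of $\mathcal{Z}.G$, $\mint$ runs $\mathcal{Z}.O$ on each independently, and in $\cverify$ the honest user applies $\mathcal{Z}.S$ to its own register $\ket{\psi_i}$ only; hence $\eta=1$ gives $\Pr[r_i=0]\leq\negl$ for a negligible function that may be taken the same for all $i$. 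A union bound over the $n=\log^2(\secpar)$ coordinates then yields
\begin{equation*}
    \Pr[\,r=0\,]\;\leq\;\Pr[r_{MAC}=0]+\sum_{i=1}^n\Pr[r_i=0]\;\leq\;0+n\cdot\negl,
\end{equation*}
which is negligible because $n=\poly$; equivalently one may quote the completeness half of \cref{thm:parallel_repetition_1_of_2_puzzle} ($\eta^n=1$). Therefore $\Pr[r=1]=1-\negl$, as required.

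The only point needing care — and the step I would double-check — is the claim that carrying the quantum registers $\overline{\psi}^n=\ket{\psi_1}\otimes\cdots\otimes\ket{\psi_n}$ from $\mint$ into $\cverify$ leaves, for each coordinate $i$, the joint distribution of $(p_i,v_i,o_i,b_i,a_i)$ equal to the one in \eqref{eq:1-of-2-completeness}; this follows from the product structure of the state and the fact that $\mathcal{Z}.S$ on coordinate $i$ touches only register $i$, but it is the place where one could slip. Once it is granted, the $n=\poly$ union bound is immediate and everything else is routine bookkeeping; no rewinding or other quantum subtlety arises, since correctness concerns only the honest single-shot execution.
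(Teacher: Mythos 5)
Your proof is correct and follows essentially the same route as the paper's: invoke perfect MAC completeness to get $\Pr[r_{MAC}=0]=0$, invoke $\eta=1$ completeness of $\mathcal{Z}$ per coordinate, then union-bound over the $n=\log^2(\secpar)$ coordinates. Your explicit remark about the product structure of $\overline{\psi}^n$ and the locality of $\mathcal{Z}.S$ is a sound observation that the paper leaves implicit, and your spelling-out of the memoryless/classical-bank requirements of \cref{def:semi_classical} is more careful than the paper's one-line "clearly", but neither changes the substance of the argument.
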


\begin{proof}
    Clearly, the communication and the bank's operation in $\mint$ and $\cverify$ are classical --- therefore, the scheme is semi-quantum. 
    
    From the \textit{perfect completeness} property of MAC \ifnum\sigconf=0(see \cref{def:MAC}) \else (see the full version~\cite{RS19})\fi we get:
    \begin{equation*}
    \Pr[MAC.\verify_{k}(\overline{o}^n, MAC.\mac_{k}(\overline{o}^n))=1]=1
    \end{equation*}
    meaning $\Pr[r_{MAC}=1]=1$.

    From the completeness $\eta=1$ of $\mathcal{Z}$ we get:
    \begin{align*}
        \Pr[(p, v) &\sample \mathcal{Z}.G(\secpar); (o, \ket{\psi}) \sample \mathcal{Z}.O(p); b \sample \{0, 1\};\\
        &a \sample \mathcal{Z}.S(p, o, \ket{\psi}, b):\\
        &\mathcal{Z}.v(p, v, o, b, a)=1]\\
        &\geq 1 - \negl\;.
    \end{align*}
    Let $b_i$ be the event of failing verification on the $i^{th}$ puzzle. From the previous equation, $\Pr[b_i] \leq \negl$ for some negligible function $\negl$. Let $\negl[]'(\secpar) \coloneqq n \cdot \negl = \log^2(\secpar) \cdot \negl$. Using the union bound:
    
    \begin{align*}
        \Pr[\cup_{i=1}^n b_i] \leq \sum_{i=1}^n \Pr[b_i] = \log^2(\secpar) \cdot \negl = \negl[]'(\secpar)
    \end{align*}

    meaning $\Pr[\left(\prod_{i=1}^n r_i\right)=1] \geq 1 - \negl[]'(\secpar)$. Thus:
    \begin{align*}
        \Pr[k_\$ &\sample \$_{\mathcal{Z}}.\keygen(\secparam); (\overline{p}^n, \overline{o}^n, t_o, \overline{\psi}^n) \sample \$_{\mathcal{Z}}.\mint_{k_\$}();\\
        &\$_{\mathcal{Z}}.\cverify_{k_\$}(\overline{p}^n, \overline{o}^n, t_o, \overline{\psi}^n) = 1] \\
        = &\Pr[r_{MAC} = 1 \bigcap \left(\prod_{i=1}^n r_i\right) = 1] \\
        \geq &1-\negl[]'(\secpar)\;.
    \end{align*}
\end{proof}

\begin{proposition}[$\$_\mathcal{Z}$ is a 2-of-2 mini-scheme]
 Assuming $\mathcal{Z}$ is a strong 1-of-2 puzzle and MAC is a PQ-EU-CMA MAC, the scheme $\$_\mathcal{Z}$ (Algorithm \ref{alg:mini_scheme_construction}) is a 2-of-2 mini-scheme (see \cref{def:QM_2-of-2_mini_scheme}).
 \label{prop:dollar_z_is_2_of_2}
\end{proposition}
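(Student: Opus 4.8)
The plan is to reduce a successful 2-of-2 mini-scheme counterfeiter $\twominiadv$ either to a forger against the MAC or to a 2-of-2 solver against the strong 1-of-2 puzzle $\mathcal Z$ (equivalently, against its $n$-fold repetition $\mathcal Z^n$, which by \cref{thm:parallel_repetition_1_of_2_puzzle} and the fact that $\mathcal Z$ is strong is still $1$-hard). Recall that in the game $\counterfeit{\twominiadv}{\$_\mathcal{Z}}{2\text{-of-}2}$ the counterfeiter runs $\mint$ exactly once (so $\ell=1$), then runs $\cverify$ exactly twice ($v=2$), and wins iff both verifications accept. So we are given a single run of $\mint$ producing the signed obligations $(\overline{o}^n, t_o)$ and then two accepting runs of $\cverify$.

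First I would fix notation for the two verification runs: let $\overline{b}^n$ and $\overline{b}'^n$ be the challenge vectors sent by the bank on line~\ref{{dollar_z:verify:b}} in the two runs, and let $(\overline{a}^n, \overline{o}^n_{(1)}, t_{o,(1)})$ and $(\overline{a}'^n, \overline{o}^n_{(2)}, t_{o,(2)})$ be the two responses sent on line~\ref{{dollar_z:verify:answer}}. Step one: argue that with overwhelming probability the obligation vectors submitted in both runs equal the ones signed during $\mint$, i.e. $\overline{o}^n_{(1)} = \overline{o}^n_{(2)} = \overline{o}^n$. Indeed, since $\cverify$ checks $MAC.\verify_k(\cdot, \cdot)$ on line~\ref{{dollar_z:verify:mac}}, any accepting run with a different obligation vector yields a valid MAC forgery; packaging this into a standard reduction $\forger$ that simulates the whole scheme for $\twominiadv$ but uses its MAC oracle for $MAC.\mac_k$ shows this happens only with negligible probability by PQ-EU-CMA security of the MAC. (One must also note that $\mathcal{Z}.G$, the puzzle verification keys $\overline v^n$, and the random challenges are all generated by the simulator itself, so the simulation is perfect and the MAC key is the only secret.) From now on condition on this event, so both runs use the same $\overline o^n$.

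Step two: the double-spending event forces distinct challenges somewhere. Since $n = \log^2(\secpar)$ and the two challenge vectors $\overline b^n, \overline b'^n$ are drawn independently and uniformly from $\{0,1\}^n$, the probability that $\overline b^n = \overline b'^n$ is $2^{-n} = 2^{-\log^2(\secpar)}$, which is negligible; condition also on $\overline b^n \neq \overline b'^n$. Then there is a coordinate $i^\ast \in [n]$ with $b_{i^\ast} \neq b'_{i^\ast}$; without loss of generality $b_{i^\ast} = 0$ and $b'_{i^\ast} = 1$. In the first accepting run, line~\ref{{dollar_z:verify:z_v}} gives $\mathcal Z.V(p_{i^\ast}, v_{i^\ast}, o_{i^\ast}, 0, a_{i^\ast}) = 1$, and in the second, $\mathcal Z.V(p_{i^\ast}, v_{i^\ast}, o_{i^\ast}, 1, a'_{i^\ast}) = 1$. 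That is exactly a winning pair $(o_{i^\ast}, a_{i^\ast}, a'_{i^\ast})$ for $V_2(p_{i^\ast}, v_{i^\ast}, \cdot,\cdot,\cdot)$ in the sense of Eq.~\eqref{eq:V_2_definition}. I would then build a 2-of-2 solver $\twominiadv'$ for a single puzzle $\mathcal Z$: it receives $p$ from the puzzle giver, guesses the index $i^\ast \in [n]$ (losing only a $1/n = 1/\poly$ factor), plants $p$ as $p_{i^\ast}$, generates the other $n-1$ puzzle/verification-key pairs and the MAC key itself, simulates $\mint$ and the two $\cverify$ executions for $\twominiadv$, and on the guessed coordinate outputs $(o_{i^\ast}, a_{i^\ast}, a'_{i^\ast})$. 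Conditioned on $\twominiadv$ winning and on the two conditioning events above and on the correct guess of $i^\ast$, this wins $\solveboth{\mathcal Z}$; hence its success probability is at least $\frac{1}{n}\bigl(\Pr[\twominiadv \text{ wins}] - \negl\bigr)$, which is non-negligible if $\twominiadv$ succeeds non-negligibly, contradicting $h = 0$ for the strong puzzle.

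The main obstacle is bookkeeping rather than conceptual: (i) making sure the simulator of $\cverify$ is faithful even though $\twominiadv$ may deviate arbitrarily and carry quantum registers between the $\mint$ and the two $\cverify$ calls — this is fine because the bank's side of every protocol is classical and determined by $k_\$$, which the reduction holds in full (for the puzzle reduction, all of $\overline v^n$ except coordinate $i^\ast$, plus the MAC key); and (ii) correctly combining the two failure modes, which requires first handling the MAC event (so that we may assume matching obligations) and only then invoking the puzzle hardness, and tracking the union bound over the negligible MAC-forgery probability, the negligible collision probability $2^{-n}$, and the $1/n$ guessing loss. Since $\mathcal Z$ is strong, no parallel-repetition step is actually needed inside this proof — a single-puzzle $V_2$ win already contradicts $h=0$ — but one could equivalently phrase the whole argument directly against $\mathcal Z^n$ and $\hat{\mathcal Z}^n$, avoiding the index guess, at the cost of invoking \cref{fct:equivalence_of_hardness}.
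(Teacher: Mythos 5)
Your proposal takes essentially the same route as the paper's proof: eliminate MAC forgeries, eliminate the $2^{-n}$ challenge-vector collision, guess a differing coordinate $i^{*}$ (a $1/n$ loss), plant a fresh puzzle there, and extract a $V_2$-winning triple contradicting $h=0$; the paper simply packages this as a Shoup-style sequence of games. The one place where the paper's write-up does work that yours waves at is the simulation at the planted coordinate: the reduction lacks $v_{i^{*}}$, so it cannot run $\mathcal Z.V$ there, yet it must still return a bit to $\twominiadv$ on each verification call, and $\twominiadv$'s second response may depend on what it hears after the first. The paper resolves this (Games 4--6) by having the bank always return $1$ on the first verification and deferring all puzzle checks, including the $V_2$ query at $i'$, to the second verification, then arguing this hop preserves the success probability exactly because the first-run transcript is fixed before its result is revealed; your ``conditioned on $\twominiadv$ winning'' phrasing gestures at this but leaves implicit which game the conditioning takes place in. None of this changes the substance --- the key lemma and the reduction are the same.
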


\begin{proof}

    We show that the probability of a QPT counterfeiter to win the 2-of-2 mini-scheme security game against $\$_\mathcal{Z}$ (\cref{alg:mini_scheme_construction}) is bound by the negligible probability to solve both challenges of the strong 1-of-2 puzzle $\mathcal{Z}$. Intuitively, double-spending a banknote entails solving both challenges for at least one of its $n$ puzzles, which is intractable. For this proof, as well as the following security proofs of our money scheme (\cref{prop:mini_scheme_construction} and \cref{thm:full_scheme_construction_security}), we use a sequence-of-games based technique adapted from \cite{Sho04}. The following sequence of games binds the success probability of any QPT 2-of-2 mini-scheme counterfeiter to that of a QPT 2-of-2 puzzle solver (see \cref{eq:succ_2_upperbound}):

   \paragraph{Game 0.} Let $\twominiadv$ be a QPT 2-of-2 mini-scheme counterfeiter. We assume w.l.o.g. that $\twominiadv$ performs exactly two verifications and one mint (i.e., $\ell=1$ and $v=2$) --- an adversary which does not comply with this assumption will necessarily fail (see \cref{def:QM_mini_scheme}). We define Game 0 to be $\counterfeit{\twominiadv}{\$_\mathcal{Z}}{2-of-2}$.

    Let $S_0$ be the event where $w > 1$ (see \cref{def:QM_2-of-2_mini_scheme}) in Game 0 (this is the original win condition for $\twominiadv$, since we assume $\ell=1 \wedge v=2$).

    \paragraph{Game 1.} We now transform Game 0 into Game 1, simply by changing the win condition: Game 1 is identical to Game 0, but we define the following event: let $\overline{b^1}^n, \overline{b^2}^n$ be the random bit strings that were generated in line \ifnum\sigconf=0 \ref{{dollar_z:verify:b}} \else \ref{dollar_z:verify:b} \fi of $\$_\mathcal{Z}.\cverify$ the first and second times $\twominiadv$ asked for verification, respectively. Let $S_1$ be the event where $w > 1 \wedge \overline{b^1}^n \neq \overline{b^2}^n$ in Game 1.

    Let $F$ be the event where $\overline{b^1}^n = \overline{b^2}^n$ in Game 1, and $F'$ the event where $w > 1 \wedge \overline{b^1} = \overline{b^2}$ in Game 1. Since $\overline{b^1}^n$ and $\overline{b^2}^n$ are generated uniformly and independently, $\Pr[F]=\frac{1}{2^n} \leq \negl$ for some negligible function $\negl$. Therefore: $\Pr[S_0] = \Pr[S_1 \cup F'] \leq \Pr[S_1 \cup F] \leq \Pr[S_1] + \Pr[F] \leq \Pr[S_1] + \negl$. So $\Pr[S_0] \leq \Pr[S_1] + \negl$.

    \paragraph{Game 2.} We now add a small change to the game above: at the start of the game, a uniform $i' \in_R [n]$ is chosen by the bank. Let $j$ be the first index such that $b^1_j \neq b^2_j$ ($j=\infty$ if $b^1 = b^2$).

    Let $S_2$ be the event where $w > 1 \wedge b^1 \neq b^2 \wedge i' = j$ in Game 2.

    $S_1 \Rightarrow b^1 \neq b^2$, so since $i'$ was chosen uniformly and independently of $w$, $b^1$, $b^2$ and $j$, we get that $\Pr[S_2|S_1] = \frac{1}{n}$. Moreover, it is easy to see that $\Pr[S_2|\neg S_1] = 0$. So $\Pr[S_2] = \frac{1}{n} \cdot \Pr[S_1]$, meaning $\Pr[S_1]$ is a polynomial multiplicative factor of $\Pr[S_2]$.

    \paragraph{Game 3.} Game 3 is identical to Game 2, but we now add an additional constraint to the win condition. Let $\overline{o}^n$ be the set of obligations $\twominiadv$ sent in line \ifnum\sigconf=0 \ref{{dollar_z:mint:obligation}} \else \ref{dollar_z:mint:obligation} \fi of $\$_\mathcal{Z}.\mint$, and let $\overline{o^1}^n$, $\overline{o^2}^n$ be the sets of obligations sent by $\twominiadv$ during line \ifnum\sigconf=0 \ref{{dollar_z:verify:answer}} \else \ref{dollar_z:verify:answer} \fi of $\$_\mathcal{Z}.\cverify$ the first and second times $\twominiadv$ asks for verification, respectively.

    Let $S_3$ be the event where $w > 1 \wedge b^1 \neq b^2 \wedge i' = j \wedge \overline{o^1}^n = \overline{o^2}^n = \overline{o}^n$ in Game 3.

    Let $F$ be the event where $\twominiadv$ passes one or more verifications such that $\overline{o^1}^n \neq \overline{o}^n$ or $\overline{o^2}^n \neq \overline{o}^n$. It is easy to see that $S_2 \wedge \neg F \iff S_3 \wedge \neg F$. Therefore, from the Difference Lemma \ifnum\sigconf=0 (\cref{lem:difference}) \else (\cite{Sho04}, see also the full version~\cite{RS19}) \fi we get
    \begin{equation*}
        \abs{\Pr[S_3] - \Pr[S_2]} \leq \Pr[F]\;.
    \end{equation*}
    From the unforgeability of MAC \ifnum\sigconf=0 (see \cref{def:unforgeability_of_MAC})\fi, $\Pr[F]$ is negligible\footnote{Otherwise, we could construct a MAC forger $\forger$ with non-negligible success probability. Assume towards a contradiction that with non-negligible probability, $\twominiadv$ passes verification by sending in line \ifnum\sigconf=0 \ref{{dollar_z:verify:answer}} \else \ref{dollar_z:verify:answer} \fi $\overline{o'}^n, t_o'$ such that $\overline{o'}^n \neq \overline{o}^n$. That means that the MAC verification in line \ifnum\sigconf=0 \ref{{dollar_z:verify:mac}} \else \ref{dollar_z:verify:mac} \fi passed. So $\forger$ could simulate a bank, but instead of signing and verifying with $k$ generated in $\$_\mathcal{Z}.\keygen$, $\forger$ uses the signing and verification oracles. $\forger$ runs $\twominiadv$ against the simulated bank, and present $\overline{o'}^n, \tilde{o'}^n$. With non-negligible probability, MAC verification passes, and since $\overline{o'}^n \neq \overline{o}^n$, and no other signings are requested ($\mint$ was run only once), $\forger$ wins $\macforge{MAC}$.}. Therefore, $\Pr[S_2] \leq \Pr[S_3] + \negl$.

    \paragraph{Game 4.} We now change the behavior of verifications. Let $\overline{a^1}^n$, $\overline{a^2}^n$ be the sets of answers sent by $\twominiadv$ in line \ifnum\sigconf=0 \ref{{dollar_z:verify:answer}} \else \ref{dollar_z:verify:answer} \fi of $\$_\mathcal{Z}.\cverify$ the first and second times $\twominiadv$ asks for verification, respectively\footnote{$\twominiadv$ can, of course, run both verification protocols simultaneously. We number the verifications according to the one that got to line \ifnum\sigconf=0 \ref{{dollar_z:verify:answer}} \else \ref{dollar_z:verify:answer} \fi of the protocol first.}. Instead of performing verifications both times, the bank now performs both verifications only on the second time: the first time $\$_\mathcal{Z}.\cverify$ is called, after line \ifnum\sigconf=0 \ref{{dollar_z:verify:answer}} \else \ref{dollar_z:verify:answer} \fi the bank returns 1 and stops. The second time $\$_\mathcal{Z}.\cverify$ is called, the bank performs both verifications: i.e., on the second verification we replace everything from line \ifnum\sigconf=0 \ref{{dollar_z:verify:z_v}} \else \ref{dollar_z:verify:z_v} \fi with:

    \pseudocode[lnstart=5, linenumbering]{
         \pcforeach i \in [n]: \\
         \;\;\;\; r_i \gets \mathcal{Z}.V(p_i, v_i, o_i, \fbox{$b^1_i, a^1_i$})\\
         \;\;\;\; \fbox{$r'_i \gets \mathcal{Z}.V(p_i, v_i, o_i, b^2_i, a^2_i$)}\\
         r \gets r_{MAC} \cdot \prod_{i=1}^n r_i \fbox{$\cdot r'_i$}\\
         \pcreturn r
    }
    
    Let $S_4$ be the event where $w > 1 \wedge b^1 \neq b^2 \wedge i' = j \wedge \overline{o^1}^n = \overline{o^2}^n = \overline{o}^n$ in Game 4.

    Verifying both inputs on the second request is equivalent to verifying them individually: $S_3 \Rightarrow S_4$ since if both verifications pass in Game 3, then both pass in Game 4 (the first one always passes, the second one runs both verifications that passed in $S_3$), and $\neg S_3 \Rightarrow \neg S_4$ since that means one of the verifications in Game 3 fail, which means the second verification in Game 4 fails. So $\Pr[S_3]=\Pr[S_4]$.

    \paragraph{Game 5.} We now change the second verification: on the $i'^{th}$ pair of puzzles, if $b^1_{i'} \neq b^2_{i'}$ (we note that this always holds when $i'=j$), we perform $V_2$ instead of normal verification --- i.e., we replace everything from line \ifnum\sigconf=0 \ref{{dollar_z:verify:z_v}} \else \ref{dollar_z:verify:z_v} \fi forward in $\$_\mathcal{Z}.\cverify$ in the second verification with:

    \pseudocode[lnstart=5, linenumbering]{
        \pcforeach i \in [n]: \\
        \;\;\;\; \fbox{$\pcif i=i' \wedge b^1_{i'} \neq b^2_{i'}:$} \\
        \;\;\;\;\;\;\;\; \fbox{$\pcif b^1_i=0: \hat{a_0} \gets a^1_i, \hat{a_1} \gets a^2_i$} \\
        \;\;\;\;\;\;\;\; \fbox{$\pcelse: \hat{a_0} \gets a^2_i, \hat{a_1} \gets a^1_i$} \\
        \label{{game_6:verify:v_2}} \;\;\;\;\;\;\;\; \fbox{$r_i, r'_i \gets V_2(p_i, v_i, o_i, \hat{a_0}, \hat{a_1})$} \\
        \;\;\;\; \fbox{$\pcelse:$} \\
        \;\;\;\;\;\;\;\; r_i \gets \mathcal{Z}.V(p_i, v_i, o_i, b^1_i, a^1_i)\\
        \;\;\;\;\;\;\;\; r'_i \gets \mathcal{Z}.V(p_i, v_i, o_i, b^2_i, a^2_i)\\
        \;\;\;\; \fbox{$\pcendif$} \\
        r \gets r_{MAC} \cdot \prod_{i=1}^n r_i \cdot r'_i\\
        \pcreturn r
    }

    Let $S_5$ be the event where $w > 1 \wedge b^1 \neq b^2 \wedge i' = j \wedge \overline{o^1}^n = \overline{o^2}^n = \overline{o}^n \wedge V_2(p_i, v_i, o_i, \hat{a_0}, \hat{a_1})=1$ in Game 5.

    In the case where $i=i' \wedge b^1_i \neq b^2_i$, running $V_2(p_i, v_i, o_i, \hat{a_0}, \hat{a_1})$ is equivalent to running $\mathcal{Z}.v$ twice, since we assign $\hat{a_0}$ and $\hat{a_1}$ respective to $b^1_i$ and $b^2_i$. So $S_4 \iff S_5$, meaning $\Pr[S_4]=\Pr[S_5]$.

    \paragraph{Game 6.} We now simply relax the win condition: Game 6 goes exactly the same as Game 5, but we define the following event: let $S_6$ be the event where $V_2(p_i, v_i, o_i, \hat{a_0}, \hat{a_1})=1$ in Game 6. Since this is a relaxation of the conditions of $S_5$, we get $\Pr[S_5] \leq \Pr[S_6]$.

    \paragraph{Bound on success probability.} We show a reduction mapping a 2-of-2 mini-scheme counterfeiter to a 2-of-2 solver (see \cref{def:hardness_puzzle}):

    Let $\twominiadv$ be a QPT 2-of-2 mini-scheme counterfeiter. We construct a QPT 2-of-2 solver $\solvertwo$ in the following manner:

    Let $(p,v)$ be the output of $G(\secparam)$ at step 1 of the solving game. On step 2, $\solvertwo$ simulates a Game 6 bank (by honestly running mints and verifications as defined in Game 5, as well as choosing $i'$ uniformly) with two changes:
    \begin{enumerate}
        \item The $i'^{th}$ puzzle is replaced with $p$.
        \item In line \ifnum\sigconf=0 \ref{{game_6:verify:v_2}} \else \ref{game_6:verify:v_2} \fi of the second verification, $\solvertwo$ outputs $(o_i, \hat{a_0}, \hat{a_1})$ to the puzzle giver instead of running $V_2$. The honest puzzle giver runs $V_2(p, v, o_i, \hat{a_0}, \hat{a_1})$ and returns the result, which $\solvertwo$ uses as $r_i$ and $r_{i'}$.
    \end{enumerate}
    We can see that for any $\twominiadv$, $\Pr[S_6]$ is not affected by the above changes: in the first change we replace a random puzzle with another random puzzle, which has no affect on $\Pr[S_6]$. In the second change, the honest puzzle giver runs $V_2$ with exactly the same input as the bank in the original Game 6 should, and returns the result --- this also does not affect $\Pr[S_6]$.

    $\solvertwo$ runs $\twominiadv$ against Game 6. $S_6$ is exactly the win condition of the 2-of-2 solving game, which means $\solvertwo$ wins the 2-of-2 solving game with probability $\Pr[S_6]$. Since $\mathcal{Z}$ is a strong 1-of-2 puzzle, the success probability of any QPT 2-of-2 solver is negligible --- meaning $\Pr[S_6]$ is negligible for any QPT counterfeiter.

    For each pair of consecutive games $i$ and $i+1$, we have shown that $\Pr[S_i] \leq \poly \cdot \Pr[S_{i+1}] + \negl$ for some $\poly, \negl$. Finally, we have shown that $\Pr[S_6]$ is negligible in $\secpar$, so we can conclude that $\Pr[S_0]$ is negligible in $\secpar$. Since Game 0 is defined as the 2-of-2 mini-scheme counterfeiting game, and $S_0$ is defined as its win condition, no QPT 2-of-2 mini-scheme counterfeiter can win the game with more than negligible probability.
\end{proof}

We now prove that $\$_\mathcal{Z}$ (\cref{alg:mini_scheme_construction}) is, in fact, a mini-scheme (see \cref{def:QM_mini_scheme}). Unlike the others, this proof is not modular --- not every 2-of-2 mini-scheme is a mini-scheme. For example, consider a scheme wherein the bank shares with the counterfeiter a single bit of the key on each verification. This scheme could have 2-of-2 mini-scheme security, but obviously, it would not be secure for a counterfeiter with a verification oracle, which could easily discern the key.

\begin{proposition}[$\$_\mathcal{Z}$ is a mini-scheme]
\label{prop:mini_scheme_construction}
    Assuming $\$_\mathcal{Z}$ is a 2-of-2 mini-scheme (where $\$_\mathcal{Z}$ is given in \cref{alg:mini_scheme_construction}, and a 2-of-2 mini-scheme is defined in \cref{def:QM_2-of-2_mini_scheme}), $\$_\mathcal{Z}$ is a mini-scheme (see \cref{def:QM_mini_scheme}).
\end{proposition}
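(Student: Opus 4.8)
To prove the proposition I would argue the contrapositive: assume a QPT mini-scheme counterfeiter $\miniadv$ wins $\counterfeit{\miniadv}{\$_\mathcal{Z}}{mini}$ with non-negligible probability $\delta = \delta(\secpar)$, and build from it a QPT 2-of-2 counterfeiter $\twominiadv$ that wins $\counterfeit{\twominiadv}{\$_\mathcal{Z}}{2-of-2}$ with non-negligible probability, contradicting the hypothesis. Since a mini-scheme win requires $\ell = 1$, we may assume w.l.o.g.\ that $\miniadv$ runs $\mint$ exactly once and issues at most $v = v(\secpar) = \poly$ verification queries.

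The reduction is a pair-guessing argument. $\twominiadv$ draws two distinct indices $i_1 < i_2$ uniformly from $[v]$ and simulates the bank for $\miniadv$: it forwards $\miniadv$'s single $\mint$ query and its $i_1$-th and $i_2$-th $\cverify$ queries to the external (real) bank, relaying the classical messages verbatim — these are exactly the one $\mint$ and two $\cverify$ interactions that $\twominiadv$ is allowed in the 2-of-2 game (padding with a rejected dummy verification if $\miniadv$ happens to make fewer than $i_2$ queries). For every other $\cverify$ query $\twominiadv$ plays the bank itself, sending a uniformly random challenge $\overline{b}^n$ and discarding $\miniadv$'s response. $\twominiadv$ wins precisely when the external bank accepts both forwarded verifications, i.e.\ $w = 2$ with $\ell = 1,\ v = 2$.

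The point that makes the simulation exact is structural: in $\$_\mathcal{Z}$ the protocol $\cverify$ is a single round — the bank's message is only the uniformly random challenge $\overline{b}^n$, and the pass/fail bit is never returned to the user. Hence $\miniadv$'s view under $\twominiadv$ (the $\mint$ transcript, produced by the real bank and thus carrying a correctly distributed MAC tag, together with $v$ uniform challenge strings) is distributed identically to its view in a genuine run of $\counterfeit{\miniadv}{\$_\mathcal{Z}}{mini}$ against a bank holding the same key. Consequently the joint distribution of the bank's key, the $\mint$ transcript, all $v$ challenges, and all $v$ responses of $\miniadv$ is the same in both experiments. Writing $W$ for the number of those $v$ responses the real bank would accept, we have $\Pr[W \geq 2] = \delta$, while $\twominiadv$ wins iff $i_1$ and $i_2$ both fall among the accepted queries, which by independence of the guess happens with probability $\E\!\left[\binom{W}{2}\right]\big/\binom{v}{2} \geq \Pr[W\geq 2]\big/\binom{v}{2} = \delta/\binom{v}{2}$. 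As $\binom{v}{2} = \poly$, this is non-negligible, contradicting 2-of-2 mini-scheme security; so $\delta$ is negligible, which is the claim.

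The only delicate step is the ``identical view'' claim, and it is exactly the property that the remark preceding the proposition flags as necessary (and which the counterexample there — a scheme leaking key bits at each verification — lacks): one must check that nothing the honest bank does in $\cverify$ besides broadcasting the uniform challenge is observable to a verification-oracle adversary. Routine points to dispatch: $\twominiadv$ is QPT (it invokes $\miniadv$ once and otherwise only relays classical messages and samples coins), and parallel or interleaved verification queries cause no trouble once queries are ordered as in the analysis of \cref{alg:mini_scheme_construction} (by the time the response is sent). Finally, even in a variant model where verification outcomes \emph{are} revealed to the counterfeiter, the same reduction goes through after two tweaks — guess that $i_1,i_2$ are the \emph{first two} accepting queries and report ``reject'' for every non-forwarded query — together with a step-by-step coupling showing the simulated and real executions agree up to the first non-forwarded accepting query whenever the guess is correct; the success probability is still at least $\delta/\binom{v}{2}$.
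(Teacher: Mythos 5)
Your reduction is essentially the one the paper uses, cast as a direct reduction rather than a sequence of games: guess two query indices, relay only the one $\mint$ and those two $\cverify$ calls to the external 2-of-2 bank, and simulate the bank in place for the rest by sampling fresh challenges $\overline{b}^n$, paying a $\binom{v}{2}^{-1}$ factor. You also correctly identify the crux -- that nothing observable happens in $\cverify$ beyond broadcasting a uniform challenge -- and connect it to the paper's own caveat that the lemma is not generic over all 2-of-2 mini-schemes.

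Where your write-up and the paper's diverge is in which adversary model the \emph{main} argument targets. The paper's Game 3 explicitly replaces the bit the counterfeiter "receives" on line~\ref{{dollar_z:verify:return}} with the guessed bit $r'_i$, which means its default model is the stronger one in which $\miniadv$ does learn per-query accept/reject outcomes. Your headline reduction assumes those outcomes are \emph{not} returned, which is why you can afford to accept any pair of winning indices (hence $\E[\binom{W}{2}]/\binom{v}{2}$) and need not worry about the simulated transcript diverging from the real one. The tweaked variant you sketch at the end -- guess the first two accepting queries, answer "reject" for non-forwarded queries, and couple executions up to the first non-forwarded accept -- is precisely what the paper's Games~1--4 implement: Game~2 picks the pattern $r'$ with $\sum r'_i=2$ (equivalently a pair of indices), Game~3 returns $r'_i$ to the adversary (so rejects everywhere but at the guessed pair), and the equality $\Pr[S_2]=\Pr[S_3]$ is exactly your step-by-step coupling. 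So the two proofs coincide once you make the tweaked variant primary; if you were to write this up for the paper's model, you should promote the tweak from a closing remark to the body of the argument, and replace the $\E[\binom{W}{2}]$ bound with the $\Pr[W\ge 2]/\binom{v}{2}$ bound you already state.
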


\begin{proof}
    We use an idea very similar to that used in Ref.~\cite[Theorm 5]{PYJ+12} (a slightly different variation also appeared in Ref.~\cite[Appendix C]{BS16}); we show that if a counterfeiter with access to a verification oracle can ask for $v$ verifications and have two of them succeed, a 2-of-2 counterfeiter could guess the two success indices randomly and apply the same strategy, thus breaking the security of the 2-of-2 mini-scheme. The following sequence of games binds the success probability of any QPT mini-scheme counterfeiter to that of a QPT 2-of-2 mini-scheme counterfeiter against $\$_\mathcal{Z}$:

    \paragraph{Game 0.} Let $\miniadv$ be a QPT mini-scheme counterfeiter. We assume w.l.o.g. that $\miniadv$ asks for mint only once (i.e., $\ell=1$), and for verification $v$ times such that $v$ is polynomial in $\secpar$ --- an adversary which does not comply with this assumption necessarily fails (see \cref{def:QM_mini_scheme}). We define the first game to be $\counterfeit{\miniadv}{\$_\mathcal{Z}}{mini}$ (see \cref{def:QM_mini_scheme}).

    Let $S_0$ be the event where $w > 1$ (see \cref{def:QM_mini_scheme}) in Game 0 (this is the original win condition for $\miniadv$ since we assume $\ell=1$ and $v$ is polynomial in $\secpar$).

    \paragraph{Game 1.} We now make one small change to Game 0, namely, that the game stops after $\miniadv$ receives two successful verifications (i.e., the counterfeiter is not allowed to make additional verifications after receiving two successful ones. We model this by defining additional verification attempts as failures).

    Let $S_1$ be the event where $w=2$ in Game 1.

    It is obvious why $S_1 \Rightarrow S_0$. In addition, $S_0 \Rightarrow S_1$, since any run of Game 0 with more than two successful verifications is equivalent to a run of Game 1 in which all verifications beyond the second successful one are ignored. So $\Pr[S_0]=\Pr[S_1]$.

    \paragraph{Game 2.} We model a run of $v$ verifications using a string $r \in \{0,1\}^v$, such that $r_i=1$ if and only if the $i^{th}$ time $\miniadv$ asked for verification was successful\footnote{$\miniadv$ can, of course, run several verification protocols simultaneously. We number the verifications according to the order in which they were initiated.}.
    At the beginning of Game 2, a uniform binary string $r' \in_R \{0,1\}^v$ is generated such that $\sum_{i=1}^v r'_i = 2$.

    Let $S_2$ be the event where $w=2 \wedge r'=r$ in Game 2.

    Given $S_1$, we know that the string $r$ representing the verifications in Game 1, like $r'$, also holds $\sum_{i=1}^v r_i = 2$. There are $\binom{v}{2}$ such strings, so since $r'$ was chosen uniformly and independently of $r$, there is a $\frac{1}{\binom{v}{2}}$ probability that $r'=r$. So $\Pr[S_2]=\frac{1}{\binom{v}{2}} \cdot \Pr[S_1]$, meaning $\Pr[S_1] = \binom{v}{2} \cdot \Pr[S_2]$.

    \paragraph{Game 3.} We transform Game 2 into Game 3 by changing the following: for each $i \in [v]$, for the $i^{th}$ time $\miniadv$ runs a verification protocol with the bank, instead of receiving the actual result of the MAC and puzzle verifications ($r$), it receives $r'_i$; i.e., we change line \ifnum\sigconf=0 \ref{{dollar_z:verify:return}} \else \ref{dollar_z:verify:return} \fi with $\pcreturn r'_i$.

    Let $S_3$ be the event where $w=2 \wedge r'=r$ in Game 3.

    Given $S_2$, since $r'=r$ in both Game 2 and Game 3, the fact that $\miniadv$ receives $r'_i$ instead of $r_i$ changes nothing. So $\Pr[S_3|S_2]=1$. Trivially, $\Pr[S_3|\neg S_2]=0$. So $\Pr[S_2]=\Pr[S_3]$.

    \paragraph{Game 4.} Let $k, h$ be the two indices such that $r'_k = r'_h = 1, k \neq h$ (by construction there are exactly two such indices). In Game 4, for every verification other than the $k^{th}$ and the $h^{th}$, the MAC verification and puzzle verifications are not called at all --- $b_i$ is generated and $r'_i$ is returned; i.e., lines \ifnum\sigconf=0 \ref{{dollar_z:verify:mac}} \else \ref{dollar_z:verify:mac} \fi to \ifnum\sigconf=0 \ref{{dollar_z:verify:result}} \else \fi are removed.

    Let $S_4$ be the event where $w=2 \wedge r'=r$ in Game 4.

    It is easy to see that $\Pr[S_3]=\Pr[S_4]$, since for every verification but the $k^{th}$ and the $h^{th}$, the bank did nothing with the result of the MAC or puzzle verifications, so whether we run them at all changes nothing.

    \paragraph{Bound on success probability.} We show a reduction mapping a mini-scheme counterfeiter to a 2-of-2 mini-scheme counterfeiter (see \cref{def:QM_2-of-2_mini_scheme}):

    Let $\miniadv$ be a QPT mini-scheme counterfeiter. We construct a a QPT 2-of-2 mini-scheme counterfeiter $\twominiadv$ in the following manner:

    $\twominiadv$ simulates a Game 4 bank with the following difference: when asked to run $\$_\mathcal{Z}.\mint$, it, in turn, asks the real bank to run $\$_\mathcal{Z}.\mint$ and returns the result, and on the $k^{th}$ an $h^{th}$ verifications, it asks the real bank to run $\$_\mathcal{Z}.\cverify$ and returns the result. We note that for any other verification, $\twominiadv$ can simulate the bank since MAC and puzzle verifications are not performed; all it needs to do is choose a uniform $b$ and return $r'_i$. $\twominiadv$ runs $\miniadv$ against the simulated Game 4 bank.

    So $\Pr[S_4]=\Pr[\counterfeit{\twominiadv}{\$_\mathcal{Z}}{2-of-2}=1] \leq \negl$ for some negligible function $\negl$.
    Therefore, by construction, we get that $\Pr[S_0] \leq \poly \cdot \Pr[S_4]$ for some $\poly$ and therefore is also negligible for any QPT counterfeiter. Game 0 is defined as the original mini-scheme security game, and $S_0$ is defined as its original win condition; therefore, $\$_\mathcal{Z}$ (Algorithm~\ref{alg:mini_scheme_construction}) is a mini-scheme.
\end{proof}

\subsection{A Mini-Scheme Implies a Full Blown Scheme}
\label{sec:mini_scheme_implies_full_scheme}
We show how a mini-scheme $\$$ can be used to construct a full blown scheme $\hat{\$}$. The construction is based on a very similar idea to those in Refs.~\cite[Appendix C]{BS16} and \cite[Section 3.3]{AC13}.

Here we provide an informal description of our full scheme $\hat{\$}$. The construction is defined formally in \cref{alg:full_scheme_construction}. Our full scheme is constructed by minting mini-scheme banknotes, and including the key of the mini-scheme in each one. To that end, a MAC and a private-key encryption scheme are used: on minting, the bank mints a mini-scheme banknote, encrypts the mini-scheme key that was generated in the process, signs it in its encrypted form, and hands it to the user together with the mini-scheme banknote. The secure nature of the encryption scheme prevents the user from exploiting the mini-scheme key to break the mini-scheme's underlying security. On verification, the bank uses the MAC scheme to verify that the note was indeed minted by a bank, after which it decrypts the mini-scheme key to verify the mini-scheme banknote itself.

In both \cite{BS16} and \cite{AC13}, the core idea of the construction is the same, with minor differences: in \cite{BS16} algorithms are used instead of interactive protocols, and \cite{AC13} is in the public setting, so a digital signature scheme is used instead of MAC, and an encryption scheme is not necessary.

We prove the security of the full-blown scheme by showing a reduction mapping a full-blown scheme counterfeiter to a mini-scheme counterfeiter, such that the mini-scheme counterfeiter generates fake bank notes for the full-blown counterfeiter.

\begin{algorithm*}
\caption{The Interactive Private Money Scheme $\hat{\$}$}
\label{alg:full_scheme_construction}    

\fbox{
\procedure[linenumbering]{$\hat{\$}.\keygen(\secparam)$}{
    k_m \gets MAC.\keygen\\
    k_e \gets ENC.\keygen\\
    \textbf{return } (k_m, k_e)
}
}

\fbox{
\pseudocode[head = $\hat{\$}.\mint_{(k_m, k_e)}$]{
    \textbf{Acquirer} \<\< \textbf{Bank} \\[] [\hline]
    \pcln \label{{hat_dollar:mint:gen}}\<\< k_\$ \gets \$.\keygen(\secparam) \\
    \pcln \label{{hat_dollar:mint:enc}}\<\< c \gets ENC.\encrypt_{k_e}(k_\$) \\
    \pcln \label{{hat_dollar:mint:sign}}\<\< t \gets MAC.\mac_{k_m}(c) \pclb
    \pcintertext[center]{\fbox{\begin{minipage}{\dimexpr\textwidth-2\fboxsep-2\fboxrule\relax}
        \centering
        \ \\
        \ \\
        $\ket{\$} \gets \$.\mint_{k_\$}()$ \\
        \ \\
    \end{minipage}}}
    \pcln \< \sendmessageleft*{c, t} \<
}
}

\fbox{
\pseudocode[head = $\hat{\$}.\cverify_{(k_m, k_e)}({c, t, \ket{\$}})$]{
    \textbf{Acquirer} \<\< \textbf{Bank} \\[] [\hline]
    \pcln \label{{hat_dollar:verify:user_input}} \< \sendmessageright*{c, t} \< \\
    \pcln \label{{hat_dollar:verify:mac_verify}} \<\< r_m \gets MAC.\verify_{k_m}(c, t) \pclb
    \pcintertext[dotted]{\pcif $r_m = 1$}\\
    \pcln \label{{hat_dollar:verify:decrypt}} \<\< k_\$ \gets ENC.\decrypt_{k_e}(c) \pclb
    \pcintertext[center]{\fbox{\begin{minipage}{\dimexpr\textwidth-2\fboxsep-2\fboxrule\relax}
        \centering
        \ \\
        \ \\
        $r_v \gets \$.\cverify_{k_\$}(\ket{\$})$
        \ \\
        \ \\
    \end{minipage}}}
    \pcln \<\< \pcreturn r_v \pclb
    \pcintertext[dotted]{\pcif $r_m = 0$}
    \pcln \<\< \pcreturn 0
}
}
\end{algorithm*}

\begin{proposition}[Correctness of $\hat{\$}$]
    Assuming $\$$ is a \emph{correct} mini-scheme (see \cref{def:interactive_private_quantum_money}) and that both MAC and \nom{ENC}{A symmetric encryption scheme}{ENC} have \emph{perfect completeness}, $\hat{\$}$ (\cref{alg:full_scheme_construction}) is \emph{correct} (see \cref{def:interactive_private_quantum_money}).
    \label{prop:full_scheme_construction_completeness}
\end{proposition}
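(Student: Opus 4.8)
The plan is to trace a single honest execution of $\hat{\$}.\mint$ followed by $\hat{\$}.\cverify$ and to peel off the three completeness guarantees one at a time, in the same spirit as \cref{prop:mini_scheme_construction_correctness}. First I would run $\hat{\$}.\keygen(\secparam)$ honestly to obtain $(k_m,k_e)$ with $k_m \gets MAC.\keygen$ and $k_e \gets ENC.\keygen$, and then run $\hat{\$}.\mint_{(k_m,k_e)}$: the bank samples a mini-scheme key $k_\$ \gets \$.\keygen(\secparam)$, sets $c \gets ENC.\encrypt_{k_e}(k_\$)$ and $t \gets MAC.\mac_{k_m}(c)$, runs the mini-scheme minting $\ket{\$} \gets \$.\mint_{k_\$}()$, and the Acquirer ends the protocol holding the triple $(c,t,\ket{\$})$.

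Next I would analyze $\hat{\$}.\cverify_{(k_m,k_e)}(c,t,\ket{\$})$ on this honestly produced input. Since $t \gets MAC.\mac_{k_m}(c)$, the perfect completeness of MAC gives $\Pr[MAC.\verify_{k_m}(c,t)=1]=1$, so $\Pr[r_m=1]=1$ and the protocol never takes the rejecting $r_m=0$ branch. Conditioned on this, the bank decrypts $c$; since $c \gets ENC.\encrypt_{k_e}(k_\$)$, the perfect completeness of ENC gives $ENC.\decrypt_{k_e}(c)=k_\$$ with probability $1$, i.e. the key recovered for the mini-scheme verification is exactly the key $k_\$$ that was used to mint $\ket{\$}$. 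Finally, because $k_\$ \gets \$.\keygen(\secparam)$ and $\ket{\$} \gets \$.\mint_{k_\$}()$ were generated honestly, the correctness of the mini-scheme $\$$ (\cref{def:interactive_private_quantum_money}) yields a negligible function $\negl$ with $\Pr[\$.\cverify_{k_\$}(\ket{\$})=1]\ge 1-\negl$, i.e. $\Pr[r_v=1]\ge 1-\negl$. Combining the two events by a union bound, $\Pr[\hat{\$}.\cverify_{(k_m,k_e)}(c,t,\ket{\$})=1]=\Pr[r_m=1 \wedge r_v=1]\ge 1-\negl$, which is precisely the correctness inequality of \cref{def:interactive_private_quantum_money}. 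I would also note in passing that the bank's side of both $\mint$ and $\cverify$ consists only of the classical MAC and ENC operations together with the (classical) bank side of $\$$, so $\hat{\$}$ inherits the semi-quantum and memoryless properties from $\$$.

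I do not expect a genuine obstacle here: the statement is a routine chaining of two perfect-completeness guarantees with one negligible-error correctness guarantee. The only point requiring a little care is the bookkeeping of the conditioning when $\hat{\$}.\cverify$ branches on $r_m$ — but since $\Pr[r_m=1]=1$, the rejecting branch is never reached and contributes nothing, so the union bound over the events "$r_m=1$" and "$r_v=1$" goes through cleanly.
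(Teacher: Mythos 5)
Your proof is correct and follows essentially the same route as the paper's: trace an honest run, invoke perfect completeness of MAC and of ENC to guarantee $r_m=1$ and correct decryption of $k_\$$ with probability $1$, invoke the mini-scheme's correctness for the final $\negl$-close-to-$1$ bound, and combine via a union bound. The closing remark about inheriting the semi-quantum/memoryless properties is fine but is not part of what this proposition asserts.
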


\begin{proof}
    From the \emph{perfect completeness} of MAC \ifnum\sigconf=0 (see \cref{def:MAC})\else (see the full version~\cite{RS19})\fi, we get that $\Pr[S_m] = 1$, where
    \begin{equation*}
        S_m \coloneqq MAC.\verify_{k_m}(c, MAC.\mac_{k_m}(c)) = 1\;.
    \end{equation*}
    Therefore, when the acquirer is honest, we know that he will send $t = MAC.sign_{k_m}(c)$ (that he received during the run of $\hat{\$}.\mint$) to the bank on line \ifnum\sigconf=0 \ref{{hat_dollar:verify:user_input}} \else \ref{hat_dollar:verify:user_input} \fi of $\hat{\$}.\cverify$. Thus, the MAC verification on line \ifnum\sigconf=0 \ref{{hat_dollar:verify:mac_verify}} \else \ref{hat_dollar:verify:mac_verify} \fi will succeed.

    From the \emph{perfect completeness} of ENC \ifnum\sigconf=0(see \cref{def:ENC})\else(see the full version~\cite{RS19})\fi, we get that $\Pr[S_e] = 1$, where
    \begin{align*}
        S_e &\coloneqq ENC.\decrypt_{k_e}(ENC.\encrypt_{k_e}(k_\$)) = k_\$\;.
    \end{align*}
    Therefore, when the acquirer is honest, we know that he will send $c = ENC.\encrypt_{k_e}(k_\$)$ (that he received during the run of $\hat{\$}.\mint$) to the bank on line \ifnum\sigconf=0 \ref{{hat_dollar:verify:user_input}} \else \ref{hat_dollar:verify:user_input} \fi of $\hat{\$}.\cverify$. Thus, the decryption in line \ifnum\sigconf=0 \ref{{hat_dollar:verify:decrypt}} \else \ref{hat_dollar:verify:decrypt} \fi will succeed.

    From the above, we conclude that for an honest acquirer both the decryption and MAC verification in $\hat{\$}.\cverify$ always succeeds. As such, the verification can only fail in $\$.\cverify_{k_\$}$. We know that the result of the decryption is $k_\$$ as it was generated in $\hat{\$}.\mint$, and that this $k_\$$ was generated by running $\$.\keygen$. Thus, from the \emph{correctness} of the mini-scheme $\$$ (see \cref{def:interactive_private_quantum_money}), we get that $\Pr[S_\$] \geq 1 - \negl$ for some negligible function $\negl$, where
    \begin{align*}
        S_\$ &\coloneqq k_\$ \sample \$.\keygen(\secparam); \ket{\$} \sample \$.\mint_{k_\$}();\\
        &\$.\cverify_{k_\$}(\ket{\$}) = 1\;.
    \end{align*}
    $\hat{\$}.\cverify$ passes when $MAC.\verify$, $ENC.\decrypt$ and $\$.\cverify$ all pass, so for an honest acquirer:
    \begin{align*}
        \Pr[(k_m, k_e) &\sample \hat{\$}.\keygen(\secparam); (c, t, \ket{\hat{\$}}) \sample \hat{\$}.\mint_{(k_m, k_e)}();\\
         &\hat{\$}.\cverify_{(k_m, k_e)}(c, t, \ket{\hat{\$}}) = 1]\\
         &= 1 - \Pr[\neg S_m \cup \neg S_e \cup \neg S_\$]\\
         &\geq 1 - \negl\;.
    \end{align*}
\end{proof}

\begin{theorem}[$\hat{\$}$ is a secure interactive private quantum money scheme]
    Assuming $\$$ is an interactive private quantum money mini-scheme, MAC is a PQ-EU-CMA MAC \ifnum\sigconf=0(see \cref{def:unforgeability_of_MAC}) \fi and ENC has PQ-IND-CPA \ifnum\sigconf=0 \ (see \cref{def:indistinguishability_of_encryption})\else(see the full version~\cite{RS19})\fi, $\hat{\$}$ (\cref{alg:full_scheme_construction}) is a secure interactive private quantum money scheme (see \cref{def:security_private_interactive_money}). Moreover, if $\$$ is semi-quantum, $\hat{\$}$ is also semi-quantum.
    \label{thm:full_scheme_construction_security}
\end{theorem}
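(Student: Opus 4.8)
The plan is to prove security by a straight-line, black-box reduction from a full-scheme counterfeiter $\fulladv$ (in the sense of \cref{def:security_interactive_money}) to a mini-scheme counterfeiter (in the sense of \cref{def:QM_mini_scheme}), organised as a short sequence of games in the same spirit as the proofs of \cref{prop:dollar_z_is_2_of_2} and \cref{prop:mini_scheme_construction}. Fix a QPT $\fulladv$; since it runs in polynomial time there is a polynomial $T=\poly$ bounding the number of $\mint$ and $\cverify$ invocations, and I write $\ell\le T$, $v\le T$ for the actual counts and $c_j,t_j$ for the ciphertext and tag handed to the user in the $j$-th run of $\hat{\$}.\mint$. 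Let Game~$0$ be $\counterfeit{\fulladv}{\hat{\$}}{full}$ and $S_0$ the event $w>\ell$. First I would invoke PQ-EU-CMA security of $MAC$ (\cref{def:unforgeability_of_MAC}): let $F$ be the event that some accepting run of $\hat{\$}.\cverify$ received a pair $(c,t)$ with $c\notin\{c_1,\dots,c_\ell\}$; since an accepting $\hat{\$}.\cverify$ requires the MAC check in \cref{alg:full_scheme_construction} to pass, a forger that simulates the whole experiment using its $\mac$/$\verify$ oracles in place of $k_m$ outputs, whenever $F$ occurs, a valid tag on a never-queried message, so $\Pr[F]\le\negl$. In Game~$1$, with win event $S_1=(w>\ell)\wedge\neg F$, we thus have $\Pr[S_0]\le\Pr[S_1]+\negl$. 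Conditioned on $S_1$ every accepting verification can be charged to the least $j\in[\ell]$ with $c=c_j$, and since $w\ge\ell+1$ accepting verifications are charged to $\ell$ indices, the pigeonhole principle forces some index to receive at least two.

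Next I localise that index by guessing. In Game~$2$ the challenger samples $j^\star\in_R[T]$ at the outset, and $S_2$ is $S_1$ together with the event ``$j^\star$ is the least index receiving $\ge2$ accepting verifications''; since $j^\star$ is independent of $\fulladv$'s view, $\Pr[S_2]\ge\tfrac1T\Pr[S_1]$. Before changing anything substantive I have the challenger additionally remember each minted pair $(c_j,k_{\$,j})$ and, inside $\hat{\$}.\cverify$, use the stored $k_{\$,j}$ whenever the submitted ciphertext equals $c_j$ instead of calling $ENC.\decrypt$; by (perfect) correctness of $ENC$ this rewrite changes no output, so $\Pr[S_2]$ is unaffected. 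Now Game~$3$ replaces the single ciphertext $c_{j^\star}$ by $ENC.\encrypt_{k_e}(0^{|k_\$|})$, while the genuine $k_{\$,j^\star}$ is still used for the $j^\star$-th $\$.\mint$ and for every $\$.\cverify$ on the $j^\star$-th note (this is exactly why the bookkeeping step was inserted). The only thing $\fulladv$ can observe change is one ciphertext, so PQ-IND-CPA of $ENC$ (\cref{def:indistinguishability_of_encryption}) — the distinguisher encrypts the honestly sampled $k_{\$,j^\star}$ versus $0^{|k_\$|}$, simulates everything else itself, and outputs $0$ iff the $S_2$-style event fires — yields $\lvert\Pr[S_3]-\Pr[S_2]\rvert\le\negl$.

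Finally comes the reduction itself. In Game~$3$ the key $k_{\$,j^\star}$ is used only as a black box: once to run $\$.\mint$ and thereafter to run $\$.\cverify$. So a mini-scheme counterfeiter $\miniadv$ can simulate Game~$3$ for $\fulladv$: it generates $k_m,k_e$ and $j^\star$ itself; for each $j\neq j^\star$ it handles $\hat{\$}.\mint$ honestly (sampling $k_{\$,j}$, forming $c_j,t_j$); for the $j^\star$-th minting it relays the $\$.\mint$ protocol to its own mini-scheme bank and sets $c_{j^\star}=ENC.\encrypt_{k_e}(0^{|k_\$|})$, $t_{j^\star}=MAC.\mac_{k_m}(c_{j^\star})$; and in $\hat{\$}.\cverify$ it checks the MAC with $k_m$, then for a submitted $c=c_{j^\star}$ routes the $\$.\cverify$ step to its mini-scheme verification oracle, for $c=c_j$ with $j\neq j^\star$ runs $\$.\cverify_{k_{\$,j}}$ locally, and for any other $c$ simply decrypts with $k_e$ and runs locally (which, under $\neg F$, can never be an accepting verification and so never affects the win event). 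This is a perfect simulation of Game~$3$, and whenever $S_3$ occurs $\fulladv$ produced at least two accepting verifications of $c_{j^\star}$, each of which $\miniadv$ forwarded to its oracle, against its single mini-scheme minting; that is, $\miniadv$ achieves $w'\ge2>1=\ell'$ and wins $\counterfeit{\miniadv}{\$}{mini}$. Mini-scheme security gives $\Pr[S_3]\le\negl$, hence $\Pr[S_0]\le T\bigl(\Pr[S_3]+\negl\bigr)+\negl$, which is negligible because $T=\poly$; this establishes that $\hat{\$}$ is secure. The ``semi-quantum'' clause is then immediate: $\hat{\$}.\keygen$ is classical, in $\hat{\$}.\mint$ and $\hat{\$}.\cverify$ the bank only runs $MAC$, $ENC$ and the (classical, by hypothesis) bank side of $\$.\mint$/$\$.\cverify$ while exchanging the classical strings $c,t$, and $\hat{\$}$ is plainly memoryless, so if $\$$ is semi-quantum then so is $\hat{\$}$.

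The step I expect to be the main obstacle is the correct ordering of the two middle hops: the bookkeeping rewrite — forcing the challenger to use the remembered $k_{\$,j^\star}$ rather than decrypting $c_{j^\star}$ — must be carried out \emph{before} $c_{j^\star}$ is swapped for a dummy ciphertext, for otherwise $\hat{\$}.\cverify$ in Game~$3$ would decrypt $0^{|k_\$|}$ and verify the $j^\star$-th note under the wrong key, destroying both the IND-CPA indistinguishability and the final reduction. It is also worth stressing that nothing in the argument rewinds or copies $\fulladv$'s quantum registers — every hop is straight-line and black-box in $\fulladv$ — which is precisely what keeps the reduction valid against a quantum counterfeiter.
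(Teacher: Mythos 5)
Your proof is correct and follows essentially the same approach as the paper's: a game-hopping reduction from a full-scheme counterfeiter to a mini-scheme counterfeiter, using MAC unforgeability to restrict accepting verifications to honestly minted ciphertexts, a guessed double-spent index, and an IND-CPA hop that swaps the guessed ciphertext for an encryption of zeros while verifying under the remembered key. Your version is slightly more fine-grained — the MAC event is isolated into its own hop, $j^\star$ is drawn from $[T]$ rather than $[\ell]$, and the bookkeeping rewrite keys the lookup on ciphertext identity rather than on the decrypted value equaling $0_\$$, which avoids a (negligible-probability) collision the paper implicitly tolerates — but these are presentation refinements rather than a different argument.
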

\ifnum\sigconf=0

\begin{proof}
     The proof idea is very similar to that used in \cite[Appendix C]{BS16}: we show that the success probability of any full-scheme counterfeiter able to verify more banknotes than he received is upper-bounded by the success probability of a mini-scheme counterfeiter; a mini-scheme counterfeiter could guess which banknote the full-scheme counterfeiter will double-spend, generate fake banknotes, and with non-negligible probability double-spend the single mini-scheme banknote. The following sequence of games binds the success probability of any QPT full-scheme counterfeiter to that of a QPT mini-scheme counterfeiter:

    \paragraph{Game 0.} Let $\adv$ be a QPT full scheme counterfeiter. We assume that the amount of mints and verifications requested by $\adv$ is polynomial in $\secpar$ (i.e., $\ell$ and $v$ are polynomial in $\secpar$) since an adversary that does not comply with this assumption is not QPT. We define the first game to be the original interactive private quantum money security game, $\counterfeit{\adv}{\hat{\$}}{full}$ (see \cref{def:security_private_interactive_money}).

    Let $S_0$ be the event where $w > \ell$ (see \cref{def:security_private_interactive_money}) in Game 0 (this is the original win condition of the interactive private quantum money security game, since we assume $\ell$ and $v$ are polynomial in $\secpar$).

    \paragraph{Game 1.} We change Game 0 slightly by adding the condition that a specific banknote is double-spent: recall that $\ell$ and $v$ are the numbers of times $\hat{\$}.\mint$ and $\hat{\$}.\cverify$ are run during Game 0, respectively. In the start of Game 1 a uniform $i \in_R [\ell]$ is chosen by the bank. Let $(c_j, t_j, \ket{\$_j})$ be the result of the $j^{th}$ mint, and let $w_j$ be the amount of verifications such that $\hat{\$}.\cverify(c_j, t, \ket{\$}) = 1$ for some $t$, $\ket{\$}$. Let $\hat{j}$ be the smallest $j$ such that $w_j \geq 2$ ($\hat{j}=\infty$ if for all $j \in [\ell]: w_j < 2$).

    Let $S_1$ be the event where $w > \ell \wedge i = \hat{j}$ in Game 1.

    Assume $S_0$ occurred. Due to the unforgeability of MAC (see \cref{def:unforgeability_of_MAC}), we know that in every successful verification, $\adv$ presented $(c_j, t, \ket{\$})$ for some $j \in [\ell], t, \ket{\$}$\footnote{Suppose $\adv$ passes with non-negligible probability a verification of $(c, t, \ket{\$})$ such that $c \neq c_j \; \forall j \in [\ell]$. In that case $MAC.\verify(c, t)=1$ with non-negligible probability. We could use $\adv$ to construct a forger $\forger$ with non-negligible success probability: $\forger$ simulates a bank, but instead of signing and verifying with the MAC key generated in $\hat{\$}.\keygen$, he uses the signing and verification oracles. He then runs $\adv$ against the simulated bank, and will be able to present $c, t$ which pass MAC verification with non-negligible probability, while he did not ask for a tag of $c$ before since $c \neq c_j \; \forall j \in [\ell]$.}. Therefore, since $\adv$ was given only $\ell$ pairs $(c_j, t_j)$ (from the $\ell$ times that $\hat{\$}.\mint$ was run), and there were $w > \ell$ successful verifications from the assumption that $S_0$ occurred, then from the pigeonhole principle we conclude that $w_j \geq 2$ for some $j$, meaning $1 \leq \hat{j} \leq \ell$. Since $i\in[\ell]$ was chosen randomly and independently to $\hat{j}$, given $S_0$, there is a $\frac{1}{\ell}$ probability that $i = \hat{j}$; in which case $S_1$ occurs --- therefore $\Pr[S_1|S_0] = \frac{1}{\ell} \cdot \Pr[S_0]$.

    Assume $S_0$ did not occur: then we know $w \leq \ell$, meaning $S_1$ also did not occur --- namely, $\Pr[S_1|\neg S_0] = 0$. So $\Pr[S_1]=\frac{1}{\ell} \cdot \Pr[S_0]$, meaning $\Pr[S_0]$ is $\Pr[S_1]$ times some polynomial in $\secpar$.

    \paragraph{Game 2.} We now change the above game such that now, on the $i^{th}$ mint\footnote{$\adv$ can, of course, run several mint protocols simultaneously. We number them according to the order they were initiated.}, instead of encrypting and signing the mini-scheme key from line \ifnum\sigconf=0 \ref{{hat_dollar:mint:gen}} \else \ref{hat_dollar:mint:gen} \fi (the one later used in $\$.\mint$), the bank encrypts and signs $0_\$$ (where $0_\$$ is a string of $0$'s the length of a mini-scheme key); i.e., on the $i^{th}$ mint we replace lines \ifnum\sigconf=0 \ref{{hat_dollar:mint:enc}} \else \ref{hat_dollar:mint:enc} \fi and \ifnum\sigconf=0 \ref{{hat_dollar:mint:sign}} \else \ref{hat_dollar:mint:sign} \fi with:
    \begin{align*}
        &c \gets ENC.\encrypt_{k_e}(0_\$)\\
        &t \gets MAC.\mac_{k_m}(c)\;.
    \end{align*}
    On $\hat{\$}.\cverify(c, t, \ket{\$})$, if $ENC.\decrypt_{k_e}(c) = 0_\$$, then the bank runs $\$.\cverify$ with the original mini-scheme key that was used in the $i^{th}$ mint (the one originally generated in line \ifnum\sigconf=0 \ref{{hat_dollar:mint:gen}} \else \ref{hat_dollar:mint:gen} \fi of the $i^{th}$ mint) rather than with $0_\$$.

    Let $S_2$ be the event where $w > \ell \wedge i = \hat{j}$ in Game 2.

    Game 2 is different from Game 1 only in the $i^{th}$ mint, and the sole difference in the $i^{th}$ mint is that $\adv$ receives an encrypted and signed $0_\$$ rather than the key that was used in $\$.\mint$. Similarly, in a verification for $(c, t, \ket{\$})$ such that $c=ENC.\encrypt_{k_e}(0_\$)$ for some $t, \ket{\$}$, the mini-scheme bank verifies $\ket{\$}$ with the mini-scheme key that was generated in the $i^{th}$ mint, that in Game 1 is sent instead of $0_\$$.

    That means that the only difference between Game 2 and Game 1 is in what $\adv$ receives on the $i^{th}$ mint; on the $i^{th}$ mint, $\adv$ receives a signed encryption of a random key rather than the key used to sign the mini-scheme banknote he received, but the same key will be used to verify it, just like on a normal verification. So, due to the indistinguishability of ENC, replacing an encryption of one message with the encryption of another message of the same length\footnote{Indistinguishability works for messages of the same length. Here we assume, without loss of generality, that $\keygen(\secparam)$ always outputs keys of the same length.} cannot change the behavior of $\adv$, i.e., $|\Pr[S_2]-\Pr[S_1]| \leq  \negl$\footnote{Assume $\abs{\Pr[S_2]-\Pr[S_1]}$ is non-negligible. Assume without loss of generality that $\Pr[S_2] \leq \Pr[S_1]$. In that case, we could construct a distinguisher $\dist$ with non-negligible success probability: $\dist$ will simulate a bank, but instead of encrypting with the ENC key generated in $\hat{\$}.\keygen$, he will use the encryption oracle, and instead of decrypting he will "remember" each encryption he made and thus could match each encryption to the relevant key (any unrecognized encryption would not have passed the real bank verification either, because the encryptions are MAC signed). On the $i^{th}$ mint, he will present a random key and the actual mini-scheme key used in that mint as $m_0$ and $m_1$ (the chosen messages whose encryptions he needs to recognize in the CPA game) respectively, and proceed with the encryption he received to finish the game. $\dist$ returns $b'=1$ (guessing the encryption he received was of the real key) if and only if he wins the counterfeiting game (since he wins with higher probability when he receives encryption of the real key). $\dist$ has a $\frac{1}{2} + \frac{\abs{\Pr[S_2]-\Pr[S_1]}}{2}$ probability to win, which is non-negligible, in contradiction to the security of ENC.}.

    \paragraph{Bound on success probability.} We show a reduction mapping a Game 2 counterfeiter to a mini-scheme counterfeiter:

    Let $\adv$ be a QPT full-scheme counterfeiter. We construct a mini-scheme counterfeiter $\miniadv$ in the following manner:

    $\miniadv$ simulates the bank of Game 2, with one exception: on the $i^{th}$ mint, instead of generating the actual mini-scheme key and banknote, $\miniadv$ asks the actual mini-scheme bank to run $\$.\mint$. Similarly, when performing a verification for $0_\$$, $\miniadv$ asks the actual mini-scheme bank for verification. $\miniadv$ runs $\adv$ against the altered version of Game 2. The only difference from the original Game 2 is that on the $i^{th}$ mint $\miniadv$ asks the bank to generate the banknote, and when he receives $0_\$$ he asks the bank to verify that same note. The honest mini-scheme bank runs minting and verification on that banknote in the exact same way as the bank in Game 2 should, meaning that $\Pr[S_2]$ is unchanged for any $\adv$ by the simulated Game 2. In the case that $S_2$ occurred, $\miniadv$ passed at least two verifications with $0_\$$, meaning he passed two verifications with the actual bank, while only asking mint once. So $\miniadv$ has a probability of $\Pr[S_2]$ to pass win the mini-scheme counterfeiting game, and we showed that the success probability of any QPT counterfeiter to do so is negligible --- meaning $\Pr[S_2]$ must be negligible.

    From construction, $\Pr[S_0]$ is also negligible. Game 0 was defined to be the original full-scheme security game, and $S_0$ was defined as its win condition; so $\hat{\$}$ is secure (see \cref{def:security_private_interactive_money}).
\end{proof}

\else
The proof of \cref{thm:full_scheme_construction_security} is given in the full version~\cite{RS19}. \fi

\section{LWE Implies Semi-Quantum Money}
\label{sec:main_thm_proof}
For convenience, we restate the main theorem of our private semi-quantum money result:

\privateresult*

\begin{proof}
    From \ifnum\sigconf=0 Theorem \ref{thm:LWE_implies_NTCF} \else \cite[Theorem 26]{BCM+18} (see also the full version~\cite{RS19})\fi we get that the hardness of LWE with certain parameters implies that an NTCF family exists. From Theorem \ref{thm:NTCF_implies_1-of-2_puzzles} we get that an NTCF implies $\frac{1}{2}$-hard 1-of-2 puzzles, and from Corollary \ref{cor:weak_implies_strong_puzzle} we get that weak 1-of-2 puzzles (and in particular, $\frac{1}{2}$-hard 1-of-2 puzzles) imply strong 1-of-2 puzzles.

     From \cite{BZ13,GHS16} we get that the hardness of LWE with certain parameters\footnote{Both \cite{BZ13} and \cite{GHS16} rely on Quantum Pseudorandom Functions (QPRF). From Banerjee et al.~\cite{BPR12} and Zhandry~\cite{Zha12} we get that QPRFs can be constructed from LWE with certain parameters.} (that are different to those used for NTCF) implies that a PQ-EU-CMA MAC and a PQ-IND-CPA encryption exist. By combining these with  Propositions~\ref{prop:mini_scheme_construction_correctness},~\ref{prop:dollar_z_is_2_of_2},~\ref{prop:mini_scheme_construction} and ~\ref{prop:full_scheme_construction_completeness} and \cref{thm:full_scheme_construction_security} (based on the constructions of \cref{alg:mini_scheme_construction} and \cref{alg:full_scheme_construction}), we get secure semi-quantum private money from 1-of-2 puzzles.
\end{proof}

\section{Semi-Quantum Money Requires Computational Assumptions}
\label{sec:computational_assumptions_are_necessary}
Both our public and private semi-quantum money constructions rely on computational assumptions. For our public construction we know this is necessary directly from the fact that public quantum money schemes cannot be secure against computationally unbounded adversaries~\cite{AC13}. However, it is known that private quantum money schemes could be information-theoretically secure\footnote{Wiesner's original quantum money scheme \cite{Wie83} was proven to be information-theoretically secure \cite{MVW13}, and Pastawski et al. \cite{PYJ+12} even showed a slightly modified version of the scheme that is also noise-tolerant.}. In this section we show that for semi-quantum money schemes, and in fact for any quantum money scheme with classical minting, computational assumptions are necessary. We do so by showing an adversary that, after running a single minting protocol with the bank, can create 2 states which pass verification with non-negligible probability:

\begin{theorem}\label{thm:no_it_secure_classical_minting}
Let \nom{$\$_C$}{A quantum money scheme with classical minting}{DOLLAR_C} be a quantum money scheme with classical minting. Then there is a computationally unbounded adversary $\adv$ such that:
\begin{align*}
    \Pr[k &\sample \$_C.\keygen(\secparam); \ket{\$} \sample \$_C.\mint_k(\secparam); (\ket{\$_1}, \ket{\$_2}) \gets \adv(\secparam, \ket{\$});\\
    &\$_C.verify(\ket{\$_1}) = \$_C.verify(\ket{\$_2}) = 1] = p
\end{align*}
where $p$ is non-negligible.
\end{theorem}

\begin{proof}
    The adversary's strategy is simple --- she simply recreates the exact run of the protocol on her own by aborting and rerunning whenever a measurement result differs from the original run with the bank, and uses the messages from the original run to substitute for the bank. Eventually she will be able to rerun the protocol with the same measurement results (this would occur only after an exponential number of tries, but this is fine since the adversary is unbounded), resulting in another quantum money state identical to that minted with the bank.
\end{proof}

Note that \cref{thm:no_it_secure_classical_minting} does not hold for a memory-dependant scheme (see \cref{def:memory_dependent}) where the database can be modified during the verification protocol, since the bank could keep track of money that has been verified. However, a private\footnote{This discussion is not relevant to the public setting.} scheme where this is allowed is not interesting, since if such modifications are considered we can even construct a simple secure classical scheme (see \cref{sec:advantage_of_statelessness}). Therefore, when considering memory-dependant private quantum money schemes, the database can be modified only during the minting protocol.

It is known that only private quantum money schemes which are memory-dependant can be information-theoretically secure --- Aaronson \cite[Theorem 8]{Aar18} showed that any semi-quantum money scheme with a fixed key (i.e., any scheme that is not memory-dependant) is vulnerable to counterfeiting by an unbounded adversary. Obviously, this also holds for schemes with classical minting that are not memory-dependant. However, \cref{thm:no_it_secure_classical_minting} also holds for schemes with classical minting which \textit{are} memory-dependant, thus getting a stronger result --- that no quantum money scheme with classical minting can be information-theoretically secure.

\section{Discussion}
\label{sec:discussion}
The main question that is raised in this work is the following. There are many multi-party quantum cryptographic protocols which require that both parties have quantum resources. This work elicits an important question: is there a way (preferably, as general as possible) to convert some of these protocols to ones in which at least one of the parties does not need a quantum computer? A weaker open question can be posed from the perspective of device-independent cryptography: can at least one party use an \emph{untrusted} quantum computer in unison with a trusted classical computer? We emphasize that device independent protocols (see~\cite{VV19,FRV19} and references therein), such as DI quantum key distribution, DI randomness expansion\footnote{Also known as certified randomness.} and randomness amplification, use unconditional (information theoretic) security notions, while our protocols are only computationally secure.

Additional questions are raised when considering Remote State Preparation (RSP). RSP is a protocol that allows a classical client to create quantum states on an untrusted quantum server, with different levels of security; a protocol can promise blindness~\cite{DK16} (i.e., the server learns nothing about the state) and verifiability~\cite{GV19, CCKW19} (i.e., the client can verify the right states were created). This raises the following question: is there a general way to turn a classically-verifiable quantum money scheme into a semi-quantum money scheme using remote state preparation for the user-side minting? We note that Pastawski et al.~\cite{PYJ+12} proved that a simple variant of Wiesner's scheme is classically verifiable. Their scheme also tolerates constant level of noise. Their construction only requires preparations of the states $\ket{0},\ket{1},\ket{+}$ and $\ket{-}$ which the RSP constructions above support. A recent work by Badertscher et al. \cite{BCC+20} shows that an RSP protocol cannot be composable in the abstract cryptography framework --- this result is somewhat discouraging, but there might still be a way to use RSP to create semi-quantum money schemes.

Moreover, can the randomness generation protocol from~\cite{BCM+18} be amplified by our parallel repetition result? Currently the protocol has $N$ rounds; could this number be made constant using parallel repetition?



\subsection*{Acknowledgments}
We wish to thank Zvika Brakerski, Urmila Mahadev and Thomas Vidick for fruitful discussions. We also wish to thank an anonymous referee for insightful comments. 
\ifnum \anon=0
O.S. and R.R. are supported by the Israel Science Foundation (ISF) grant No. 682/18 and 2137/19, and by the the Cyber Security Research Center at Ben-Gurion University.
\fi

\ifnum\sigconf=1
    \bibliographystyle{ACM-Reference-Format}
\else
    \ifnum\cryptology=1
        \bibliographystyle{abbrvurl}
    \else
        \bibliographystyle{alphaabbrurldoieprint}
    \fi
\fi

\ifnum\masterthesis=0
    \ifnum\cryptology=0
        \mybib
    \fi
\fi

\ifnum\sigconf=0

\appendix
\section{Nomenclature}
\label{sec:nomenclature}
\renewcommand{\nomname}{} \vspace{-10pt} \printnomenclature 

\section{Preliminaries}
\label{sec:preliminaries}
This appendix contains mainly the standard definitions of private-key encryption and message authentication codes (MAC), and can be safely skipped by readers already familiar with these notions.

We use the standard definitions for negligible, non-negligible and noticeable functions --- see, e.g.,~\cite{Gol01}. 

\begin{definition}[{Private-key encryption system, \cite[Definition 3.7]{KL14}}]\label{def:ENC}
    A private-key encryption scheme consists of three PPT algorithms $\keygen$, $\encrypt$ and $\decrypt$ such that:
    \begin{enumerate}
        \item The randomized key-generation algorithm $\keygen$ takes as input $\secparam$ and outputs a key $k \gets \keygen(\secparam)$.
        \item The (possibly randomized) encryption algorithm $\encrypt$ takes as input a key $k$ and a plaintext message $m \in \{0, 1\}^*$, and outputs a ciphertext $c \gets \encrypt_k(m)$.
        \item The deterministic decryption algorithm $\decrypt$ takes as input a key $k$ and a ciphertext $c$, and outputs a message $m \coloneqq Dec_k(c)$.
    \end{enumerate}
    A private-key encryption system is required to have \emph{perfect completeness}, meaning that for every $\secpar$, every $k$ output by $\keygen(\secparam)$, and every $m \in \{0, 1\}^*$, it holds that $\decrypt_k(\encrypt_k(m)) = m$.
\end{definition}

\begin{definition}[{\nom{PQ-IND-CPA}{Post-quantum indistinguishable encryptions under a chosen-plaintext attack}{PQINDCPA}, adapted from \cite[Definition 3.22]{KL14}}]\label{def:indistinguishability_of_encryption}
    A private-key encryption scheme $\Pi$ has \emph{post-quantum indistinguishable encryptions under a chosen-plaintext attack} (PQ-IND-CPA) if for every QPT distinguisher \nom{$\dist$}{An encryption distinguisher --- i.e., an adversary whose goal is to distinguish between an actual encryption and a random string}{D} there is a negligible function $\negl$ such that, for all $\secpar$:
    \begin{equation*}
        \Pr[\indcpa_{\dist,\Pi}(\secpar) = 1] \leq \frac{1}{2} + \negl\;.
    \end{equation*}
    
    The indistinguishability game $\indcpa_{\dist,\Pi}(\secpar)$:
    \begin{enumerate}
        \item A key $k$ is generated by running $\keygen(\secparam)$.
        \item The distinguisher $\dist$ is given input $\secparam$ and classical oracle access to $\encrypt_k(\cdot)$, and outputs a pair of messages $m_0, m_1$ of the same length.
        \item A uniform bit $b \in_R \{0, 1\}$ is chosen, and then a ciphertext $c \gets \encrypt_k(m_b)$ is computed and given to $\dist$.
        \item $\dist$ continues to have oracle access to $\encrypt_k(\cdot)$ and outputs a bit $b'$.
        \item The output of the game is defined to be $1$ if $b'=b$, and $0$ otherwise. In the former case, we say that $\dist$ \emph{succeeds}.
    \end{enumerate}
\end{definition}

\begin{definition}[{Message authentication code \cite[Definition 4.1]{KL14}}]\label{def:MAC}
    A message authentication code (\nom{MAC}{Message Authentication Code}{MAC}) consists of 3 PPT algorithms $\keygen,\ \mac$ and $\verify$ satisfying:
    \begin{enumerate}
        \item $\keygen$ takes as input the security parameter $\secparam$ and outputs a key $k$
        \item $\mac$ takes as input a key $k$ and a message $m \in \{0,1\}^*$ and outputs a tag $t \gets \mac_k(m)$.
        \item $\verify$ takes as input a key $k$, a message $m$, and a tag $t$. It outputs a bit $b \coloneqq \verify_k(m,t)$, with $b=1$ meaning \textbf{valid} and $b=0$ meaning \textbf{invalid}.
    \end{enumerate}
    A MAC is required to have \textit{perfect completeness}, i.e., for every $\secpar$, every key $k \gets \keygen(\secparam)$ and every $m \in \{0,1\}^*$, it holds that $\verify_k(m, \mac_k(m))=1$.
\end{definition}

\begin{definition}[{\nom{PQ-EU-CMA}{Post-quantum existentially unforgeable under an adaptive chosen-message attack}{PQEUCMA} MAC, adapted from \cite[Definition 4.2]{KL14}}]\label{def:unforgeability_of_MAC}
    A message authentication code $\Pi$ is \emph{Post-Quantum Existentially Unforgeable under an adaptive Chosen-Message Attack} (PQ-EU-CMA) if for every QPT forger \nom{$\forger$}{A MAC or digital signature forger --- i.e., an adversary whose goal is to break the security of a MAC or digital signature scheme}{F}, there exists a negligible function $\negl$ such that:
    \begin{equation*}
        \Pr[\macforge{\Pi} = 1] \leq \negl\;.
    \end{equation*}
    
    The CMA message authentication game $\macforge{\Pi}$:
    \begin{enumerate}
        \item A key $k$ is generated by running $\keygen(\secparam)$.
        \item The forger $\forger$ is given input $\secparam$, classical oracle access to $\mac_k(\cdot)$ and classical oracle access to $\verify_k(\cdot)$ (note that the forger cannot query the oracles in superposition). The forger eventually outputs $(m,t)$. Let $\mathcal{Q}$ denote the set of all queries that $\forger$ asked its signing oracle. 
        \item $\forger$ succeeds if and only if \textit{(1)} $\verify_k(m,t) = 1$ and \textit{(2)} $m \notin \mathcal{Q}$. In that case the output of the game is defined to be $1$.
    \end{enumerate}
\end{definition}

\begin{definition}[{Digital signature scheme \cite[Definition 12.1]{KL14}}]\label{def:digital_signature}
    A digital signature scheme consists of three PPT algorithms $\keygen$, $\sign$ and $\verify$ such that:
    \begin{enumerate}
        \item The key-generation algorithm $\keygen$ takes as input a security parameter $\secparam$ and outputs a pair of keys $(pk, sk)$. These are called the public key and the private key, respectively. We assume that $pk$ and $sk$ each has length of at least $\secpar$, and that $\secpar$ can be determined from either.
        \item The signing algorithm $\sign$ takes as input a private key $sk$ and a message $m$. It outputs a signature $\sigma \gets \sign_{sk}(m)$.
        \item The deterministic verification algorithm $\verify$ takes as input a public key $pk$, a message $m$ and a signature $\sigma$. It outputs a bit $b \gets \verify_{sk}(m,\sigma)$, with $b=1$ meaning \textbf{valid} and $b=0$ meaning \textbf{invalid}.
    \end{enumerate}
    A digital signature scheme is required to have \emph{perfect completeness}, meaning that except with negligible probability over $(pk, sk)$ output by $\keygen(\secparam)$, it holds that $\verify_{pk}(m, \sign_{sk}(m))=1$ for every legal message $m$.
\end{definition}

\begin{definition}[{PQ-EU-CMA digital signature scheme, adapted from \cite[Definition 12.2]{KL14}}]\label{def:unforgeability_of_digital_signature}
    A digital signature scheme $\Pi$ is \emph{Post-Quantum Existentially Unforgeable under an adaptive Chosen Message Attack} (PQ-EU-CMA) if for every QPT forger $\forger$, there exists a negligible function $\negl$ such that:
    \begin{equation*}
        \Pr[\sigforge{\Pi} = 1] \leq \negl\;.
    \end{equation*}
    
    The signature experiment $\sigforge{\Pi}$:
    \begin{enumerate}
        \item $\keygen$ is run to generate to obtain keys $(pk, sk)$.
        \item Forger $\forger$ is given $pk$ and access to a signing oracle $\sign_{sk}(\cdot)$. The forger than outputs $(m, \sigma)$. Let $Q$ denote the set of all queries that $\forger$ asked its oracle.
        \item $\forger$ succeeds iff $\verify_{pk}(m, \sigma)=1$ and $m \notin Q$. In this case the output of the experiment is defined to be 1 (and otherwise 0).
    \end{enumerate}
\end{definition}

\begin{lemma}[{Difference Lemma \cite[Lemma 1]{Sho04}}]
    \label{lem:difference}
    Let $A, B, F$ be events defined in some probability distribution, and suppose that $A \wedge \neg F \iff B \wedge \neg F$. Then $\abs{\Pr[A] - \Pr[B]} \leq \Pr[F]$.
\end{lemma}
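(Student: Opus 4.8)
The plan is to prove this purely by elementary manipulation of probabilities, using the hypothesis $A \wedge \neg F \iff B \wedge \neg F$ to cancel the ``$\neg F$ part'' of each probability and then bounding the remaining ``$F$ part'' crudely by $\Pr[F]$.

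First I would use the law of total probability to split each event along $F$ and $\neg F$, writing $\Pr[A] = \Pr[A \wedge F] + \Pr[A \wedge \neg F]$ and $\Pr[B] = \Pr[B \wedge F] + \Pr[B \wedge \neg F]$. Next, from the assumption $A \wedge \neg F \iff B \wedge \neg F$ I would conclude $\Pr[A \wedge \neg F] = \Pr[B \wedge \neg F]$, since the two events are literally the same event. Subtracting the two displayed identities then gives
\begin{equation*}
    \Pr[A] - \Pr[B] = \Pr[A \wedge F] - \Pr[B \wedge F].
\end{equation*}

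Finally I would observe that both $\Pr[A \wedge F]$ and $\Pr[B \wedge F]$ lie in the interval $[0, \Pr[F]]$ (nonnegativity of probability on one side, monotonicity $A \wedge F \subseteq F$ and $B \wedge F \subseteq F$ on the other), so their difference has absolute value at most $\Pr[F]$; combining with the identity above yields $\abs{\Pr[A] - \Pr[B]} \leq \Pr[F]$, as claimed. There is no real obstacle here — the only point requiring the slightest care is to make explicit that $A\wedge\neg F$ and $B\wedge\neg F$ being equivalent means they have equal probability, and to note that the bound $\abs{x-y}\le \Pr[F]$ for $x,y\in[0,\Pr[F]]$ is immediate rather than appealing to, say, a triangle inequality that would only give $2\Pr[F]$.
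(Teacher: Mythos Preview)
Your proof is correct and is exactly the standard argument for this lemma. Note that the paper itself does not give a proof of this statement; it merely states the lemma and cites Shoup~\cite{Sho04}, so there is no paper proof to compare against beyond observing that your argument matches the original one in the cited reference.
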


\section{Quantum Lightning with Bolt-to-Certificate}
\label{sec:ql_with_bolt_to_cert}
The following definitions are taken almost verbatim from \cite{CS20}. The definitions originate in \cite{Zha19} and \cite{Col19}, but in this work we use the notations of the superseding work \cite{CS20}.

\begin{definition}[{Quantum Lightning \cite{Zha19}}]\label{def:quantum_lightning}
    A quantum lightning scheme consists of a PPT algorithm $\ql.\setup(\secparam)$ (where $\secpar$ is a security parameter) which samples a pair of QPT algorithms $(\genbolt, \verbolt)$. $\genbolt$ outputs a pair of the form $\ket{\psi} \in \mathcal{H}_\$, s \in \{0,1\}^\secpar$. We refer to $\ket{\psi}$ as a "bolt" and to $s$ as a "serial number". $\verbolt$ takes as input a pair of the same form, and outputs either "accept" (1) or "reject" (0). They satisfy the following:
    \begin{itemize}
        \item
        \begin{align*}
            \Pr[(\genbolt, \verbolt) &\gets \ql.\setup(\secparam); (\ket{\psi}, s) \gets \genbolt() : \\
            &\verbolt(\ket{\psi}, s) = 1] \\
            &= 1 - \negl\;.
        \end{align*}
        \item For all $s' \in \{0,1\}^\secpar$:
        \begin{align*}
            \Pr[(\genbolt, \verbolt) &\gets \ql.\setup(\secparam); (\ket{\psi}, s) \gets \genbolt() : \\
            &s \neq s' \wedge \verbolt(\ket{\psi}, s') = 1] \\
            &= \negl\;.
        \end{align*}
    \end{itemize}
\end{definition}

\begin{definition}[Security \cite{Zha19}]\label{def:ql_security}
    A quantum lightning scheme $\ql$ is secure if, for all QPT bolt forgers \nom{$\boltforger$}{A quantum lightning adversary, i.e., an adversary whose goal is to pass verification for two bolts with the same serial number}{L}:
    \begin{equation*}
        \Pr[\boltforge{\ql} = 1] = \negl\;.
    \end{equation*}
    
    The bolt forging game $\boltforge{\ql}$:
    \begin{enumerate}
        \item The challenger runs $(\genbolt, \verbolt) \gets \ql.\setup(\secparam)$ and sends $(\genbolt, \verbolt)$ to $\boltforger$.
        \item $\boltforger$ produces a pair $\ket{\Psi_{12}} \in \mathcal{H}_\$^{\otimes 2}, s \in \{0,1\}^\secpar$.
        \item The challenger runs $\verbolt(\cdot, s)$ on each half of $\ket{\Psi_{12}}$. The output of the game is 1 if both outcomes are "accept" (and otherwise 0).
    \end{enumerate}
\end{definition}

\begin{definition}[Bolt-to-certificate]\label{def:bolt_to_certificate}
    For a quantum lightning scheme $\ql$ to have bolt-to-certificate capability, we change the procedure $\ql.\setup(\secparam)$ slightly, so that it outputs a quadruple
    \begin{equation*}
        (\genbolt, \verbolt, \gencert, \vercert)\;,
    \end{equation*}
    where $\gencert$ is a QPT algorithm that takes as input a quantum money state and a serial number and outputs a classical string of some fixed length $l(\secpar)$ for some polynomially bounded function $l$, to which we refer as a \emph{certificate}, and $\vercert$ is a PPT algorithm that takes as input a serial number and a certificate, and outputs "accept" (1) or "reject" (0).
    
    Let $\secpar \in \mathbb{N}$. We say that a quantum lightning scheme $\ql$ has bolt-to-certificate capability if:
    \begin{itemize}
        \item
        \begin{align*}
            \Pr[(\genbolt&, \verbolt, \gencert, \vercert) \gets \ql.\setup(\secparam); \\
            &(\ket{\psi}, s) \gets \genbolt(); c \gets \gencert(\ket{\psi}, s) : \\
            & \vercert(s, c) = 1] \\
            &= 1 - \negl\;.
        \end{align*}
        \item For all QPT algorithms $\certforger$:
        \begin{equation*}
            \Pr[\certforge{\ql} = 1] = \negl\;.
        \end{equation*}
    \end{itemize}
    
    The certificate forging game $\certforge{\ql}$:
    \begin{enumerate}
        \item The challenger runs
        \begin{equation*}
            (\genbolt, \verbolt, \gencert, \vercert) \gets \ql.\setup(\secparam)\;,
        \end{equation*}
        and sends the quadruple to $\certforger$.
        \item $\certforger$ returns $c \in \{0,1\}^{l(\secpar)}$ and $(\ket{\psi}, s)$.
        \item The challenger runs $\vercert(s,c)$ and $\verbolt(\ket{\psi}, s)$, and outputs 1 if they both accept (otherwise outputs 0).
    \end{enumerate}
\end{definition}

\section{Trapdoor Claw-Free Families}
\label{sec:definition_NTCF}
Most of this section is taken verbatim from Brakerski et al.~\cite{BCM+18}.
Let $\secpar$ be a security parameter, and let $\sX$ and $\sY$ be finite sets (depending on $\secpar$). For our purposes, an ideal family of functions $\mathcal{F}$ would have the following properties. For each public key $k$, there are two functions $\{f_{k,b}:\sX\rightarrow \sY\}_{b\in\{0,1\}}$ that are both injective, that have the same range (equivalently, $(b,x)\mapsto f_{k,b}(x)$ is $2$-to-$1$), and that are invertible given a suitable trapdoor $t_k$ (i.e., $t_k$ can be used to compute $x$ given $b$ and $y=f_{k,b}(x)$). Furthermore, the pair of functions should be claw-free: it must be hard for an attacker to find two pre-images $x_0,x_1\in\sX$ such that $f_{k,0}(x_0) = f_{k,1}(x_1)$. Finally, the functions should satisfy an adaptive hardcore bit property, which is a stronger form of the claw-free property: assuming for convenience that $\sX= \{0,1\}^w$, we want it to be computationally infeasible to simultaneously generate $(b,x_b)\in\{0,1\}\times \sX$ and a non-zero string $d\in \{0,1\}^w$ such that with a non-negligible advantage over $\frac{1}{2}$ the equation $d\cdot (x_0\oplus x_1)=0$ holds, where $x_{1-b}$ is defined as the unique element such that $f_{k,1-b}(x_{1-b})=f_{k,b}(x_b)$.

Unfortunately, we (as well as Brakerski et al.) do not know how to construct a function family that exactly satisfies all these requirements under standard cryptographic assumptions. Instead, Brakerski et al. construct a family that satisfies slightly relaxed requirements based on the hardness of the learning with errors (LWE) problem, and we will show that these are still adequate for our purposes. The requirements are relaxed as follows. First, the range of the functions is no longer a set $\sY$; instead, it is  $\mathcal{D}_{\sY}$, the set of probability densities over $\sY$. That is, each function returns a density, rather than a point. The trapdoor injective pair property is then described in terms of the support of the output densities: these supports should either be identical for a colliding pair or be disjoint in all other cases. 

The consideration of functions that return densities elicits an additional requirement of efficiency: there should exist a quantum polynomial-time procedure that efficiently prepares a superposition over the range of the function, i.e., for any key $k$ and $b\in\{0,1\}$, the procedure can prepare a state that is close (up to a negligible trace distance) to the state
\begin{equation*}
\frac{1}{\sqrt{\sX}}\sum_{x\in \sX, y\in \sY}\sqrt{f_{k,b}(x)(y)}\ket{x}\ket{y}\;.
\end{equation*}

We modify the adaptive hardcore bit requirement slightly. Since the set $\sX$ may not be a subset of binary strings, we first assume the existence of an injective, efficiently invertible map $\inj:\sX\to\{0,1\}^w$. Next, we only require the adaptive hardcore bit property to hold for a subset of all nonzero strings rather than for the set $\{0,1\}^w\setminus \{0^w\}$. Finally, membership in the appropriate set should be efficiently checkable, given access to the trapdoor. 

\begin{definition}[NTCF family]\label{def:trapdoorclawfree}
Let $\secpar$ be a security parameter. Let $\sX$ and $\sY$ be finite sets.
 Let $\mathcal{K}_{\mathcal{F}}$ be a finite set of keys. A family of functions 
$$\mathcal{F} \,=\, \big\{f_{k,b} : \sX\rightarrow \mathcal{D}_{\sY} \big\}_{k\in \mathcal{K}_{\mathcal{F}},b\in\{0,1\}}     $$
is called a \textbf{noisy trapdoor claw free (NTCF) family} if the following conditions hold:

\begin{enumerate}
\item{\textbf{Efficient Function Generation.}}\label{it:NTCF_efficient_function_generation}
 There exists an efficient probabilistic algorithm $\keygen_{\mathcal{F}}$ which generates a key $k\in \mathcal{K}_{\mathcal{F}}$ together with a trapdoor $t_k$: 
$$(k,t_k) \leftarrow \keygen_{\mathcal{F}}(\secparam)\;.$$
\item{\textbf{Trapdoor Injective Pair.}} \label{it:NTCF_trapdoor_injective_pair} For all keys $k\in \mathcal{K}_{\mathcal{F}}$ the following conditions hold. 
	\begin{enumerate}
	\item \textit{Trapdoor}:\label{it:NTCF_trapdoor} For all $b\in\{0,1\}$ and $x\neq x' \in \sX$, $\supp(f_{k,b}(x))\cap \supp(f_{k,b}(x')) = \emptyset$. Moreover, there exists an efficient deterministic algorithm $\textrm{INV}_{\mathcal{F}}$ such that for all $b\in \{0,1\}$,  $x\in \sX$ and $y\in \supp(f_{k,b}(x))$, $\textrm{INV}_{\mathcal{F}}(t_k,b,y) = x$. 

	\item \textit{Injective pair}:\label{it:NTCF_injective_pair}  There exists a perfect matching $\sR_k \subseteq \sX \times \sX$ such that $f_{k,0}(x_0) = f_{k,1}(x_1)$ if and only if $(x_0,x_1)\in \sR_k$. 
	\end{enumerate}

\item{\textbf{Efficient Range Superposition.}\label{it:NTCF_efficient_range_superposition}%
\footnote{Here we use a slightly weaker (and simpler) definition compared to Brakerski et al. Our definition follows from theirs by using Lemma 2 in~\cite{BCM+18}, which relates the Hellinger distance to the trace distance.}}
There exists an efficient procedure SAMP$_{\mathcal{F}}$ that on input $k$ and $b\in\{0,1\}$ prepares a state $\ket{\psi'}$ which has a negligible trace distance to the state 
\begin{equation*}
    \ket{\psi} = \frac{1}{\sqrt{|\sX|}}\sum_{x\in \sX,y\in \sY}\sqrt{(f_{k,b}(x))(y)}\ket{x}\ket{y}\;.
\end{equation*}

\item{\textbf{Adaptive Hardcore Bit.}} \label{it:NTCF_adaptive_hardcore_bit}
For all keys $k\in \mathcal{K}_{\mathcal{F}}$ the following conditions hold, for some integer $w$ that is a polynomially bounded function of $\secpar$. 
	\begin{enumerate}
	\item \label{it:NTCF_adaptive_hardcore_bit_a} For all $b\in \{0,1\}$ and $x\in \sX$, there exists a set $\dset_{k,b,x}\subseteq \{0,1\}^{w}$ such that $\Pr_{d\leftarrow_U \{0,1\}^w}[d\notin \dset_{k,b,x}]$ is negligible, and moreover there exists an efficient algorithm that checks for membership in $\dset_{k,b,x}$ given $k,b,x$ and the trapdoor $t_k$. 
	\item \label{it:NTCF_adaptive_hardcore_bit_b} There is an efficiently computable injection $\inj:\sX\to \{0,1\}^w$, such that $\inj$ can be inverted efficiently on its range, and such that the following holds. If
	\begin{eqnarray}\label{eq:defsetsH}
	\begin{aligned}
	H_k &=& \big\{(b,x_b,d,d\cdot(\inj(x_0)\oplus \inj(x_1)))\,|\\ &&b\in \{0,1\},\; (x_0,x_1)\in \mathcal{R}_k,\; d\in \dset_{k,0,x_0}\cap \dset_{k,1,x_1}\big\}\;,\text{\footnotemark}\\
	\overline{H}_k &=& \{(b,x_b,d,c)\,|\; (b,x,d,c\oplus 1) \in H_k\big\}\;,
	\end{aligned}
	\end{eqnarray}
	\footnotetext{Note that although both $x_0$ and $x_1$ are referred to to define the set $H_k$, only one of them, $x_b$, is explicitly specified in any $4$-tuple that lies in $H_k$.}
	then for any quantum polynomial-time procedure $\mathcal{A}$ there exists a negligible function $\mu(\cdot)$ such that 
	\begin{align}\label{eq:adaptive-hardcore}
	\begin{split}
	\Big|\Pr_{(k,t_k)\leftarrow \keygen_{\mathcal{F}}(\secparam)}&[\mathcal{A}(k) \in H_k] - \Pr_{(k,t_k)\leftarrow \keygen_{\mathcal{F}}(\secparam)}[\mathcal{A}(k) \in\overline{H}_k]\Big| \\&\leq\, \mu(\secpar)\;.
	\end{split}
	\end{align}

	\end{enumerate}
\end{enumerate}
\end{definition}
\begin{theorem}[Informal]
Under the assumption that the Learning With Errors (LWE) problem with certain parameters is hard for $\BQP$, an NTCF family exists.	
\label{thm:LWE_implies_NTCF}
\end{theorem}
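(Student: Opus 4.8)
The plan is to derive the statement directly from the construction of Brakerski, Christiano, Mahadev, Vazirani and Vidick~\cite{brakerski18certifiable}, who already exhibit, under an LWE assumption in an appropriate parameter regime, a family of functions meeting (essentially) Definition~\ref{def:trapdoorclawfree}. I would organize the argument in three steps.

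First, I would recall their construction. For a key $k$ obtained from an LWE instance, the two functions $f_{k,0},f_{k,1}:\sX\to\mathcal{D}_{\sY}$ are defined so that each $f_{k,b}(x)$ is a discrete-Gaussian-type density, arranged so that the colliding pairs form a single coset shift by the LWE secret $s$; the perfect matching $\sR_k$ is then $\{(x_0,x_1): x_0-x_1 = s\}$ (up to the rounding inherent in the noisy setting), the LWE trapdoor yields the deterministic inversion algorithm $\textrm{INV}_{\mathcal F}$, the standard state-preparation routine for Gaussians furnishes $\textrm{SAMP}_{\mathcal F}$, the sets $\dset_{k,b,x}$ and the injection $\inj$ are as defined in~\cite{brakerski18certifiable}, and the adaptive hardcore bit property~\eqref{eq:adaptive-hardcore} is their central technical lemma, reducing any quantum polynomial-time predictor to an LWE distinguisher.

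Second, I would check that the guarantees established in~\cite{brakerski18certifiable} imply the conditions of Definition~\ref{def:trapdoorclawfree} as stated here. All conditions other than efficient range superposition (item~\ref{it:NTCF_efficient_range_superposition}) are stated in their work in exactly the needed form, so there is nothing to do. For the efficient range superposition, Brakerski et al.\ phrase closeness of the prepared density to the ideal one in Hellinger distance, whereas here I only require negligible \emph{trace} distance between the prepared state and $\ket{\psi}$; as the footnote on item~\ref{it:NTCF_efficient_range_superposition} indicates, this implication is exactly Lemma~2 of~\cite{brakerski18certifiable}, which bounds the trace distance between the natural purifications of two densities by a constant times the square root of their Hellinger distance. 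Since the latter is negligible, so is the former.

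Third, I would assemble these into the theorem: instantiating the LWE parameters as in~\cite[Theorem~26]{brakerski18certifiable} (so that LWE in that regime is hard for $\BQP$), the family above is an NTCF family in the sense of Definition~\ref{def:trapdoorclawfree}. The only genuinely hard ingredient is the adaptive hardcore bit property, which is the main theorem of~\cite{brakerski18certifiable} and which we invoke as a black box; accordingly, the main difficulty here is purely a bookkeeping one, namely verifying that the relaxations we have adopted --- Hellinger versus trace distance, and restricting the equation string $d$ to the efficiently checkable subsets $\dset_{k,b,x}$ rather than all of $\{0,1\}^w\setminus\{0^w\}$ --- only weaken, and never strengthen, what their construction already provides, so that no new argument beyond the Hellinger-to-trace-distance conversion is required.
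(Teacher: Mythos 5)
Your proposal is correct and takes essentially the same route as the paper, which gives no proof at all beyond the pointer to \cite[Theorem~26]{brakerski18certifiable} and the remark (made in the footnote on item~\ref{it:NTCF_efficient_range_superposition} of Definition~\ref{def:trapdoorclawfree}) that its only deviation from the definition of Brakerski et al.\ --- negligible trace distance in place of negligible Hellinger distance --- follows from their Lemma~2. Your restatement of the construction and your check that the adaptive hardcore bit is invoked as a black box match what the paper intends; the only minor slip is describing the restriction of $d$ to $\dset_{k,0,x_0}\cap\dset_{k,1,x_1}$ as a relaxation adopted by this paper, when in fact that restriction is already present verbatim in Brakerski et al.'s own adaptive hardcore bit statement, so the Hellinger-to-trace conversion is the sole point of divergence that needs to be justified.
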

The hardness definition of LWE and the exact parameters required for the theorem above are given in~\cite[Theorem 26]{BCM+18}.

\section{The Advantage of Memoryless Money}
\label{sec:advantage_of_statelessness}
When discussing any form of quantum money, we must consider the motivation, i.e., the benefits over classical constructions --- for example, we could construct a rudimentary private classical money scheme in the following way: upon minting, the bank would produce a random serial number significantly long for some security parameter $\secpar$ and sign it using a MAC. The bank would maintain a database of all banknotes that have already been spent, and upon verification, after verifying the MAC tag of the banknote, the bank would search for its serial number within the database --- if it is not there, the verification succeeds and the serial number is added to the database, and if it is there the bank would know the money was already spent and thus verification will fail (of course, the bank would have to mint a new banknote for the user after a successful verification). Gavinsky~\cite{Gav12} discusses a similar notion.

This scheme is counterfeit-resistant according to our security definitions. However, it is \emph{memory-dependent} (also known as \emph{state-based}); i.e., the bank has to maintain a database to represent an ongoing state, remembering the banknotes that were spent. On its own, a memory-dependent protocol is not a terrible problem; many services maintain a database. This, however, becomes a liability when considering multiple branches of the same bank: a central database must maintain the shared state and synchronize the access to it (otherwise information would have to propagate between the branches, causing potential security breaches during the propagation time); this has a toll in terms of response time and communication.

Constructing a memory-dependent (classical) private money scheme is trivial --- the scheme above is an extremely simple example --- so such a construction is not particularly interesting. The case is different, however, in the public setting; constructing even a memory-dependent quantum money scheme that is publicly secure is challenging (and impossible to achieve classically), and thus such a construction is an interesting result.

\section{Parallel Repetition of Weakly Verifiable Puzzles}
\label{sec:parallel_repetition_weakly_verifiable_puzzles}
As seen in \cref{sec:parallel_repetition_theorem_for_1_of_2_puzzles}, our main tool for proving parallel repetition for 1-of-2 puzzles is the notion of weakly verifiable puzzles introduced by \cite{CHS05}. This section provides a brief informal overview of their parallel repetition proof for weakly verifiable puzzles (\cref{thm:parallel_repetition_weakly_verifiable_puzzles}).

The goal is to show a reduction from an algorithm $A$ (which in our case may be quantum) that solves $n$ weakly verifiable puzzles in parallel with probability at least $h^n +\negl$ for some non-negligible $h$ to an algorithm $A'$ solving a single puzzle with probability at least $h + \negl$. This shows that the best strategy for solving $n$ puzzles has the same probability (up to a negligible difference) of solving each puzzle separately.

Denote the event where $A$ solves the puzzle $p_i$ correctly by $S_i$, and denote the event where $A$ solves puzzles $p_k, \dots, p_n$ correctly by $R_k$. We call a puzzle coordinate $i \in [n]$ \textit{good} if it holds that $\underset{p_1,\dots,p_n}{\Pr}[S_i|R_{i+1}] \geq h + \negl$, i.e., if the probability that $A$ solved $p_i$ correctly conditioned on the probability that $A$ solved $p_{i+1}, \dots, p_{n}$ correctly is at least $h$, where the probability is taken over the randomness of $A$ as well as over the random choices of puzzles $p_1, \dots, p_n$. We make the following statistical observation:
\begin{align*}
    \Pr[R_1] &= \Pr[S_1 \wedge R_2] = \Pr[S_1|R_2] \cdot \Pr[R_2] \geq h^n + \negl\;.
\end{align*}
The last inequality holds because we know the probability of $A$ to solve all $n$ puzzles is at least $h^n$. Therefore, we conclude that at least one of the following must hold: either $\underset{p_1,\dots,p_n}{\Pr}[S_1|R_2] \geq h$, or $\underset{p_1,\dots,p_n}{\Pr}[R_2] \geq h^{n-1}$. Meaning either 1 is a good coordinate, or the probability that $A$ solves the last $n-1$ puzzles of its input is at least $h^{n-1}$.

If 1 is a good coordinate, then $A'$ solves its original puzzle in the following way: $A'$ runs $A$ with $(p, p_2, \dots, p_n)$ --- where $p$ is the original challenge puzzle and $p_2, \dots, p_n$ are randomly generated by $A'$ --- until puzzles $p_2, \dots, p_n$ were solved correctly\footnote{This can be verified since $A'$ generated those puzzles along with their verification information and therefore can verify them efficiently. Moreover, $A$ solves $n-1$ puzzles with non-negligible probability, meaning $A'$ can get correct solutions in polynomial time with high probability.}. Since 1 is a good coordinate, we know that if $A$ solved puzzles $p_2, \dots p_n$ correctly, the first puzzle is also solved correctly with probability at least $h$, as needed.

If 1 is not a good coordinate, than we know the probability of $A$ to solve $n-1$ puzzles is at least $h^{n-1}$. This provides us with a solver for $n-1$ puzzles with probability at least $h^{n-1} + \negl$, and the reduction continues by induction. Eventually, $A'$ will either reach a good coordinate $i$ and solve it with probability at least $h$ as described above, or arrive at a solver that solves a single puzzle --- the last puzzle --- with probability at least $h$.

To find out whether $i$ is a good coordinate, $A$ estimates $\underset{p_1,\dots,p_n}{\Pr}[S_i|R_{i+1}]$. This is done by generating $n$ puzzles, and running $A$ multiple times, all the while checking whether puzzles $p_{i+1}, \dots, p_n$ were solved correctly, and if so checking whether puzzle $p_i$ was also solved correctly. Since $\Pr[R_{i+1}] \geq \Pr[R_1] \geq h^n + \negl$, this estimation can be done efficiently with high probability.

The proof itself in \cite{CHS05} is more complex than the outline given here. For example, we cannot know for sure if a coordinate is a good coordinate, since we can only estimate the probabilities within some polynomial bounds --- the original proof deals with these issues.

Note that this parallel repetition is "perfect" --- i.e., it shows that the best strategy for solving multiple puzzles has the same probability as solving them separately (up to a negligible difference). This would not be the case if we had more than 2 rounds --- for 3 rounds the repetition would not be perfect, and for 4 rounds the soundness error does not decrease exponentially at all \cite{BIN97}. We can see that the proof above indeed would not work for more than 2 rounds: in our reduction, when we find a good coordinate $i$, we run the multiple-puzzle adversary multiple times until the puzzles $p_{i+1}, \dots, p_n$ are solved correctly, and then our original puzzle is solved correctly with probability $h$. But in the case of multiple rounds, the single-puzzle adversary is required to provide the multiple-puzzle adversary with messages from the challenger for all the puzzles, including the original one. The answers of the multiple-puzzles adversary for each puzzle can therefore depend on messages for all the puzzles, but we can run the protocol for the original puzzle with the challenger only once --- i.e., we cannot provide the challenger's answers for the original puzzles more than once. Therefore, we cannot condition on the adversary solving $p_{i+1}, \dots, p_n$ correctly, meaning that if we can run the adversary only once our success probability will not be $h$ as needed.

\section{Transaction Figures}
\label{sec:transaction_figs}
This section contains figures showing transactions can be made with each flavor of quantum money and the kind of communication they require (quantum or classical). Fat arrows $\quantumcom$ indicate quantum communication and thin arrows $\singleclassiccom$ indicate classical communication. A two-sided arrow indicates two-sided communication.

\begin{figure}
    \centering
    \subfloat[Payer has money in an account and a credit card]{
        \includegraphics[scale=\minifigscale]{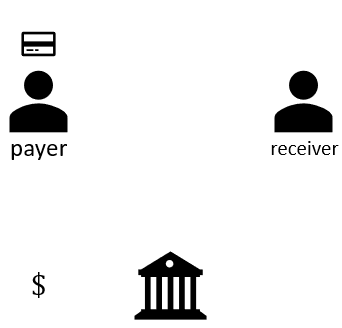}
    }
    \figspace
    \subfloat[Payer sends credit card details to receiver]{
        \includegraphics[scale=\minifigscale]{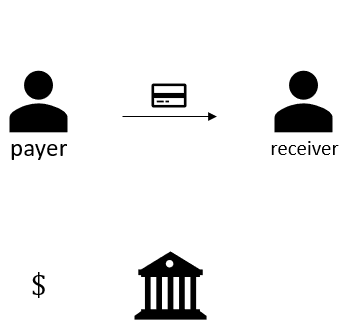}
    }
    \\
    \subfloat[Receiver receives credit card details]{
        \includegraphics[scale=\minifigscale]{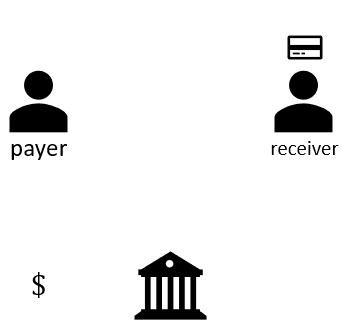}
    }
    \figspace
    \subfloat[Receiver sends credit card details to bank requesting to transfer money from payer]{
        \includegraphics[scale=\minifigscale]{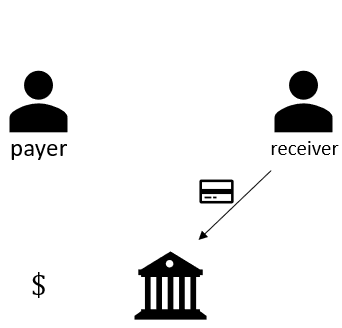}
    }
    \\
    \subfloat[Bank verifies credit card details]{
        \includegraphics[scale=\minifigscale]{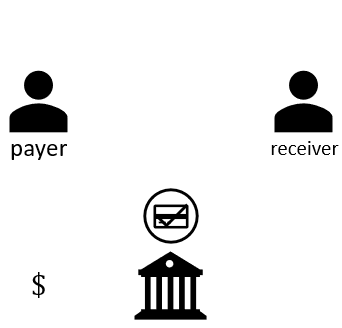}
    }
    \figspace
    \subfloat[Bank performs the transaction, deducting from payer's account and crediting receiver's account]{
        \includegraphics[scale=\minifigscale]{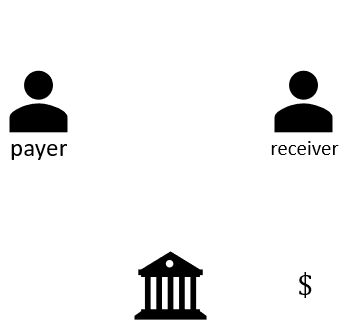}
    }
    \caption{Regular (non-quantum) money direct transaction. The communication required is one-way, though usually two-way communication is used to send confirmations for actions.}
    \label{fig:regular_transaction_direct}
\end{figure}

\begin{figure}
    \centering
    \subfloat[Payer has money in an account]{
        \includegraphics[scale=\minifigscale]{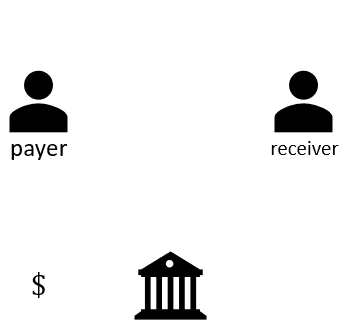}
    }
    \\
    \subfloat[Payer contacts bank requesting to perform a transaction and providing authentication]{
        \includegraphics[scale=\minifigscale]{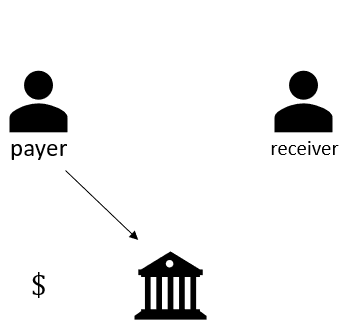}
    }
    \figspace
    \subfloat[Bank verifies payer's authentication information]{
        \includegraphics[scale=\minifigscale]{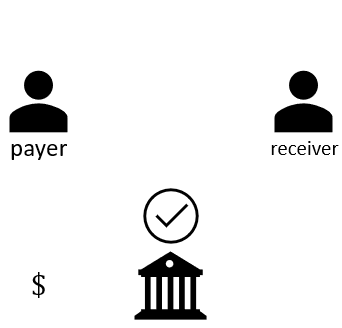}
    }
    \\
    \subfloat[Bank performs the transaction, deducting money from payer's account and crediting receiver's account]{
        \includegraphics[scale=\minifigscale]{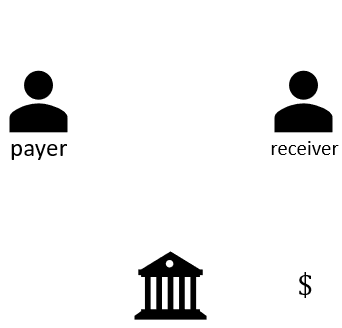}
    }
    \figspace
    \subfloat[Bank notifies receiver that the transaction took place]{
        \includegraphics[scale=\minifigscale]{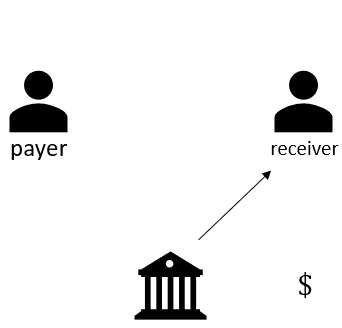}
    }
    \caption{Regular (non-quantum) money bank transaction. Though it is possible for communication between payer and bank to be one-way, usually the authentication process is an interactive protocol, and the payer receives confirmation from the bank, making communication two-way.}
    \label{fig:regular_transaction_bank}
\end{figure}

\begin{figure}
    \centering
    \subfloat[Payer has a banknote]{
        \includegraphics[scale=\minifigscale]{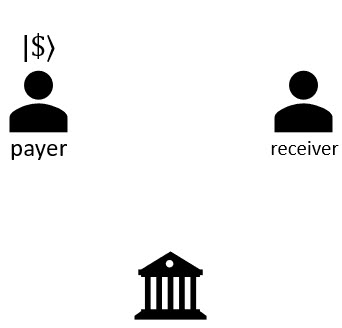}
    }
    \\
    \subfloat[Payer sends banknote to receiver]{
        \includegraphics[scale=\minifigscale]{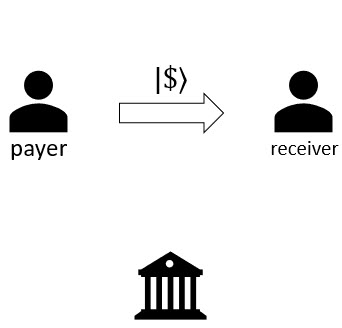}
    }
    \figspace
    \subfloat[Receiver receives banknote]{
        \includegraphics[scale=\minifigscale]{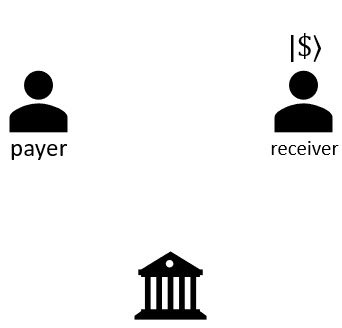}
    }
    \\
    \subfloat[Receiver sends banknote to bank]{
        \includegraphics[scale=\minifigscale]{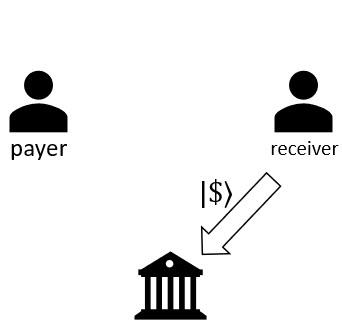}
    }
    \figspace
    \subfloat[Bank verifies banknote]{
        \includegraphics[scale=\minifigscale]{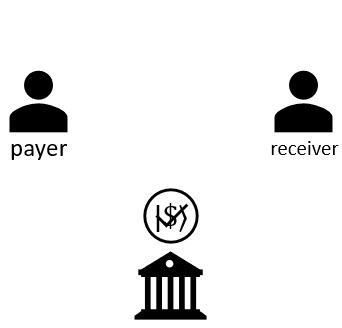}
    }
    \\
    \subfloat[Bank mints a new banknote for receiver]{
        \includegraphics[scale=\minifigscale]{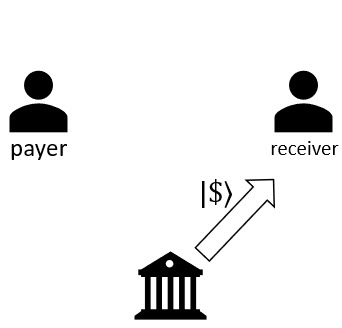}
    }
    \figspace
    \subfloat[Receiver was paid successfully]{
        \includegraphics[scale=\minifigscale]{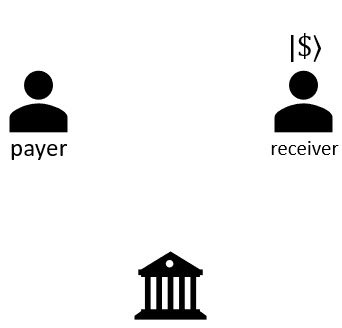}
    }
    \caption{Private standard direct transaction.}
    \label{fig:quantum_transaction_direct}
\end{figure}

\begin{figure}
    \centering
    \subfloat[Payer has a banknote]{
        \includegraphics[scale=\minifigscale]{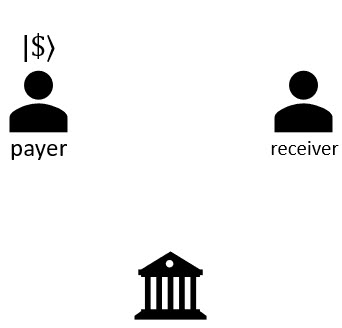}
    }
    \\
    \subfloat[Payer sends banknote to bank]{
        \includegraphics[scale=\minifigscale]{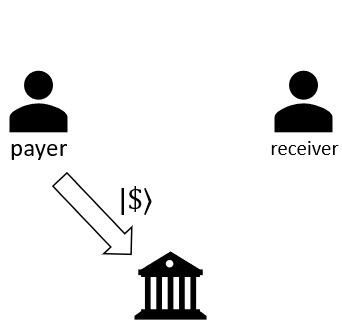}
    }
    \figspace
    \subfloat[Bank verifies banknote]{
        \includegraphics[scale=\minifigscale]{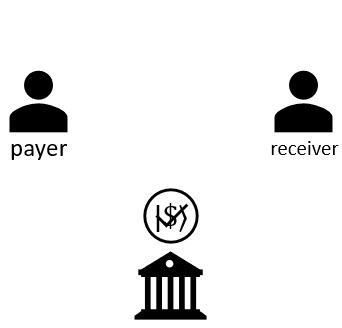}
    }
    \\
    \subfloat[Bank mints a new banknote for receiver]{
        \includegraphics[scale=\minifigscale]{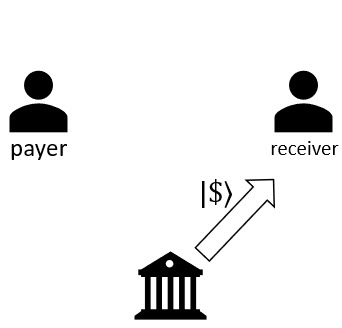}
    }
    \figspace
    \subfloat[Receiver was paid successfully]{
        \includegraphics[scale=\minifigscale]{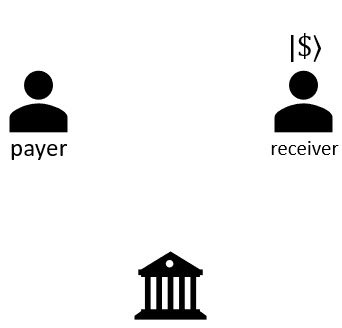}
    }
    \caption{Standard bank transaction.}
    \label{fig:quantum_transaction_bank}
\end{figure}

\begin{figure}
    \centering
    \subfloat[Payer has a banknote]{
        \includegraphics[scale=\minifigscale]{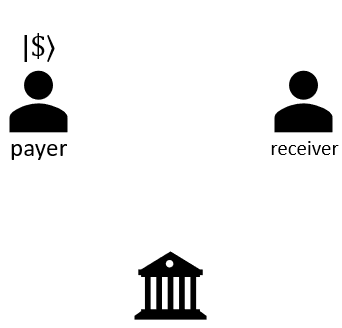}
    }
    \\
    \subfloat[Receiver initiates verification protocol with the bank and acts as a relay between payer and bank (destroying payer's banknote)]{
        \includegraphics[scale=\minifigscale]{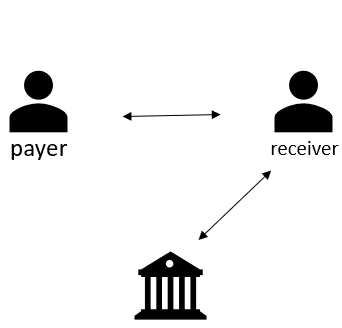}
    }
    \figspace
    \subfloat[Bank verifies protocol was executed correctly]{
        \includegraphics[scale=\minifigscale]{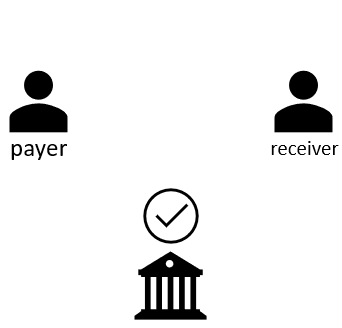}
    }
    \\
    \subfloat[Bank mints a new banknote for receiver]{
        \includegraphics[scale=\minifigscale]{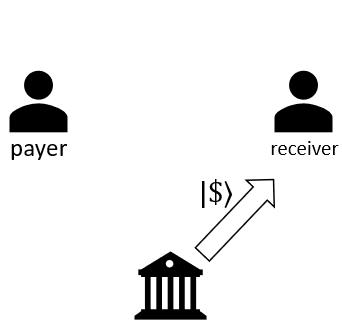}
    }
    \figspace
    \subfloat[Receiver was paid successfully]{
        \includegraphics[scale=\minifigscale]{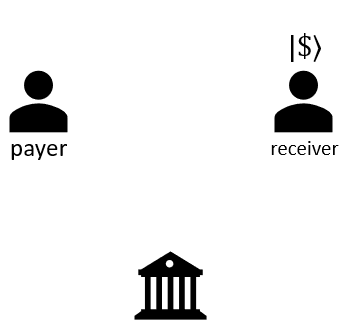}
    }
    \caption{Private classically verifiable direct transaction. In step (b), receiver acts as a relay between payer and bank and thus can be sure the banknote was valid.}
    \label{fig:classically_verifiable_transaction_direct}
\end{figure}

\begin{figure}
    \centering
    \subfloat[Payer has a banknote]{
        \includegraphics[scale=\minifigscale]{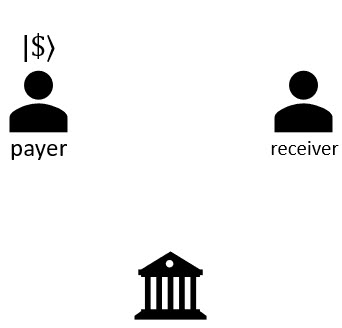}
    }
    \\
    \subfloat[Payer runs verification protocol with bank (losing the banknote)]{
        \includegraphics[scale=\minifigscale]{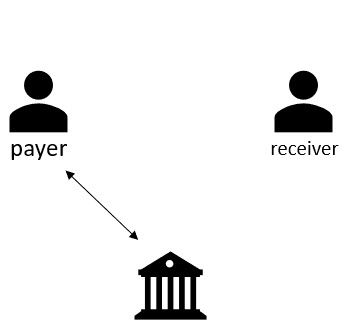}
    }
    \figspace
    \subfloat[Bank verifies protocol was executed successfully]{
        \includegraphics[scale=\minifigscale]{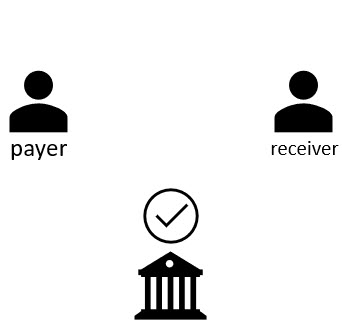}
    }
    \\
    \subfloat[Bank mints a new banknote for receiver]{
        \includegraphics[scale=\minifigscale]{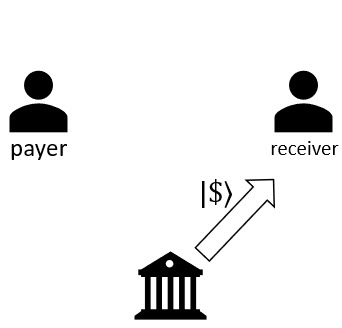}
    }
    \figspace
    \subfloat[Receiver was paid successfully]{
        \includegraphics[scale=\minifigscale]{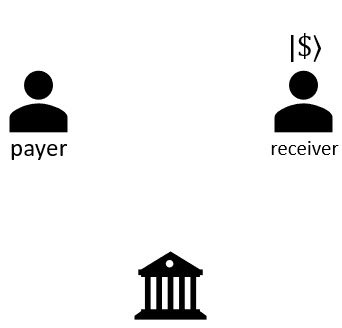}
    }
    \caption{Classically verifiable bank transaction.}
    \label{fig:classically_verifiable_transaction_bank}
\end{figure}

\begin{figure}
    \centering
    \subfloat[Payer has a banknote]{
        \includegraphics[scale=\minifigscale]{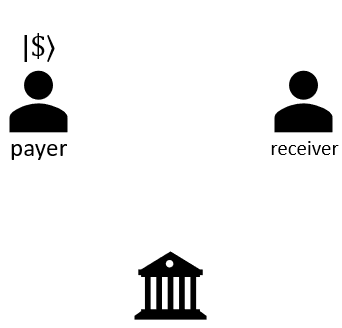}
    }
    \\
    \subfloat[Payer sends banknote to receiver]{
        \includegraphics[scale=\minifigscale]{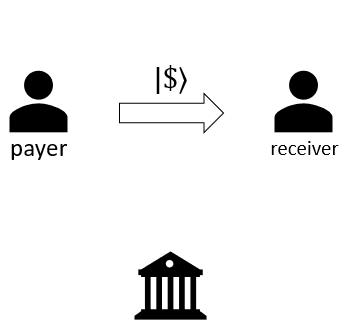}
    }
    \figspace
    \subfloat[Receiver receives banknote]{
        \includegraphics[scale=\minifigscale]{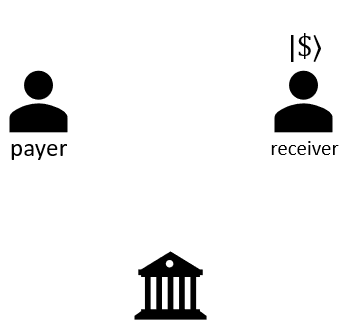}
    }
    \\
    \subfloat[Receiver runs verification protocol with bank (without losing the banknote)]{
        \includegraphics[scale=\minifigscale]{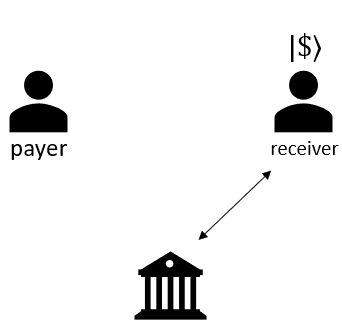}
    }
    \figspace
    \subfloat[Bank verifies protocol was executed successfully]{
        \includegraphics[scale=\minifigscale]{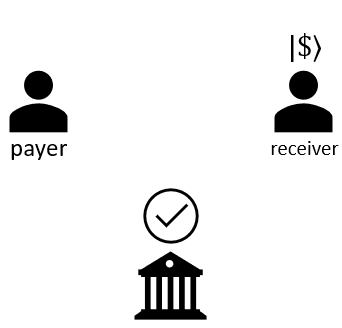}
    }
    \caption{Private classically verifiable direct transaction for a scheme which allows multiple verifications for the same banknote. Note that after a finite number of verifications the banknote is destroyed and communication with the bank is required, like in \cref{fig:classically_verifiable_transaction_direct}.}
    \label{fig:classically_verifiable_MV_transaction_direct}
\end{figure}

\begin{figure}
    \centering
    \subfloat[Payer has a banknote]{
        \includegraphics[scale=\minifigscale]{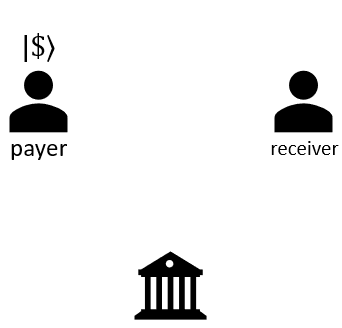}
    }
    \\
    \subfloat[Payer sends banknote to bank]{
        \includegraphics[scale=\minifigscale]{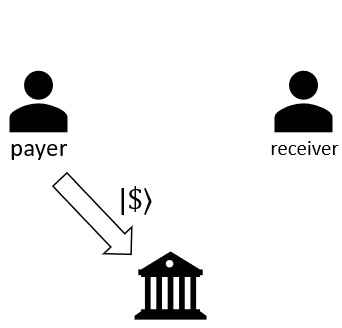}
    }
    \figspace
    \subfloat[Bank verifies banknote]{
        \includegraphics[scale=\minifigscale]{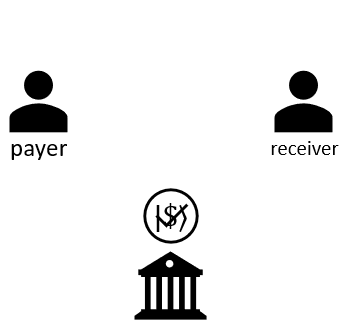}
    }
    \\
    \subfloat[Bank mints a new banknote for receiver]{
        \includegraphics[scale=\minifigscale]{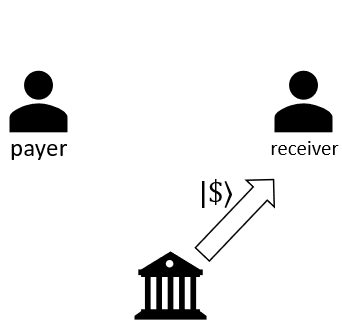}
    }
    \figspace
    \subfloat[Receiver was paid successfully]{
        \includegraphics[scale=\minifigscale]{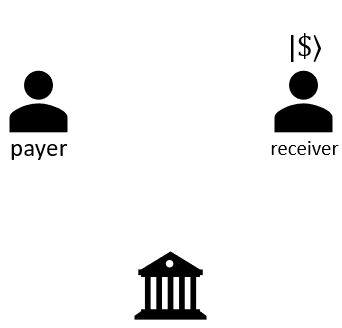}
    }
    \caption{Private classically verifiable bank transaction for a scheme which allows multiple verifications for the same banknote. For some such schemes it could be possible to have a classical verification that ensures the banknote was destroyed, in which case a transaction through the bank could be executed like in \cref{fig:classical_minting_transaction_bank}.}
    \label{fig:classically_verifiable_MV_transaction_bank}
\end{figure}

\begin{figure}
    \centering
    \subfloat[Payer has a banknote]{
        \includegraphics[scale=\minifigscale]{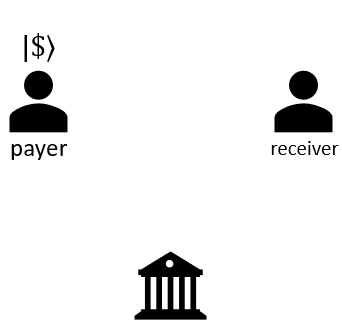}
    }
    \\
    \subfloat[Payer sends banknote to receiver]{
        \includegraphics[scale=\minifigscale]{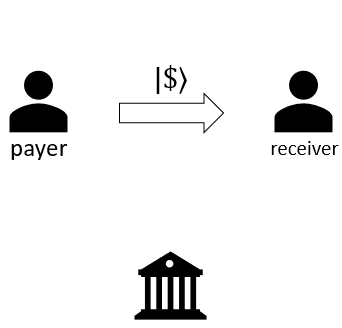}
    }
    \figspace
    \subfloat[Receiver receives banknote]{
        \includegraphics[scale=\minifigscale]{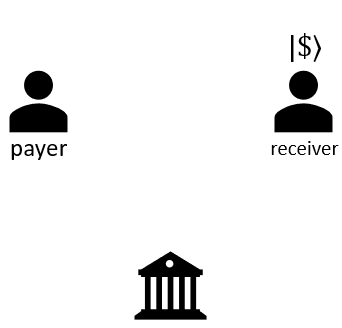}
    }
    \\
    \subfloat[Receiver sends banknote to bank]{
        \includegraphics[scale=\minifigscale]{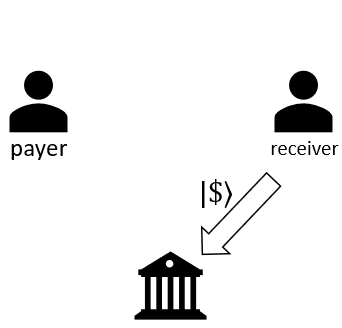}
    }
    \figspace
    \subfloat[Bank verifies banknote]{
        \includegraphics[scale=\minifigscale]{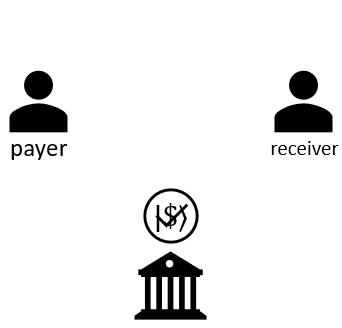}
    }
    \\
    \subfloat[Bank mints a new banknote for receiver]{
        \includegraphics[scale=\minifigscale]{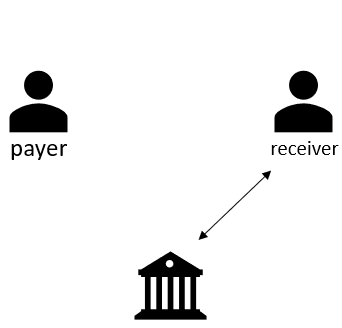}
    }
    \figspace
    \subfloat[Receiver was paid successfully]{
        \includegraphics[scale=\minifigscale]{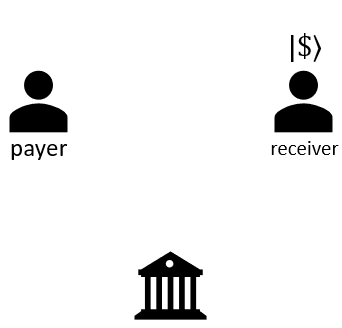}
    }
    \caption{Private classical minting direct transaction.}
    \label{fig:classical_minting_transaction_direct}
\end{figure}

\begin{figure}
    \centering
    \subfloat[Payer has a banknote]{
        \includegraphics[scale=\minifigscale]{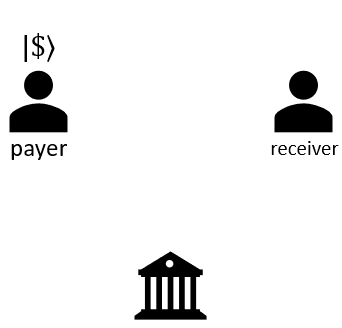}
    }
    \\
    \subfloat[Payer sends banknote to bank]{
        \includegraphics[scale=\minifigscale]{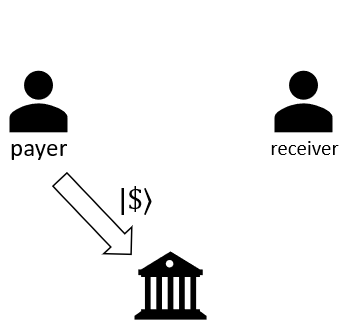}
    }
    \figspace
    \subfloat[Bank verifies banknote]{
        \includegraphics[scale=\minifigscale]{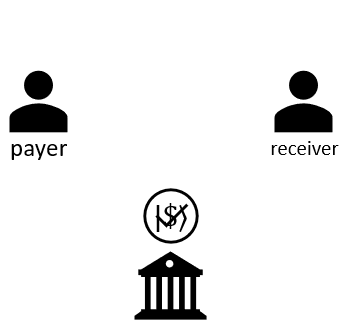}
    }
    \\
    \subfloat[Bank mints a new banknote for receiver]{
        \includegraphics[scale=\minifigscale]{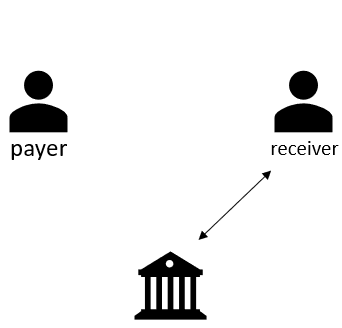}
    }
    \figspace
    \subfloat[Receiver was paid successfully]{
        \includegraphics[scale=\minifigscale]{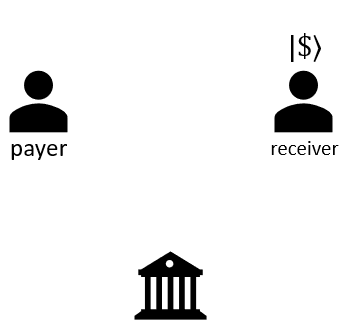}
    }
    \caption{Classical minting bank transaction.}
    \label{fig:classical_minting_transaction_bank}
\end{figure}

\begin{figure}
    \centering
    \subfloat[Payer has a banknote]{
        \includegraphics[scale=\minifigscale]{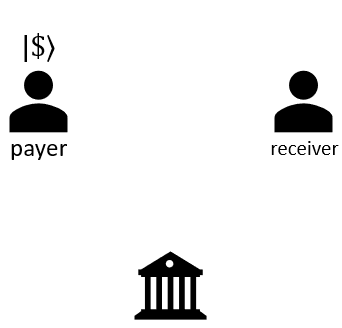}
    }
    \\
    \subfloat[Receiver initiates verification protocol with the bank and acts as a relay between payer and bank (destroying payer's banknote)]{
        \includegraphics[scale=\minifigscale]{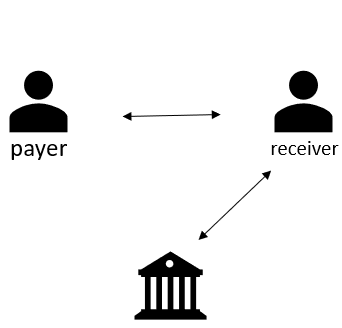}
    }
    \figspace
    \subfloat[Bank verifies protocol was executed correctly]{
        \includegraphics[scale=\minifigscale]{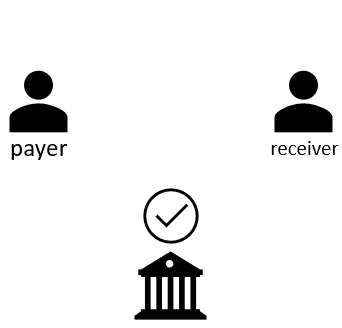}
    }
    \\
    \subfloat[Bank mints a new banknote for receiver]{
        \includegraphics[scale=\minifigscale]{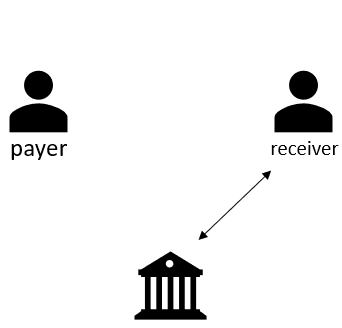}
    }
    \figspace
    \subfloat[Receiver was paid successfully]{
        \includegraphics[scale=\minifigscale]{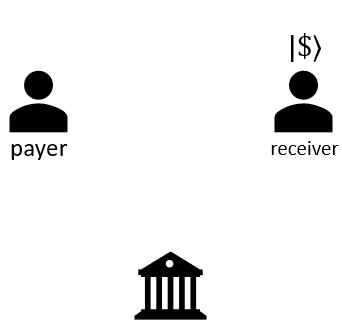}
    }
    \caption{Private semi-quantum direct transaction. In step (b), receiver acts as a relay between payer and bank and thus can be sure the banknote was valid.}
    \label{fig:semi_quantum_transaction_direct}
\end{figure}

\begin{figure}
    \centering
    \subfloat[Payer has a banknote]{
        \includegraphics[scale=\minifigscale]{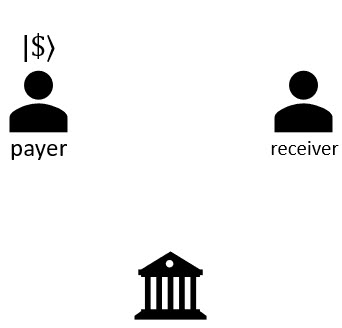}
    }
    \\
    \subfloat[Payer runs verification protocol with bank (losing the banknote)]{
        \includegraphics[scale=\minifigscale]{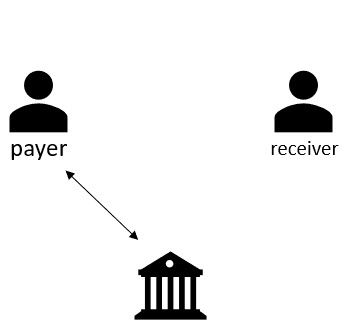}
    }
    \figspace
    \subfloat[Bank verifies protocol was executed successfully]{
        \includegraphics[scale=\minifigscale]{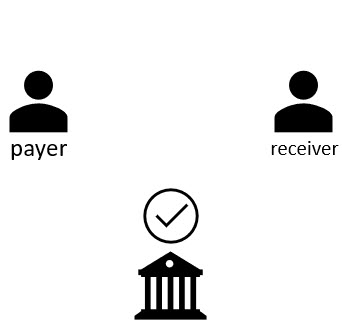}
    }
    \\
    \subfloat[Bank mints a new banknote for receiver]{
        \includegraphics[scale=\minifigscale]{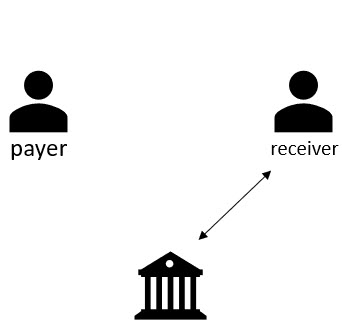}
    }
    \figspace
    \subfloat[Receiver was paid successfully]{
        \includegraphics[scale=\minifigscale]{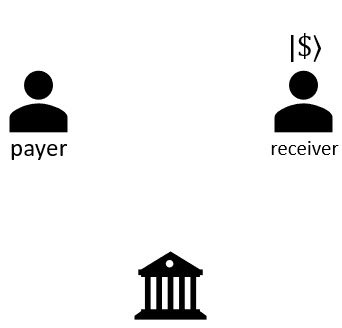}
    }
    \caption{Semi-quantum bank transaction.}
    \label{fig:semi_quantum_transaction_bank}
\end{figure}

\begin{figure}
    \centering
    \subfloat[Payer has a banknote]{
        \includegraphics[scale=\minifigscale]{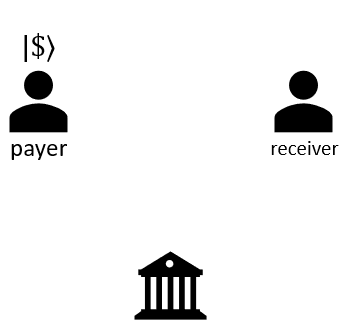}
    }
    \figspace
    \subfloat[Payer sends banknote to receiver]{
        \includegraphics[scale=\minifigscale]{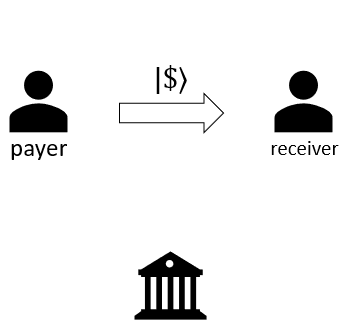}
    }
    \\
    \subfloat[Receiver receives banknote]{
        \includegraphics[scale=\minifigscale]{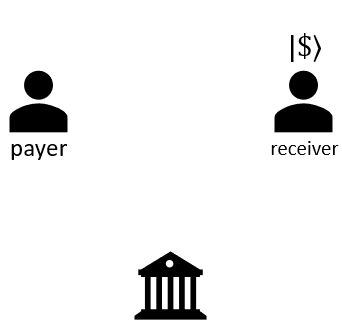}
    }
    \figspace
    \subfloat[Receiver validates banknote --- receiver was paid successfully]{
        \includegraphics[scale=\minifigscale]{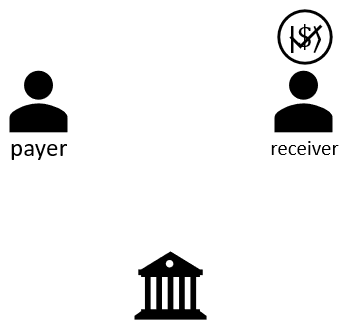}
    }
    \caption{Public quantum money direct transaction.}
    \label{fig:public_transaction_direct}
\end{figure}

\begin{figure}
    \centering
    \subfloat[Payer has a banknote]{
        \includegraphics[scale=\minifigscale]{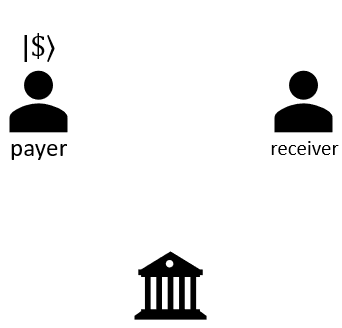}
    }
    \figspace
    \subfloat[Payer runs transaction protocol with receiver]{
        \includegraphics[scale=\minifigscale]{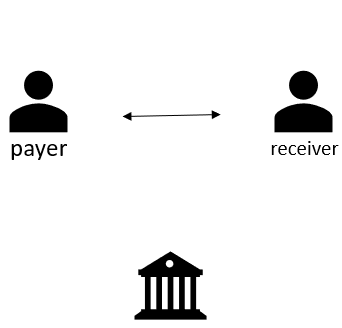}
    }
    \\
    \subfloat[Receiver receives banknote]{
        \includegraphics[scale=\minifigscale]{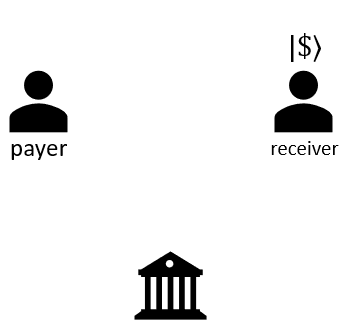}
    }
    \figspace
    \subfloat[Receiver validates banknote --- receiver was paid successfully]{
        \includegraphics[scale=\minifigscale]{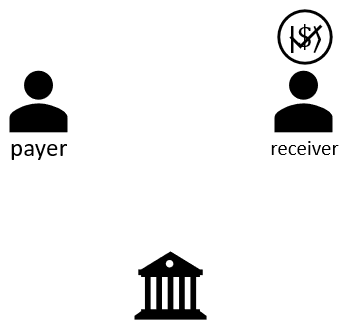}
    }
    \caption{One-shot signature direct transaction.}
    \label{fig:one_shot_transaction_direct}
\end{figure}

\fi

\ifnum\draft=1
\input{future_work_appendices.tex}
\fi

\ifnum\masterthesis=1
    \mybib
    \includepdf[pages={1,2}]{Hebrew_Abstract.pdf}
\fi
 
\ifnum\cryptology=1{
    \pagebreak
    \footnotesize
    \mybib}
\fi

\end{document}